\newcommand{\RR}{\mathbb{R}}
\newcommand{\EE}{\mathbb{E}}
\newcommand{\PP}{\mathbb{P}}
\newtheorem{lemma}{Lemma}
\newtheorem{thm}{Theorem}
\newtheorem{corollary}{Corollary}
\theoremstyle{definition}
\newtheorem{example}{Example}
\newtheorem{definition}{Definition}
\newtheorem{remark}{Remark}
\newcommand{\R}{\mathbb{R}}
\newcommand{\bA}{\mathbf{A}}
\newcommand{\bP}{\mathbf{P}}
\newcommand{\bS}{\mathbf{S}}
\newcommand{\bU}{\mathbf{U}}
\newcommand{\bW}{\mathbf{W}}
\newcommand{\bX}{\mathbf{X}}
\newcommand{\bZ}{\mathbf{Z}}
\newcommand{\bx}{\mathbf{x}}
\newcommand{\by}{\mathbf{y}}
\newcommand{\supp}{\operatorname{supp}}
\newcommand{\blind}{0}
\begin{document}
\def\spacingset#1{\renewcommand{\baselinestretch}%
{#1}\small\normalsize} \spacingset{1}


\if0\blind
{\title{\bf Euclidean mirrors and dynamics in network time series}
\author{Avanti Athreya\thanks{Co-first author. Department of Applied Mathematics and Statistics, Johns Hopkins University, Baltimore, MD 21218. Email: dathrey1@jhu.edu. All authors gratefully acknowledge funding from Microsoft Research, the Naval Engineering Education Consortium, the United States National Science Foundation (SES-1951005), and the Acheson Duncan Fund.}\hspace{.2cm}\\ and \\Zachary Lubberts\thanks{Co-first author, Department of Statistics, University of Virginia, Charlottesville, VA 22904. Email: zlubberts@virginia.edu.}\\ and \\Youngser Park\thanks{Center of Imaging Science, Johns Hopkins University, Baltimore, MD 21218. Email: youngser@jhu.edu.}\\ and \\ Carey Priebe\thanks{Department of Applied Mathematics and Statistics, Johns Hopkins University, Baltimore, MD 21218. Email: cep@jhu.edu.}}
\maketitle
} \fi

\if1\blind
{
  \bigskip
  \bigskip
  \bigskip
  \begin{center}
    {\LARGE\bf Euclidean mirrors and dynamics in network time series}
\end{center}
  \medskip
} \fi
\bigskip
\begin{abstract}
Analyzing changes in network evolution is central to statistical network inference, as underscored by recent challenges of predicting and distinguishing pandemic-induced transformations in organizational and communication networks. We consider a joint network model in which each node has an associated time-varying low-dimensional latent vector of feature data, and connection probabilities are functions of these vectors. Under mild assumptions, the time-varying evolution of the latent vectors exhibits low-dimensional manifold structure under a suitable notion of distance. This distance can be approximated by a measure of separation between the observed networks themselves, and there exist Euclidean representations for underlying network structure, as characterized by this distance, at any given time. These Euclidean representations, called Euclidean mirrors, permit the visualization of network evolution and transform network inference questions such as change-point and anomaly detection into a classical setting. We illustrate our methodology with real and synthetic data, and identify change points corresponding to massive shifts in pandemic policies in a communication network of a large organization.


\end{abstract}
\noindent%
{\it Keywords: Network time series, spectral decomposition, dissimilarity measure, Euclidean realizability}
\vfill

\newpage
\spacingset{1.9} 

\section{Introduction}\label{sec:Intro}
The structure of many organizational and communication networks underwent a dramatic shift during the disruption of the COVID-19 pandemic in 2020 \citep{zuzul2021dynamic}. This massive shock altered network connectivity in many respects and across multiple scales, differentially impacting individual nodes, local sub-communities, and whole networks. A visualization of this can be seen in Figure~\ref{fig:org_sci_example}, which illustrates the shifting structure, from the spring to the summer of 2020, in a communications network of a large corporation. Each node represents an email account, and connection between nodes reflect email frequency between accounts. The panel on the left of Figure~\ref{fig:org_sci_example} shows a clustering of the network into subcommunities, and the panel on the right shows how those network connections between the same individuals shifted over time.
Such transformations give rise to several important questions in statistical network inference: how to construct useful measures of dissimilarity across networks; how to estimate any such measure of dissimilarity from random network realizations; how to identify loci of change; and how to gauge differences across scales, from nodes to sub-networks to the entire network itself. 
Our goal in this paper is to build a robust methodology to address such phenomena, and to model and infer important characteristics of network time series.
\begin{figure}[p]
    \centering
    \includegraphics[width=\textwidth]{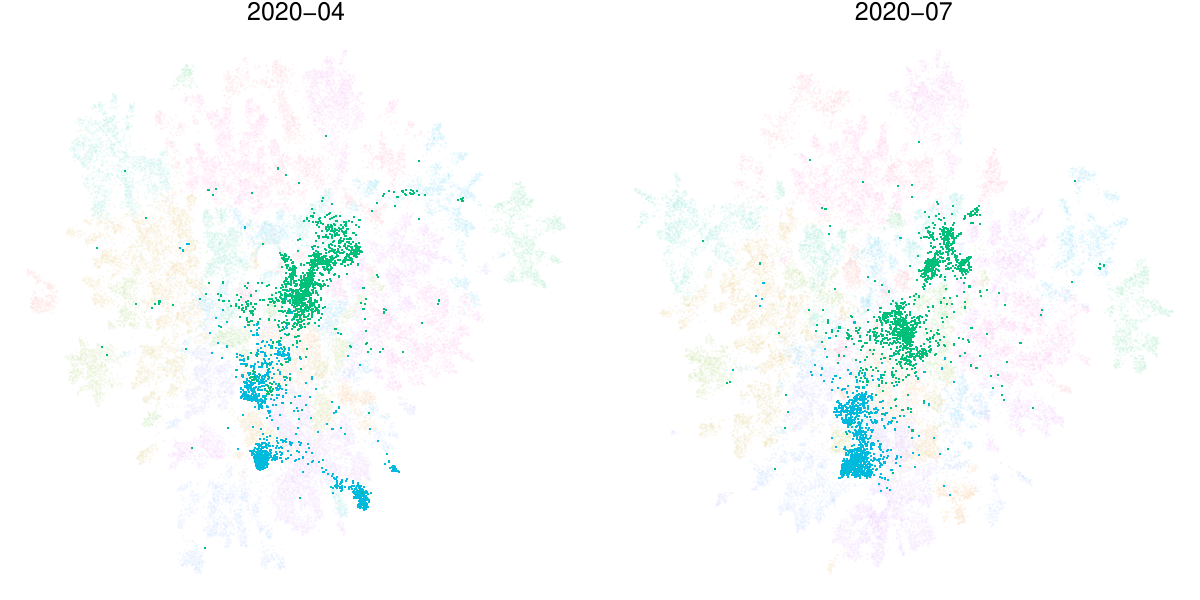}
    \caption{
    Evidence of structural network dissimilarity across time. Visualizations are of anonymized and aggregated Microsoft communications networks in April and July 2020, with two communities highlighted. Nodes represent email accounts, with edges determined by email frequency (edges not shown in these images). Left panel shows an initial Leiden clustering \citep{traag2019louvain} of the nodes into different subcommunities (colored); right panel shows the shift in connectivity structure over this time. Full visualization details are given in Section~\ref{sec:org_sci_data}.
During this time period, our analysis demonstrates that
the network as a whole experiences a structural shock coincident with the pandemic work-from-home order.
However, different subcommunities undergo qualitatively different changes in
their structure, from a connected network that seems to diverge
(green) and a less cohesive one that seems to coalesce (blue).
(See Figure \ref{fig:subcommunities},
 wherein the overall network behavior,
 as well as the two highlighted communities with their different temporal behavior,
 are depicted.) The figures here are two-dimensional renderings of temporal snapshots of a large (n=32277) complex network;
hence, conclusions based on this visualization are notional.
}
    \label{fig:org_sci_example}
\end{figure}

To this end, we focus on a class of time series of random networks. We define an intuitive distance between the evolution of certain random variables that govern the behavior of nodes in the networks and prove that this distance can be consistently estimated from the observed networks. When this distance is sufficiently similar to a Euclidean distance, multidimensional scaling extracts a curve in low-dimensional Euclidean space that mirrors the structure of the network dynamics. This permits a visualization of network evolution and identification of change points. Figure~\ref{fig:org_sci_us} is the result of an end-to-end case study using these techniques for a time series of communication networks in a large corporation in the months around the start of pandemic work-from-home protocols: see the dramatic change in both panels beginning in Spring 2020. See Section~\ref{sec:estimation} for the methodology used to generate these figures, and Section~\ref{sec:experiments} for the full details of this experiment.

\begin{figure}[h]
\centering
\includegraphics[width=0.45\textwidth]{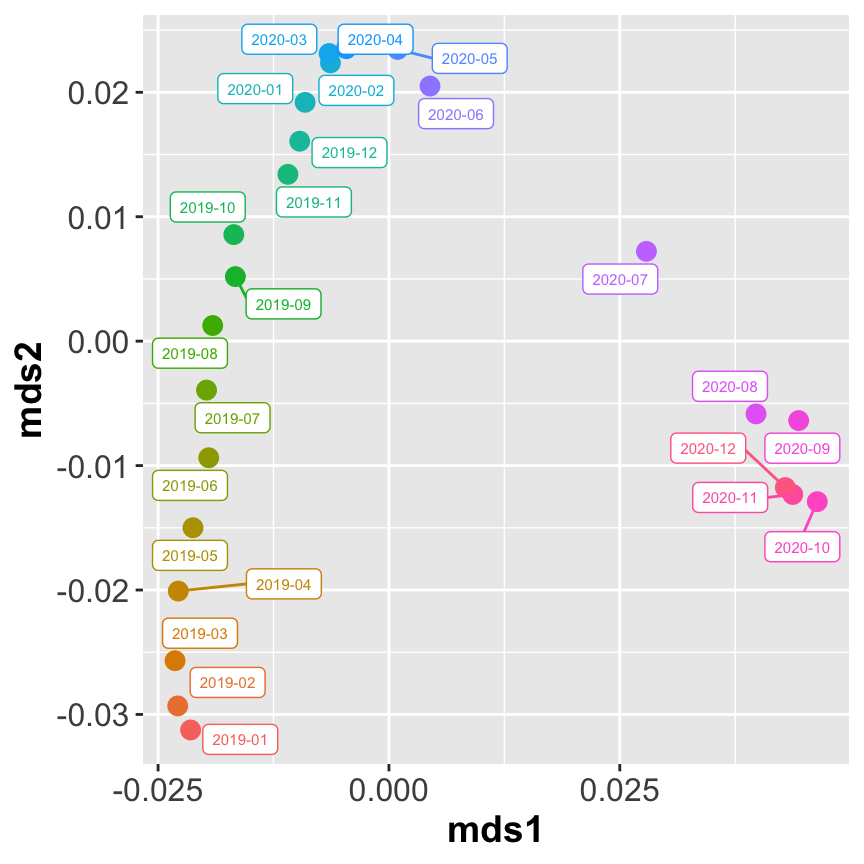}
\includegraphics[width=0.45\textwidth]{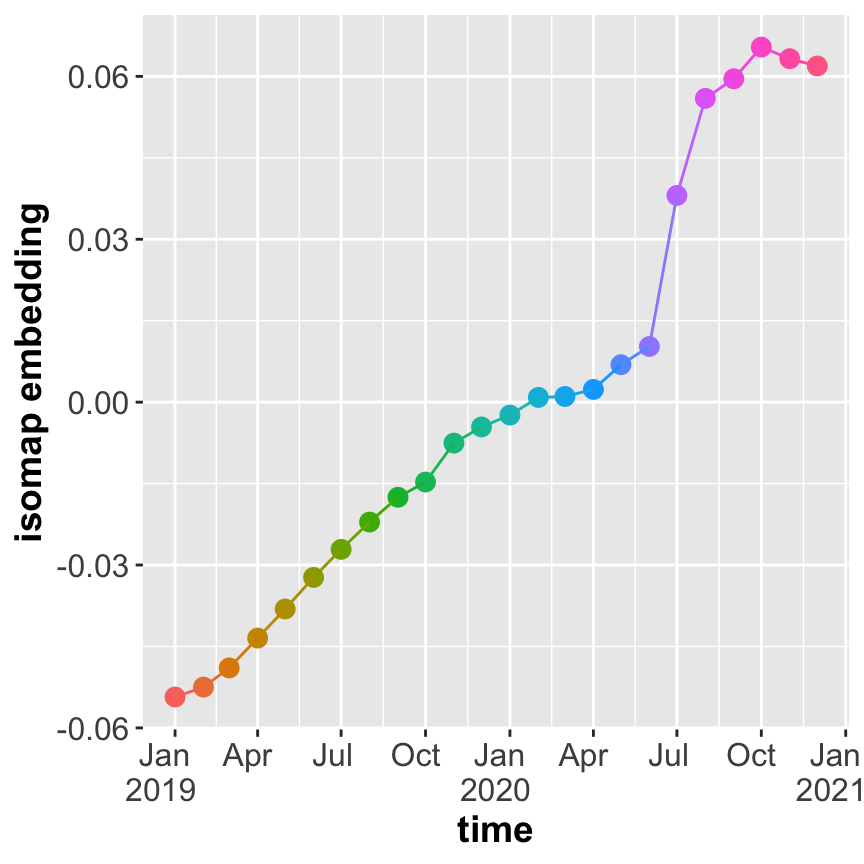}
\caption{Classical multidimensional scaling (CMDS) and ISOMAP embeddings of estimated network dissimilarity identifies changes in network structure. Left plot shows top two dimensions from multidimensional scaling of pairwise network distance matrix for anonymized and aggregated Microsoft communications networks from January 2019 to December 2020, demonstrating that individual networks follow a curve that progresses smoothly until spring 2020 and then exhibits a major shock. Right plot shows associated ISOMAP manifold learning representation and provides clear and concise anomaly and changepoint information. Full details for how these plots were generated appears in Section~\ref{sec:experiments}.}

\label{fig:org_sci_us}
\end{figure}

Analysis of multiple networks is a key emerging subdiscipline of network inference, with approaches ranging from joint spectral embedding \citep{levin_omni_2017,jones2020multilayer,arroyo2019inference,gallagher2021spectral, jing2021community,pantazis2020importance}, tensor decompositions \citep{zhang2018tensor,lei2020consistent,jing2021community}, least-squares methods \citep{pensky2019dynamic,lei2022bias}, maximum likelihood methods \citep{krivitsky2014separable} and multiscale methods via random walks on graphs \citep{lee2011multiscale}. In \cite{padilla2022change} and \cite{wang2021optimal}, the authors consider changepoint localization for a time series of latent position random graphs \citep{Hoff2002}, a type of independent-edge network in which each node or vertex has an associated latent position that determines its probability of connection with others. The authors establish consistency for localization of a particular kind of changepoint---namely, the case in which the latent positions are all fixed prior to some time point and after which they may be different. Asymptotic properties of these methods depend on particular model assumptions for how the networks evolve over time and relate to one another, and rigorous performance guarantees can be challenging and limited in scope. The underlying geometry of latent spaces affects network structure and evolution as well. In \cite{asta_cls}, the authors consider the impact of different curvature and non-Euclidean properties of latent space geometry on network formation. In \cite{wilkins2022asymptotically}, the authors prove asymptotic results for estimators of underlying latent space curvature.

On the one hand, both single and multiple-network inference problems often have related objectives. For example, if data include multiple network realizations from the same underlying model on the same set of aligned vertices, we may wish to effectively exploit these additional realizations for more accurate estimation of common network parameters---that is, use the replications in a multiple-network setting to refine parameter estimates that govern any single network in the collection. On the other hand, multiple network inference involves statistically distinct questions, such as identifying loci of change {\em across} networks or detecting anomalies in a time series of networks. 


Euclidean latent position networks assign to each vertex a typically unobserved vector in some low-dimensional Euclidean space $\mathbb{R}^d$; edges between vertices then arise independently. The probability of an edge between vertex $i$ and vertex $j$ is some fixed function $\kappa$, called the {\em link} function or kernel, of the two associated latent positions for the respective vertices.  
Latent position random graphs have the appealing characteristic of modeling network connections as functions of inherent features of the vertices themselves---these features are encoded in the latent positions---and transforming network inference into the recovery of lower-dimensional structure. More specifically, if we have a series of time-indexed latent position graphs $G_t$ on a common aligned vertex set, then associated to each network is a matrix $\bX_t$ whose rows are the latent vectors of the vertices. Since the edge formation probabilities are a function of pairs of rows of $\bX_t$, the probabilistic evolution of the network time series is completely determined by the evolution of the rows of $\bX_t$. As such, the natural object of study for inference about a time series of latent position graphs are the rows of $\bX_t$. In particular, anomalies or change-points in the time-series of networks correspond to changes in the $\bX_t$ process. For example, a change in a specific network entity is associated to a change in its latent position, which can then be estimated.

The evolution of the rows of $\bX_t$ can be deterministic, as is the case when features of the nodes in a network follow some predictable time-dependent pattern; but it can also be random, as is the case when the actors in a network have underlying preferences that are subject to random shocks. When the latent position vector $X_t(i)$ for some individual vertex $i$ is a random variable, we have, as $t$ varies, a stochastic process. This collection of random variables can be endowed with a metric, which under certain conditions is Euclidean realizable; that is, the random variables at each time have a representation as points in $\RR^c$ for some dimension $c$, where the metric space distances between the random variables are equal to the Euclidean distances between these points (see \cite{borg2005modern} for more on Euclidean realizability of dissimilarity matrices). This allows us to visualize the time evolution of this stochastic process as the image of a map from an interval into $\RR^c$. 

We use this idea to formulate a novel approach to network time series. We demonstrate methods for consistently estimating a Euclidean representation, or \emph{mirror}, of the evolution of the latent position distributions from the observed networks. This mirror can reveal important underlying structure of the network dynamics, as we demonstrate in both simulated and real data, the latter of which is drawn from organizational and communication networks, revealing the change-point corresponding to the start of pandemic work-from-home orders.

\section{Model and Geometric Results}
\label{sec:model}

In order to model the intrinsic characteristics of the entities in our network, we consider \emph{latent position random graphs}, which associate a vector of features in $\RR^d$ to each vertex in the network. The connections between vertices in the network are independent given the latent positions, with connection probabilities depending on the latent position vectors of the two vertices in question. In our notation, $X \in \mathcal{X} \subset \mathbb{R}^d$ or $x \in \mathcal{X} \subset \mathbb{R}^d$, represent column vectors. If such column vectors are arranged as rows in a matrix, we specify this explicitly or we use the transpose to denote the corresponding row vector.

\begin{definition}[Latent Position Graph, Random Dot Product Graph, and Generalized Random Dot Product]
We say that the random graph $G$ with adjacency matrix $\bA\in\RR^{n\times n}$ is a \emph{latent position random graph (LPG)} with latent position matrix $\bX\in\RR^{n\times d}$, whose rows are the transpositions of the column vectors $X^1,\ldots,X^n\in\mathcal{X}\subseteq\RR^d$, and link function $\kappa: \mathcal{X}\times\mathcal{X}\rightarrow [0,1]$, if
$$\PP[\bA|\bX]=\prod_{i<j}\kappa(X^i,X^j)^{A_{i,j}}(1-\kappa(X^i,X^j))^{1-A_{i,j}}.$$
If $\kappa(x,y)=x^\top y$, we say that $G$ is a \emph{random dot product graph (RDPG)} and we call $\bP=\bX \bX^T$ the connection probability matrix. In this case, each $A_{ij}$ is marginally distributed (conditionally on $X^i, X^j$) as Bernoulli($\langle X^i,X^j\rangle$).

As a generalization, suppose $x^1, y^1 \in \mathbb{R}^p$ and $x^2, y^2\in \mathbb{R}^q$, where $p+q=d$. When $\kappa([(x^1)^\top,(x^2)^\top]^\top,[(y^1)^\top,(y^2)^\top]^\top)=(x^1)^\top y^1-(x^2)^\top y^2$, we say that $G$ is a \emph{generalized random dot product graph (GRDPG)} and we call $\bP=\bX I_{p,q} \bX^T$ the generalized edge connection probability matrix, where $I_{p,q}=I_p\oplus(-I_q)$.
\end{definition}

\begin{remark}[Orthogonal nonidentifiability in RDPGs] \label{rem:nonid}
Note that if $\bX \in \R^{n \times d}$ is a matrix of latent positions
and $\bW \in \R^{d \times d}$ is orthogonal,
$\bX$ and $\bX\bW$ give rise to the same distribution over graphs.
Thus, the RDPG model has a nonidentifiability up to orthogonal transformation. Analogously, the GRDPG model has a nonidentifiability up to indefinite orthogonal transformations.
\end{remark}
Since we wish to model randomness in the underlying features of each vertex, we will consider latent positions that are themselves random variables defined on a probability space $(\Omega, \mathcal{F}, \PP)$. For a particular sample point $\omega \in \Omega$, let $X(t,\omega)\in\RR^d$ be the realization of the associated latent position for this vertex at time $t \in [0, T]$. On the one hand, for fixed $\omega$, as $t$ varies, $X(t,\omega), 0 \leq t \leq T$ is the realized trajectory of a $d$-dimensional stochastic process. On the other hand, for a given time $t$, the random variable $X(t, \cdot)$ represents the constellation of possible latent positions at this time. In order for the inner product to be a well-defined link function, we require that the distribution of $X(t,\cdot)$ follow an \emph{inner-product distribution}:
\begin{definition}
\label{def:innerprod}
Let $F$ be a probability distribution on $\R^d$.
We say that $F$ is a
\emph{$d$-dimensional inner product distribution}
if $0 \le x^{\top} y \le 1$ for all $\bx,\by \in \supp F$. We will suppose throughout this work that for a $d$-dimensional inner product distribution $F$ and $X\sim F$, $\EE[XX^\top]$ has rank $d$.
\end{definition}

We wish to quantify the difference between the random vectors $X(t,\cdot)$ and $X(t',\cdot)$. Suppose that the graphs come from an RDPG or GRDPG model, where at each time $t$, the latent positions of each graph vertex are drawn independently from a common inner product latent position distribution $F_t$. Because $X(t, \cdot)$ is a latent position, we necessarily have $X(t, \cdot) \in L^2(\Omega)$; for notational simplicity, we will use $X(t, \cdot)$ and $X_t$ interchangeably. We define a norm, which we call the {\em maximum directional variation norm}, on this space of random variables; this norm leads to a natural metric $d_{MV}(X_t, X_{t'})$, both of which are described below. In the definition below, and throughout the paper, we use $\|\cdot\|$ to denote the Euclidean norm in $\RR^d$, $\|\cdot\|_2$ to denote the spectral norm of a matrix, and $\|\cdot\|_F$ to denote the Frobenius norm of a matrix.

\begin{definition}[Maximum directional variation norm and metric]\label{def:dMV}
For a random vector $X\in L^2(\Omega)$, we define $$\|X\|_{MV}=\max_u \EE[\langle X,u\rangle^2]^{1/2}=\|\EE[XX^\top]\|_2^{1/2},$$ where the maximization is over $u\in\RR^d$ with $\|u\|=1$.
We define an associated metric $d_{MV}$ by minimizing the norm of the difference between the random variables $X_t, X_{t'}$ over all orthogonal transformations, which aligns these distributions.
\begin{equation}
\label{eq:dr-dist}
d_{MV}(X_t,X_{t'})=\min_{W}\|X_t-WX_{t'}\|_{MV}=\min_{W} \left\|\EE[(X_t-WX_{t'})(X_t-WX_{t'})^\top]\right\|_2^{1/2},
\end{equation}
where the matrix norm on the right hand side is the spectral norm. Given a map $\varphi:[0,T]\rightarrow L^2(\Omega)$ that assigns time points $t$ to random variables $\varphi(t)=X_t$, we may write $d_{MV}(\varphi(t),\varphi(t')):=d_{MV}(X_t,X_{t'}).$
\end{definition}
The minimization in Equation~\ref{eq:dr-dist} is a variant of the classical Procrustes alignment problem, so we may refer to the latent positions after this rotation as ``Procrustes-aligned." 

\begin{remark}
If $X$ has mean zero, the $\|X\|_{MV}$ considers the square of spectral norm of its covariance matrix; that is, the norm $\|\cdot\|_{MV}$ gives the maximal directional \emph{variation} when $X$ is centered. In the cases of interest, we wish to capture features of the variance of the {\em drift} in the latent position, $X_t-WX_{t'}$: this is the origin of the name for this metric and its associated norm.
The metric $d_{MV}$ is not properly a metric on $L^2(\Omega)$, since if $X=WY$ a.s. for some orthogonal matrix $W$, then $d_{MV}(X,Y)=0$. However, if we consider the equivalence relation defined by $X\sim Y$ whenever $X=WY$ for some orthogonal matrix, this \emph{is} a metric on the corresponding set of equivalence classes. This means that we are able to absorb the non-identifiability from the original parameterization, obtaining a new parameter space with a metric space structure where the underlying distribution is identifiable.
\end{remark}

One of the central contributions of this paper is that the $d_{MV}$ metric captures important features of the time-varying distributions $F_t$. To describe a family of networks indexed by 
time, each of which is generated by a matrix of latent positions that are themselves random, we consider a {\em latent position stochastic process}.
\begin{definition}[Latent position process]\label{def:latent_pos_proc}
Let $\mathcal{F}_t, 0 \leq t \leq T$ be a filtration of $\mathcal{F}$. A {\em latent position process} $\varphi(t)$ is an $\mathcal{F}_t$-adapted map $\varphi:[0,T]\rightarrow (L^2(\Omega),d_{MV})$ such that for each $t\in [0,T]$, $\varphi(t)=X(t,\cdot)$ has an inner product distribution. We say that a latent position process is {\em nonbacktracking} if $\varphi(t)=\varphi(t')$ implies $\varphi(s)=\varphi(t)$ for all $s \in [t,t']$.
\end{definition}

Once we have the latent position stochastic process, we can construct a time series of latent position random networks whose vertices have independent, identically distributed latent positions given by $\varphi(t)$.

\begin{definition}[Time Series of LPGs]
\label{def:tsg}
Let $\varphi$ be a latent position process, and fix a given number of vertices $n$ and collection of times $\mathcal{T}\subseteq[0,T]$. We draw an i.i.d.\ sample $\omega_j\in \Omega$ for $1\leq j\leq n$, and obtain the latent position matrices $\bX_{t}\in\RR^{n\times d}$ for $t\in\mathcal{T}$ by appending the rows $X(t,\omega_j)$, $1\leq j\leq n$. The \emph{time series of LPGs (TSG)} $\{G_t: t\in\mathcal{T}\}$ are conditionally independent LPGs with latent position matrices $\bX_{t}, t\in\mathcal{T}$.
\end{definition}

We emphasize that each vertex in the TSG corresponds to a single $\omega\in\Omega$, which induces dependence between the latent positions for that vertex across time points, but the latent position trajectories of any two distinct vertices are independent of one another across all times. Since these trajectories form an i.i.d.\ sample from the latent position process, it is natural to measure their evolution over time using the metric on the corresponding random variables, namely $d_{MV}(X_t,X_{t'})$. In the definition of this distance, the expectation is over $\omega\in\Omega$, which means that it depends on the joint distribution of $X_t$ and $X_{t'}$. In particular, $d_{MV}$ depends on more than just the marginal distributions of the random vectors $X_t$ and $X_{t'}$ individually, but takes into account their dependence inherited from the latent position process $\varphi$. 

A key question is whether the image $\varphi([0,T])$ has useful geometric structure when equipped with the metric $d_{MV}$. It turns out that, under mild conditions, this image is a manifold. In addition, the map $\varphi$ admits a Euclidean analogue, called a {\em mirror}, which is a finite-dimensional curve that retains important signal from the generating stochastic process for the network time series. To make this precise, we define the notions of Euclidean realizability and approximate Euclidean realizability, below, and provide several examples of latent position processes that satisfy these requirements.

\begin{definition}[Notions of Euclidean realizability]
Let $\varphi$ be a latent position process. 

We say that $\varphi$ is \emph{approximately (Lipschitz) Euclidean $c$-realizable} with \emph{mirror} $\psi$ and realizability constant $C>0$ if there exists a Lipschitz continuous curve $\psi:[0,T]\rightarrow\RR^c$ such that
$$\left|d_{MV}(\varphi(t),\varphi(t'))-\|\psi(t)-\psi(t')\|\right|\leq C|t-t'|\text{ for all }t,t'\in[0,T].$$

For a fixed $\alpha\in(0,1)$, we say that $\varphi$ is \emph{approximately $\alpha$-H\"{o}lder Euclidean $c$-realizable} if $\psi$ is $\alpha$-H\"{o}lder continuous, and there is some $C>0$ such that 
$$\left|d_{MV}(\varphi(t),\varphi(t'))-\|\psi(t)-\psi(t')\|\right|\leq C|t-t'|^{\alpha}\text{ for all }t,t'\in[0,T].$$ Rather than $c$-realizable, we may simply say \emph{realizable} if there is some $c$ for which this holds; we simply say that $\varphi$ is H\"{o}lder Euclidean realizable if the condition holds for some $\alpha\in(0,1]$.
\end{definition}

\begin{remark} If there exists a Lipschitz curve $\psi$ in $\mathbb{R}^c$ for which 
$$d_{MV}(\varphi(t), \varphi(t'))=\|\psi(t)- \psi(t')\|$$
we say the latent position process is {\em exactly Euclidean realizable}. While this is seldom the case for most interesting latent position processes, it can be instructive to consider what this implies: that pairwise $d_{MV}$ distances between the latent position process at $t$ and $t'$ coincide exactly with Euclidean distances along the curve $\psi$ at $t$ and $t'$. Hence the term {\em mirror}, a Euclidean-space curve that replicates (with some distortion in the approximately realizable case) the time-varying $d_{MV}$ distance.
For useful intuition, consider a one-dimensional Brownian motion $B_t$. While this is not a latent position process, its covariance operator $R(s,t)=\mathbb{E}[(B_t-B_s)^2]$ is exactly 
$\mathbb{E}[(B_t-B_s)^2]=(t-s)$, corresponding to the distance between points along the line $x(t)=t$ between $t$ and $s$.
\end{remark}

In practice, the latent position process is unobserved, so it is unclear whether the Euclidean realizability condition holds. However, we show that the $d_{MV}$ distance can be consistently estimated, so the question of realizability may be resolved at least in part by inspection of the scree plot of the estimated distance matrix. We remark further on this point after Theorem~\ref{thm:approximation}.

Note that if $\varphi$ is approximately $c$-realizable, it is $c'$-realizable for any $c'>c$, and a trade-off exists between the choice of dimension $c$ and the accuracy of the approximation, as measured by $C$ and $\alpha$. The realizability dimension $c$ can be interpreted as a choice in a dimension reduction procedure. Namely, the dimension $c$ corresponds to a curve $\psi$ in $\mathbb{R}^c$, along which pairwise Euclidean distances locally approximate those of the maximum variational distances along the latent position process. 

As such, none of $c$, $C$, or $\psi$, as defined above, need be unique. This leads naturally to the question of an ``optimal" mirror---that is, one that best captures, in Euclidean space, the salient features of the $d_{MV}$ distance. To make this precise, suppose $\varphi$ is  
a latent position process. 
For any associated mirror $\psi$, consider the functional $\mathcal{L}(\psi)$ given by
\begin{equation}
\label{eq:functional_L}
\mathcal{L}(\psi)=\int_0^T \int_0^T \left|d_{MV}^2(\varphi(s), \varphi(t))-\|\psi(s)-\psi(t)\|^2\right|^2\,\mathrm{d}s\,\mathrm{d}t
\end{equation}
As we show below, there exists a solution to the variational problem of minimizing this functional over the class of mirrors $\mathcal{S}(c, \alpha, C)$ that are $\alpha$-H\"older with realizability constant $C$, and satisfy $\int_0^T \psi(t)\,\mathrm{d}t=0$. We call this minimizer an {\em optimal} mirror for this $\alpha, c,$ and $C$. While any mirror satisfying the realizability constraints estimates the $d_{MV}$ distance well locally, a minimizer of this functional also estimates the $d_{MV}$ distance well in a global sense.

\begin{thm}[Existence of Optimal Mirrors]
\label{thm:optimal}
Let $\varphi$ be a latent position process, which is approximately $\alpha$-H\"{o}lder Euclidean $c$-realizable with realizability constant $C$. Let $\mathcal{S}(c, \alpha, C)$ be the class of mirrors defined above. Then there exists a solution to the variational problem
\begin{equation}\label{eq:var_prob_L}
\inf_{\psi \in \mathcal{S}(c,\alpha,C)} \mathcal{L}(\psi)
\end{equation}
\end{thm}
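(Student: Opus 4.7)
The plan is to apply the direct method of the calculus of variations: I would show that $\mathcal{S}(c,\alpha,C)$ is compact in the uniform topology on $C([0,T],\RR^c)$ and that $\mathcal{L}$ is continuous under uniform convergence, from which existence of a minimizer is immediate. The admissible class is nonempty because the realizability hypothesis supplies at least one mirror $\psi_0$, and translating by $-T^{-1}\int_0^T \psi_0(t)\,\mathrm{d}t$ leaves both the H\"older property and the realizability inequality intact while enforcing the mean-zero constraint, so the infimum is finite.

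The same $\psi_0$ forces the map $(s,t)\mapsto d_{MV}(\varphi(t),\varphi(s))$ to be $\alpha$-H\"older: by the triangle inequality,
\[
d_{MV}(\varphi(t),\varphi(s)) \le \|\psi_0(t)-\psi_0(s)\|+C|t-s|^\alpha \le K_\varphi\,|t-s|^\alpha,
\]
for some $K_\varphi<\infty$ depending on $\psi_0$ and $C$. A second application of the triangle inequality to an arbitrary $\psi\in\mathcal{S}(c,\alpha,C)$ then yields $\|\psi(t)-\psi(s)\|\le (K_\varphi+C)|t-s|^\alpha$, so the entire class is equi-H\"older with constant $K_\varphi+C$. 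The mean-zero constraint supplies a uniform sup-norm bound via
\[
T\|\psi(s)\| = \left\|\int_0^T (\psi(s)-\psi(t))\,\mathrm{d}t\right\| \le (K_\varphi+C)\int_0^T |s-t|^\alpha\,\mathrm{d}t.
\]
Arzel\`a-Ascoli then makes $\mathcal{S}(c,\alpha,C)$ precompact in $C([0,T],\RR^c)$, and closedness is inherited from the fact that the $\alpha$-H\"older bound, the realizability inequality, and the mean-zero condition all pass to uniform limits; hence $\mathcal{S}(c,\alpha,C)$ is compact. Continuity of $\mathcal{L}$ is then straightforward: uniform convergence $\psi_n\to\psi$ forces $\|\psi_n(s)-\psi_n(t)\|^2$ to converge uniformly on $[0,T]^2$, and the integrand in (\ref{eq:functional_L}) is uniformly bounded thanks to the equi-H\"older estimate, so dominated convergence gives $\mathcal{L}(\psi_n)\to\mathcal{L}(\psi)$. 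A continuous functional on a compact set attains its infimum.

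The one genuine subtlety is establishing a uniform H\"older constant for the entire family $\mathcal{S}(c,\alpha,C)$, since the definition requires only that each individual $\psi$ be $\alpha$-H\"older without fixing a common constant. The observation above---that the fixed realizability constant $C$ together with a single admissible mirror pins down the H\"older regularity of $d_{MV}(\varphi(\cdot),\varphi(\cdot))$ itself, and consequently of every other admissible $\psi$ via the triangle inequality---is the key step. Once this equi-H\"older bound is in hand, compactness and continuity fall out in the usual way and the theorem follows.
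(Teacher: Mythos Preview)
Your proposal is correct and follows essentially the same route as the paper: derive a uniform $\alpha$-H\"older bound on $d_{MV}(\varphi(\cdot),\varphi(\cdot))$ from one admissible mirror, propagate it to all of $\mathcal{S}(c,\alpha,C)$ via the realizability inequality, use the mean-zero constraint for a uniform sup-norm bound, apply Arzel\`a--Ascoli together with closedness to get compactness, and conclude by continuity of $\mathcal{L}$. You are in fact slightly more explicit than the paper about nonemptiness and about the dominated-convergence step for continuity, and you correctly flag the equi-H\"older issue as the only nontrivial point.
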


\begin{thm}[Uniqueness of Optimal Mirrors]
\label{thm:unique}
If $\varphi$ is a latent position process which is exactly $\alpha$-H\"{o}lder Euclidean $c$-realizable, the solution to the variational problem in Eq. \eqref{eq:var_prob_L} is unique up to orthogonal transformations.
\end{thm}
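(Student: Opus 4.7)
The plan is to leverage the fact that under exact realizability, the variational infimum is zero, so any minimizer must itself be an exact Euclidean realizer of $d_{MV}\circ\varphi$; uniqueness then reduces to the classical question of recovering a configuration in $\RR^c$ from its pairwise distances. First I would observe that exact $\alpha$-H\"older Euclidean $c$-realizability supplies a centered exact realizer $\psi^{\ast}\in\mathcal{S}(c,\alpha,C)$ with $\mathcal{L}(\psi^{\ast})=0$ (centering by subtracting the time-average preserves both pairwise distances and the H\"older constant). Hence the infimum in~\eqref{eq:var_prob_L} is zero and any minimizer $\psi$ also satisfies $\mathcal{L}(\psi)=0$. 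Because $\psi$ is $\alpha$-H\"older and $t\mapsto\varphi(t)$ is $d_{MV}$-H\"older (an immediate consequence of the realizability hypothesis), the nonnegative integrand of $\mathcal{L}$ is continuous on $[0,T]^2$, so its vanishing integral upgrades to the pointwise equality $\|\psi(s)-\psi(t)\|=d_{MV}(\varphi(s),\varphi(t))$ for every $s,t\in[0,T]$.

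Next, given two such exact realizers $\psi_1,\psi_2\in\mathcal{S}(c,\alpha,C)$, I would extract their Gram functions from the common distance function by a classical double-centering step. Expanding $\|\psi_i(s)-\psi_i(t)\|^2$, integrating in $t$, and using $\int_0^T\psi_i(t)\,\mathrm{d}t=0$ yields
\[
\|\psi_i(s)\|^2 \;=\; \frac{1}{T}\int_0^T \|\psi_i(s)-\psi_i(t)\|^2\,\mathrm{d}t \;-\; \frac{1}{2T^2}\int_0^T\!\int_0^T\|\psi_i(u)-\psi_i(v)\|^2\,\mathrm{d}u\,\mathrm{d}v,
\]
so the squared norms, and hence the inner products $\langle\psi_i(s),\psi_i(t)\rangle$ via polarization, are determined entirely by the pairwise distances. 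Since those distances coincide for $i=1,2$, we obtain the Gram identity $\langle\psi_1(s),\psi_1(t)\rangle=\langle\psi_2(s),\psi_2(t)\rangle$ for all $s,t$. Defining $W$ on finite linear combinations by $W\bigl(\sum_k a_k\psi_1(t_k)\bigr)=\sum_k a_k\psi_2(t_k)$, the Gram equality forces both well-definedness (if $\sum_k a_k\psi_1(t_k)=0$ then $\|\sum_k a_k\psi_2(t_k)\|^2=\sum_{k,l}a_ka_l\langle\psi_1(t_k),\psi_1(t_l)\rangle=0$) and the isometry property. Thus $W$ is a linear isometry from $\mathrm{span}(\psi_1([0,T]))$ onto $\mathrm{span}(\psi_2([0,T]))$, two subspaces of $\RR^c$ of equal dimension; it extends to an orthogonal transformation of $\RR^c$ by matching orthonormal bases of the orthogonal complements, and by construction $\psi_2=W\psi_1$.

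I expect the main subtlety to lie in the reduction step. Absent continuity, $\mathcal{L}(\psi)=0$ would only force almost-everywhere agreement of $\|\psi(s)-\psi(t)\|$ with $d_{MV}(\varphi(s),\varphi(t))$, leaving room for pathological minimizers that differ from a true realizer on a null set and spoil uniqueness. The H\"older regularity built into the class $\mathcal{S}(c,\alpha,C)$ is precisely what converts this a.e.\ equality into pointwise agreement, and the centering constraint is what then eliminates the translational ambiguity inherent in Euclidean reconstruction from pairwise distances alone; once both of these are secured, the remainder is a clean continuous-time multidimensional scaling argument that extends the partial isometry $W$ to an orthogonal map on $\RR^c$ in a routine way.
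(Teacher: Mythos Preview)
Your proof is correct. Both you and the paper proceed by the same reduction---under exact realizability the infimum is zero, so any minimizer $\psi$ must satisfy $\|\psi(s)-\psi(t)\|=d_{MV}(\varphi(s),\varphi(t))$ pointwise, and one then shows any two such realizers differ by an orthogonal transformation---but the execution of this last step differs. The paper discretizes: it fixes an enumeration of the rationals, applies finite-sample CMDS to obtain orthogonal matrices $R_k$ relating $\psi_1$ and $\psi_2$ at the first $k$ rational times, extracts a convergent subsequence of $(R_k)$ by compactness of $\mathcal{O}^{c\times c}$, and extends to all of $[0,T]$ by density and continuity. You instead work directly in the continuum, using the centering constraint $\int_0^T\psi_i=0$ to read off $\|\psi_i(s)\|^2$ and hence $\langle\psi_i(s),\psi_i(t)\rangle$ from the distance data alone, and then build the orthogonal map by hand from the Gram identity. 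Your route is cleaner: it sidesteps the compactness argument entirely, and it handles the translational ambiguity in a single stroke via the integral centering, whereas the paper's finite-sample double-centering $-\tfrac12 P\mathcal{D}_i P^\top$ centers at the sample centroid rather than at the origin, a point the paper glosses over. You also make explicit the continuity upgrade from $\mathcal{L}(\psi)=0$ to pointwise equality, which the paper leaves implicit. The paper's approach has the minor advantage of dovetailing with the finite-sample CMDS machinery used elsewhere in Section~\ref{sec:estimation}.
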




In addition to the existence of optimal mirrors, an approximate Euclidean realizable latent position process $\varphi$ has the property that its image is a manifold.

\begin{thm}[Manifold properties of a nonbacktracking latent position process]
\label{thm:manifold}
Let $\varphi$ be a nonbacktracking latent position process which is approximately Euclidean realizable. Then $\mathcal{M}=\varphi([0,T])$ is homeomorphic to an interval $[0,I]$. In particular, it is a topological 1-manifold with boundary. If $\varphi$ is injective and approximately $\alpha$-H\"{o}lder Euclidean realizable, the same conclusion holds.
\end{thm}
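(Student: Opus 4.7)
The plan is to realize $\mathcal{M}$ as a homeomorphic image of the quotient $Q=[0,T]/\!\sim_\varphi$, where $s\sim_\varphi t$ iff $\varphi(s)=\varphi(t)$ in the metric space of orthogonal-equivalence classes under $d_{MV}$, and then to identify $Q$ with a Euclidean interval via a direct collapse map. First I would observe that both notions of approximate Euclidean realizability force $\varphi$ to be continuous: the triangle inequality gives
$$d_{MV}(\varphi(t),\varphi(t'))\le\|\psi(t)-\psi(t')\|+C|t-t'|^{\alpha}$$
(with $\alpha=1$ in the Lipschitz case), so $\varphi$ inherits the modulus of continuity of the mirror $\psi$. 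Therefore $\mathcal{M}$ is compact and connected as the continuous image of $[0,T]$, and Hausdorff as a subset of the metric space of $d_{MV}$-equivalence classes.

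\textbf{Level sets are closed intervals.} Each equivalence class is the preimage $\varphi^{-1}(\{\varphi(s)\})$ of a point in the Hausdorff space $\mathcal{M}$, hence closed in $[0,T]$ by continuity of $\varphi$. The nonbacktracking hypothesis is exactly the statement that each class is convex in $[0,T]$: if $\varphi(s)=\varphi(t)$ with $s\le u\le t$, then $\varphi(u)=\varphi(s)$. So the classes are disjoint closed subintervals of $[0,T]$, and at most countably many of them can have positive length, which I enumerate as $\{[a_n,b_n]\}_{n\ge 1}$ with $\sum_n(b_n-a_n)\le T$.

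\textbf{Quotient is an interval.} Letting $\pi:[0,T]\to Q$ denote the quotient, $\varphi$ factors as $\varphi=\bar\varphi\circ\pi$ with $\bar\varphi:Q\to\mathcal{M}$ a continuous bijection; since $\mathcal{M}$ is Hausdorff, $\bar\varphi$ pulls back a Hausdorff structure to $Q$, and combined with the compactness and connectedness inherited from $[0,T]$, the compact-to-Hausdorff principle makes $\bar\varphi$ a homeomorphism. To identify $Q$ with a Euclidean interval, set $I=T-\sum_n(b_n-a_n)$ and define the explicit collapse map
$$h(t)=t-\sum_{n:\,b_n\le t}(b_n-a_n)-\sum_{n:\,a_n\le t<b_n}(t-a_n),$$
a continuous non-decreasing surjection $[0,T]\to[0,I]$ whose level sets are exactly the $\sim_\varphi$-equivalence classes. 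Hence $h=\bar h\circ\pi$ for a continuous bijection $\bar h:Q\to[0,I]$, again a homeomorphism by the same compact-to-Hausdorff argument. Composing yields $\mathcal{M}\cong Q\cong[0,I]$, and since an interval is a topological $1$-manifold with boundary, the manifold claim follows. In the injective case, $\sim_\varphi$ is trivial, $\pi$ is a bijection, and $\bar\varphi:[0,T]\to\mathcal{M}$ is already a homeomorphism.

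\textbf{Main obstacle.} The most delicate step I expect is continuity of $h$ at accumulation points of the endpoints $\{a_n,b_n\}$: one must use the absolute convergence $\sum_n(b_n-a_n)\le T$ to argue that the two defining series have uniformly small tails on small time windows, yielding uniform continuity of $h$. A more abstract alternative would be to invoke the classical fact that a compact, connected, Hausdorff linearly ordered topological space is homeomorphic to $[0,1]$, applied to the order $Q$ inherits from its disjoint-interval structure, but reconciling the quotient and order topologies on $Q$ is itself nontrivial, so the explicit collapse map seems the cleanest route.
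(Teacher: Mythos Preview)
Your quotient-and-collapse route is genuinely different from the paper's argument. The paper builds an arc-length-like reparametrization: it defines the pointwise upper Lipschitz constant $L(t)=\inf_{\delta>0}\sup\{d_{MV}(\varphi(x),\varphi(y))/|x-y|:x,y\in B(t,\delta),\,x\ne y\}$, sets $\gamma(t)=\int_0^t L(s)\,ds$, and proves directly that $\mu:[0,I]\to\mathcal{M}$ defined by $\mu\circ\gamma=\varphi$ is a well-defined Lipschitz bijection, hence a homeomorphism. Your construction instead collapses the nondegenerate fibers using only their Lebesgue measure, which is topologically cleaner but obscures where the Lipschitz hypothesis does its work.

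That is exactly where a gap lies, and you have misidentified the main obstacle. Continuity of $h$ is a non-issue: directly from your formula, $0\le h(t)-h(s)\le t-s$ for $s<t$, so $h$ is $1$-Lipschitz outright and no tail argument is needed. The real difficulty is your assertion that the level sets of $h$ are \emph{exactly} the $\sim_\varphi$-classes. One direction is clear, but for the converse observe that $h(t)-h(s)$ is the Lebesgue measure of $[s,t]\setminus\bigcup_n[a_n,b_n]$, so $h(s)=h(t)$ only says that the singleton classes in $[s,t]$ have total measure zero; it does not by itself force $\varphi(s)=\varphi(t)$. For a cautionary example take $\varphi$ to be the Cantor staircase function: it is continuous and nonbacktracking, its nondegenerate fibers are exactly the complementary intervals of the Cantor set, your $I$ equals $0$, and $h\equiv 0$, yet $\mathcal{M}=[0,1]$. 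The Cantor function is not Lipschitz, so this is not a counterexample to the theorem---but it shows precisely why your collapse map needs the Lipschitz hypothesis. To close the gap you must argue that if the singleton classes in $[s,t]$ have measure zero then $\varphi(s)=\varphi(t)$; this uses the Lipschitz bound on $\varphi$ in an essential way (via the metric derivative of a Lipschitz curve, or equivalently via a partition argument of the kind the paper uses to control its $\gamma$).
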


If we suppose that the trajectories of $\varphi$ satisfy a certain degree of smoothness, it turns out that the map $\varphi$ into the space of random variables equipped with the $d_{MV}$ metric also has this degree of smoothness.

\begin{thm}(Smooth trajectories and smooth latent position processes)
\label{thm:smoothtraj}
Suppose $X(\cdot,\omega):[0,T]\rightarrow \RR^d$ is $\alpha$-H\"{o}lder continuous with some $\alpha\in(0,1]$ for almost every $\omega\in\Omega$, such that $$\|X(t,\omega)-X(s,\omega)\|\leq L(\omega)|t-s|^{\alpha},$$ where the random variable $L\in L^2(\Omega).$ Let $\mathcal{M}=\varphi([0,T])$. Then $\varphi:[0,T]\rightarrow (\mathcal{M},d_{MV})$ is H\"{o}lder continuous with this same $\alpha$.
\end{thm}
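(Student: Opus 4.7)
The approach is direct: bound $d_{MV}(\varphi(t),\varphi(s))$ from above by a suboptimal choice in the Procrustes minimization, then push the norm inside the expectation using a trace inequality, and finally apply the pathwise Hölder bound together with the $L^2$ integrability of $L$.

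First I would take $W = I$ in the infimum defining $d_{MV}$, which is a valid (though generally suboptimal) choice and yields
\begin{equation*}
d_{MV}(X_t,X_s)^2 \;\le\; \bigl\|\EE[(X_t-X_s)(X_t-X_s)^\top]\bigr\|_2.
\end{equation*}
Next, since $\EE[(X_t-X_s)(X_t-X_s)^\top]$ is a $d\times d$ symmetric positive semidefinite matrix, its spectral norm is dominated by its trace, so
\begin{equation*}
\bigl\|\EE[(X_t-X_s)(X_t-X_s)^\top]\bigr\|_2 \;\le\; \tr\!\bigl(\EE[(X_t-X_s)(X_t-X_s)^\top]\bigr) \;=\; \EE\bigl[\|X_t-X_s\|^2\bigr],
\end{equation*}
where the final equality interchanges expectation and trace (justified by nonnegativity or Fubini, since each entry is integrable).

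Now I would invoke the hypothesis: for almost every $\omega$, $\|X(t,\omega)-X(s,\omega)\|\le L(\omega)|t-s|^{\alpha}$, so squaring and taking expectations gives
\begin{equation*}
\EE\bigl[\|X_t-X_s\|^2\bigr] \;\le\; \EE[L^2]\,|t-s|^{2\alpha},
\end{equation*}
which is finite by the assumption $L\in L^2(\Omega)$. Taking square roots and chaining the three inequalities yields
\begin{equation*}
d_{MV}(\varphi(t),\varphi(s)) \;\le\; \sqrt{\EE[L^2]}\,|t-s|^{\alpha},
\end{equation*}
so $\varphi$ is $\alpha$-Hölder continuous with constant $\sqrt{\EE[L^2]}$, as desired.

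There is no real obstacle here; the only minor subtlety is that the Procrustes infimum is over orthogonal $W$ and need not be attained by $W=I$, but this only makes the left-hand side smaller, which is exactly the direction we need for an upper bound. The role of the $L^2$ hypothesis on $L$ is to ensure the resulting Hölder constant is finite; a weaker integrability assumption would not suffice because the bound passes through a second moment of $\|X_t-X_s\|$.
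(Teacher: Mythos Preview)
Your proof is correct and follows essentially the same approach as the paper: choose $W=I$ to bound the Procrustes minimum, pass from the spectral norm of the second-moment matrix to $\EE[\|X_t-X_s\|^2]$, and apply the pathwise H\"older bound with $L\in L^2(\Omega)$. The paper's argument is line-for-line the same; you have simply made the intermediate inequality (spectral norm $\leq$ trace for PSD matrices) more explicit than the paper does.
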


\begin{remark}
\label{rem:CMDS_and_Euclidean_Realizability}
In the above definitions of realizability, regularity conditions are imposed on $\psi$, which takes values in $\RR^c$, rather than on $\varphi$, which gives random variables as output. Moreover, $\psi$ is the Euclidean realization of the manifold $\varphi([0,T])$ in the space of random variables; this as an approximately distance-preserving representation of those random variables, each of which captures the full state of the system with all of the given entities at any time $t$. As we show, estimates of this Euclidean mirror, derived from observations of graph connectivity structure at a collection of time points, can recover important features of the time-varying latent positions.
\end{remark}

There are several natural classes of latent position processes that
are
approximately Lipschitz or $\alpha$-H\"{o}lder Euclidean realizable. The next theorem demonstrates approximate $\alpha$-H\"{o}lder Euclidean realizability for any latent position process expressible as the sum of a deterministic drift and a martingale term whose increments have well-controlled variance.
\begin{thm}[Approximate Holder realizability of variance-controlled martingale-plus-drift processes]
\label{thm:martingale}
Suppose $M_t$ is an $\mathcal{F}_t$-martingale with respect to the filtration $\{\mathcal{F}_t: 0\leq t\leq T\}$, and suppose $\gamma:[0,T]\rightarrow\RR^d$ is Lipschitz continuous. Let $\varphi(t)=\gamma(t)+M_t$. Then
$$d_{MV}(X_t,X_s)^2\leq \|\mathrm{Cov}(M_t-M_s)\|_2+ \|\gamma(t)-\gamma(s)\|^2.$$
When $M_t$ satisfies $\|\mathrm{Cov}(M_t-M_s)\|_2\leq C(t-s)$, and $\gamma(t)=a(t)v$ for some $v\in\RR^d$ and Lipschitz continuous $a:[0,T]\rightarrow\RR$, then $\varphi$ is approximately $\alpha$-H\"{o}lder Euclidean realizable with $\alpha=1/2$ and $c=1$.
\end{thm}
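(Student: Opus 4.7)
The plan is to attack the two assertions separately, with Part~1 doing the real work and Part~2 following as a short consequence.

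For the first inequality, I would plug the feasible choice $W=I$ into the minimization defining $d_{MV}(X_t,X_s)^2$. Writing $X_t-X_s=(\gamma(t)-\gamma(s))+(M_t-M_s)$ with $\gamma(t)-\gamma(s)$ deterministic, I would expand the outer product inside the expectation. The martingale property $\E[M_t\mid\mathcal{F}_s]=M_s$ yields $\E[M_t-M_s]=0$, which both kills the cross terms and makes $\E[(M_t-M_s)(M_t-M_s)^\top]=\mathrm{Cov}(M_t-M_s)$. Consequently,
\[
\E\!\left[(X_t-X_s)(X_t-X_s)^\top\right] \;=\; (\gamma(t)-\gamma(s))(\gamma(t)-\gamma(s))^\top + \mathrm{Cov}(M_t-M_s).
\]
Applying the spectral-norm triangle inequality, and using that the rank-one matrix $(\gamma(t)-\gamma(s))(\gamma(t)-\gamma(s))^\top$ has spectral norm $\|\gamma(t)-\gamma(s)\|^2$, gives the claimed bound on $d_{MV}(X_t,X_s)^2$.

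For Part~2, I would take the one-dimensional mirror $\psi(t)=a(t)\|v\|\in\R$. Substituting $\gamma(t)=a(t)v$ and the covariance hypothesis into the bound from Part~1 gives
\[
d_{MV}(X_t,X_s)^2 \;\leq\; C\,|t-s| + (a(t)-a(s))^2\|v\|^2 \;=\; C\,|t-s| + (\psi(t)-\psi(s))^2.
\]
Taking square roots and using $\sqrt{p+q}\leq\sqrt{p}+\sqrt{q}$ yields one side of the realizability bound,
\[
d_{MV}(X_t,X_s) - |\psi(t)-\psi(s)| \;\leq\; \sqrt{C}\,|t-s|^{1/2}.
\]
For the reverse direction, I would observe that since $a$ is Lipschitz with some constant $L_a$, so is $\psi$ (with constant $L_a\|v\|$); hence on the bounded interval $[0,T]$ we get $|\psi(t)-\psi(s)|\leq L_a\|v\|\,T^{1/2}\,|t-s|^{1/2}$. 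Combined with $d_{MV}(X_t,X_s)\geq 0$, this gives $|\psi(t)-\psi(s)|-d_{MV}(X_t,X_s)\leq L_a\|v\|T^{1/2}|t-s|^{1/2}$ for free. Together these produce approximate $\alpha$-H\"older Euclidean $c$-realizability with $\alpha=1/2$, $c=1$, and realizability constant $C'=\max\{\sqrt{C},\,L_a\|v\|T^{1/2}\}$.

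There is really no significant obstacle here. The only step where a reader might expect technical effort is the lower side $|\psi(t)-\psi(s)|-d_{MV}(X_t,X_s)$, which superficially appears to demand a genuine lower bound on $d_{MV}$ via an analysis of the orthogonal minimization. But since the candidate mirror $\psi$ inherits Lipschitz regularity from $a$, nonnegativity of $d_{MV}$ alone delivers that side of the estimate. The martingale hypothesis enters exactly once, to cancel the cross terms in the expansion in Part~1, and everything else is linear-algebraic.
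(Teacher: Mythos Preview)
Your proposal is correct. Part~1 is essentially the paper's argument: plug in $W=I$, expand, use the martingale property to kill the cross terms, and apply the spectral-norm triangle inequality. The paper carries out the expansion for general $W$ before specializing to $W=I$, but that extra generality is only used later, so for the upper bound your direct route is equivalent.

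Part~2 is where you diverge. The paper exploits its general-$W$ expansion to extract a genuine lower bound $d_{MV}(X_t,X_s)^2\geq \|\gamma(t)-\gamma(s)\|^2$: all three terms in the expansion are positive semidefinite, so the spectral norm dominates $\|\gamma(t)-W\gamma(s)\|^2$, and minimizing this over $W$ when $\gamma(t),\gamma(s)\in\mathrm{span}(v)$ gives the claimed bound. This yields the tight sandwich $\|\psi(t)-\psi(s)\|^2\leq d_{MV}^2\leq \|\psi(t)-\psi(s)\|^2+C|t-s|$. You instead bypass the lower bound entirely, observing that Lipschitz continuity of $\psi$ on a bounded interval already forces $|\psi(t)-\psi(s)|\leq L_a\|v\|T^{1/2}|t-s|^{1/2}$, so nonnegativity of $d_{MV}$ alone handles the reverse direction. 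Your argument is shorter and avoids the orthogonal minimization, at the cost of a larger realizability constant and the loss of the sharper inequality $d_{MV}\geq |\psi(t)-\psi(s)|$; the paper's approach buys that extra structural information, which it reuses in the worked examples.
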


\begin{example}
\label{ex:BM}
Consider $X_t=\gamma(t)+B_t$, where $B_t$ is a $d$-dimensional Brownian motion, and $\gamma:[0,T] \rightarrow \mathbb{R}^d$ is a Lipschitz continuous function of the form $\gamma(t)=a(t)v$. The $\varphi(t)$ is approximately $\alpha$-H\"{o}lder Euclidean realizable, with $\alpha=1/2$, and the Euclidean mirror $\psi(t)$ is $\psi(t)=a(t)\|v\|$, so $c=1$. In Section \ref{sec:drift-plus-noise_LPP}, we provide simulations of a network time series with this latent position process and show that our estimated mirror matches $\psi$ well for a network of 2000 nodes.
\end{example}

\begin{example}
\label{ex:integratedBM}
Consider $X_t= \gamma(t)+I_t$, where $I_t=\int_{0}^{t} B_s\,\mathrm{d}s$, $B_s$ is a $d$-dimensional Brownian motion, and $\gamma(t)=(at+b)v$ is a function describing the mean of $X_t$ over time, with $a,b\in\RR, v\in\RR^d$.
Then each sample path of $X_t$ is continuously differentiable in $t$, and $\EE[X_t]=\gamma(t)$ is as well.
If $\{\mathcal{F}_t: 0 \leq t \leq T\}$ is the canonical filtration generated by Brownian motion, then $I(t)$ is {\em not} an $\mathcal{F}_t$-martingale. Then $$d_{MV}(X_t,X_{t'})^2= a^2(t-t')^2\|v\|^2+\sigma^2[(t-t')^2(t+2t')/3],$$ so $\varphi$ is approximately Euclidean realizable with $\psi(t)=\sqrt{a^2\|v\|^2+\sigma^2T}t$, so again $c=1$.
\end{example}

The latent positions for the vertices in our network are not typically observed---instead, we only see the connectivity between the nodes in the network, from which a given realization of the latent positions can, under certain model assumptions, be accurately estimated. In order to compare the networks at times $t$ and $t'$, we can consider estimates of the networks' latent positions at these two times as noisy observations from the joint distribution of $(X_t,X_{t'})$, and deploy these estimates in an approximation of the distances $d_{MV}(X_t,X_{t'})$. Using these approximate distances, we can then estimate the curve $\psi(t)$, giving a visualization for the evolution of \emph{all} of the latent positions in the random graphs over time.

Suppose that $G$ is a random dot product graph with latent position matrix $\bX$, where the rows of $\bX$ are independent, identically distributed draws from a latent position distribution $F$ on $\mathbb{R}^d$. Let $\bA$ be the adjacency matrix for this graph. As shown in \cite{STFP-2011}, a spectral decomposition of the adjacency matrix yields consistent estimates for the underlying matrix of latent positions. We introduce the following definition.
\begin{definition}[Adjacency Spectral Embedding]
Given an adjacency matrix $\bA$, we define the \emph{adjacency spectral embedding (ASE)} with dimension $d$ as $\hat{\bX}=\hat{\bU}\hat{\bS}^{1/2}$, where $\hat{\bU}\in\RR^{n\times d}$ is the matrix of $d$ top eigenvectors of $\bA$ and $\hat{\bS}\in\RR^{d\times d}$ is the diagonal matrix with the $d$ largest eigenvalues of $\bA$ on the diagonal.
\end{definition}
As we show in the next section, we will use the ASE of the observed adjacency matrices in our TSG to estimate the $d_{MV}$ distance between latent position random variables over time, and in turn, to estimate the Euclidean mirror, which records important underlying structure for the time series of networks. 
\section{Statistical Estimation of Euclidean Mirrors}
\label{sec:estimation}

Given a finite sample from a time series of graphs with approximately $\alpha$-H\"{o}lder Euclidean realizable latent position process $\varphi$, our goal is to estimate a finite-sample analogue of an optimal Euclidean mirror $\psi$. The distances $d_{MV}(\varphi(t),\varphi(s))$ can be used to recover a version of the mirror at these sampled times (up to rigid transformations) from classical multidimensional scaling (CMDS). As such, the crucial estimation problem is one of accurately estimating the distances $d_{MV}(\varphi(t),\varphi(s))$. To this end, we define the {\em estimated pairwise distances} between any two such $n \times d$ latent position matrices $\hat{\bX}_{t}$ and $\hat{\bX}_{s}$ as follows:
\begin{equation}\label{eq:estimated_d}
\hat{d}_{MV}(\hat{\bX}_{t},\hat{\bX}_{s}):=\min_{W\in\mathcal{O}^{d\times d}} \frac{1}{\sqrt{n}}\|\hat{\bX}_{t}-\hat{\bX}_{s}W\|_2,
\end{equation}
where $\mathcal{O}^{d\times d}$ is the set of real orthogonal matrices of order $d$, and $\|\cdot\|_2$ denotes the spectral norm. Note that when $\mathbf{U},\mathbf{V}\in M_{n,d}(\RR)$ have orthonormal columns, we have the following well-known relations between $\hat{d}_{MV}$ and the spectral norm of their $\sin\Theta$ matrix:
$$ \|\sin \Theta(\mathbf{U},\mathbf{V})\|_2\leq \sqrt{n}\hat{d}_{MV}(\mathbf{U},\mathbf{V})\leq \sqrt{2}\|\sin\Theta(\mathbf{U},\mathbf{V})\|_2. $$
Our central result is that, when our networks have a sufficiently large
number of vertices $n$, $\hat{d}_{MV}$ provides a consistent estimate of $d_{MV}$.
\begin{thm}
\label{thm:approximation}
Suppose $\varphi$ is a LPP such that $\varphi(t)$ takes values in $\RR^d$ for all $t\in[0,T]$. Let $\{G_t\}_{t\in\mathcal{T}}$ be a time series of LPGs whose latent positions follow this LPP. For each $t\in\mathcal{T}$, let $\hat{\bX}_t \in \RR^{n\times d}$ be the matrix of estimated latent positions from the ASE of each graph $G_t$. Then for all $s,t\in\mathcal{T}$, with overwhelming probability as $n\rightarrow\infty$, 
$$
\left|\hat{d}_{MV}(\hat{\bX}_{t},\hat{\bX}_{s})^2-d_{MV}(\varphi(t),\varphi(s))^2\right|\leq \frac{\log(n)}{\sqrt{n}}.
$$
\end{thm}

The functional $\mathcal{L}(\psi)$ in Equation~\ref{eq:functional_L} requires information of $d_{MV}(\varphi(t),\varphi(s))$ for all $t,s\in[0,T]$. In the finite-sample case, however, we only have a fixed, finite set of time points $\mathcal{T}=\{t_i\}_{i=1}^{m}\subseteq[0,T]$, with $t_i < t_{i+1}$ for all $i$. To address  finite-sample estimation of an analogue of an optimal mirror, we introduce the functional $\widehat{\mathcal{L}}$
defined on sets of size $m$ of vectors in $\RR^c$:
\begin{equation}
    \label{eq:functionalhat}
    \widehat{\mathcal{L}}(\{v_i\}_{i=1}^{m})=\sum_{i,j=1}^{m} \left|d_{MV}(\varphi(t_i),\varphi(t_j))^2-\|v_{i}-v_{j}\|^2\right|^2\;\Delta t_i\,\Delta t_j.
\end{equation}

The time steps are defined as $\Delta t_i:=t_i-t_{i-1}$, with $t_0= 0$. Note that the time steps $\Delta t_i$ need not be constant for Equation~\ref{eq:functionalhat}, allowing us to consider real-data settings in which the network observations may not be equally spaced in time. Suppose we know the true matrix $D_{\varphi}$ of pairwise distances whose $i,j$th entry is $d_{MV}(\varphi(t_i), \varphi(t_j))$. When the $\Delta t_i$ are equal and the process is exactly Euclidean realizable, classical multidimensional scaling (CMDS) applied to this matrix yields a collection of vectors $\{\psi(t_i): 1 \leq i \leq m\}$, unique up to rotation, that minimizes $\widehat{\mathcal{L}}(\{v_i\}_{i=1}^{m})$ (see additional details after Corollary~\ref{cor:approximationmatrices}). We call this the finite-sample mirror for the latent position process $\varphi$. When the $\Delta t_i$ are not all equal, we suggest obtaining an initial solution for Equation~(\ref{eq:functionalhat}) via CMDS, obtaining $V_0 \in \RR^{m\times c}$, with rows $v_i\in \RR^c$. Then one can employ a general-purpose nonlinear solver to Equation~(\ref{eq:functionalhat}) with the initial point $V_0$.


Having defined the dissimilarity matrices 
$$
\mathcal{D}_{\varphi}=[d_{MV}(\varphi(s),\varphi(t))]_{s,t\in\mathcal{T}},\quad \mathcal{D}_{\psi}=[\|\psi(s)-\psi(t)\|_2]_{s,t\in\mathcal{T}},
$$
where $\{\psi(t)\}_{t\in\mathcal{T}}$ achieves the minimum value of $\widehat{\mathcal{L}}$, we note that the first records the pairwise distances between the latent position process at times $t$ and $s$; the second records the differences between the finite-sample optimal Euclidean mirror at these times. Of course, the true distances are not observed, and must be estimated. The estimates for these quantities are then 
$$
\widehat{\mathcal{D}}_{\varphi}=[\hat{d}_{MV}(\hat{\bX}_{s},\hat{\bX}_t)]_{s,t\in\mathcal{T}},\quad \mathcal{D}_{\hat{\psi}}=[\|\hat{\psi}(s)-\hat{\psi}(t)\|_2]_{s,t\in\mathcal{T}},
$$
where $\{\hat{\psi}(t)\}_{t\in\mathcal{T}}$ is the output of CMDS applied to the matrix $\widehat{\mathcal{D}}_{\varphi}$. This means our full mirror estimation procedure is as follows:

\begin{algorithm}[h!]
    \caption{Mirror estimation}\label{alg:Mirror_estimation}
    \begin{algorithmic}[1]
        \State {{\bf Input}: graph adjacency matrices $A_{t_1}, \cdots, A_{t_{m}}$, embedding dimensions $d$ and $c$.}
	    \State Compute Adjacency Spectral Embedding of $A_{t_i}$ to obtain $\hat{\bf X}_{t_1}, \cdots, \hat{\bf X}_{t_{m}}\in M_{n,d}(\RR).$
    
	    \State For $i, j \in \{1, \cdots, m\}$ compute matrix entry $\hat{\mathcal{D}}_{\varphi}(i,j)=\hat{d}_{MV}(\hat{\bf X}_{t_i}, \hat{\bf X}_{t_j})$.
    
        \State Apply CMDS to $\hat{\mathcal{D}}_{\varphi}$ to yield $\hat{\psi}(t_i)\in\RR^c, 1 \leq i \leq m$. The $\hat{\psi}(t_i)$ are defined as the rows of $\hat{U}\hat{S}^{1/2}$, where $\hat{U}\hat{S}\hat{U}^T$ is the closest rank-$c$ matrix to $-\frac{1}{2}P\hat{\mathcal{D}}_{\varphi}^{(2)}P$ with respect to the Frobenius norm, and $P=I-J/m$, for $J$ the $m\times m$ matrix of all ones.
	    \State {\bf Output}: Return mirror estimates $\hat{\psi}(t_1), \cdots, \hat{\psi}(t_{m})$.
    \end{algorithmic}
\end{algorithm}

Suppose $A^{(2)}$ is the matrix of squared entries of $A$. 
Theorem \ref{thm:approximation} then guarantees that the square of each entrywise difference between $\hat{\mathcal{D}}^{(2)}_{\varphi}$ and $\mathcal{D}_{\hat{\psi}}^{(2)}$ is bounded, with high  probability, by $\log^2(n)/n$. Since both $\hat{\mathcal{D}}^{(2)}_{\varphi}$ and $\mathcal{D}_{\hat{\psi}}^{(2)}$ are $m \times m$ matrices, we immediately derive the following corollary:
\begin{corollary}
\label{cor:approximationmatrices}
Suppose the setting of Theorem~\ref{thm:approximation}. For fixed $m$, with overwhelming probability,
$$
\|\widehat{\mathcal{D}}_{\varphi}^{(2)}-\mathcal{D}_{\varphi}^{(2)}\|_F \leq \frac{m \log(n)}{\sqrt{n}}.
$$
\end{corollary}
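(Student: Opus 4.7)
The plan is to derive the corollary directly from Theorem \ref{thm:approximation} by unpacking the Frobenius norm as a sum over the $m_T^2$ entries of the difference matrix and applying a union bound across the pairs $(t_i,t_j)\in\mathcal{T}\times\mathcal{T}$. Since $m_T$ is held fixed while $n$ grows, and the phrase ``with overwhelming probability'' in Theorem \ref{thm:approximation} indicates a failure probability that decays faster than any polynomial in $n$, a union bound over a fixed number $m_T^2$ of events preserves the overwhelming-probability guarantee.

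First I would fix an arbitrary pair $i,j\in\{1,\ldots,m_T\}$ and invoke Theorem \ref{thm:approximation} to control the corresponding entry of $\widehat{\mathcal{D}}_{\varphi}^{(2)}-\mathcal{D}_{\varphi}^{(2)}$, yielding
$$
\left|\hat{d}_{MV}(\hat{\bX}_{t_i},\hat{\bX}_{t_j})^2-d_{MV}(\varphi(t_i),\varphi(t_j))^2\right|\leq \frac{\log(n)}{\sqrt{n}}
$$
with overwhelming probability in $n$. I would then take a union bound over the $m_T^2$ such pairs so that all entrywise bounds hold simultaneously on a single event of overwhelming probability.

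Next I would square each entrywise bound and sum to bound the Frobenius norm:
$$
\|\widehat{\mathcal{D}}_{\varphi}^{(2)}-\mathcal{D}_{\varphi}^{(2)}\|_F^2=\sum_{i,j=1}^{m_T}\left|\hat{d}_{MV}(\hat{\bX}_{t_i},\hat{\bX}_{t_j})^2-d_{MV}(\varphi(t_i),\varphi(t_j))^2\right|^2\leq m_T^2\cdot \frac{\log^2(n)}{n},
$$
and taking square roots yields the claimed bound of $m_T\log(n)/\sqrt{n}$.

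There is essentially no substantive obstacle here: the corollary is a routine consequence of Theorem \ref{thm:approximation} combined with a finite union bound, and the hard analytic work has already been done in establishing Theorem \ref{thm:approximation}. The only subtlety worth flagging is that the argument relies on $m_T$ being treated as a fixed constant while $n\to\infty$; if one wished to allow $m_T$ to grow with $n$, one would need to revisit the tail estimates underlying Theorem \ref{thm:approximation} to ensure that the union bound across $m_T^2$ events still produces an overwhelming-probability conclusion and that the resulting rate is meaningful.
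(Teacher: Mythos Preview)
Your proposal is correct and matches the paper's own argument essentially verbatim: the paper likewise applies Theorem~\ref{thm:approximation} to each of the $m_T^2$ entries, takes a union bound (using that $m_T$ is fixed), and sums the squared entrywise errors to obtain the Frobenius-norm bound. Your remark about the need for $m_T$ to be fixed is apt and is precisely the hypothesis the paper invokes.
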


We recall that CMDS computes the scaled eigenvectors of the matrix $-\frac{1}{2}PA^{(2)}P$, where $P=I-J_{m}/m$ is a projection matrix, $A^{(2)}$ is a matrix of squared distances, and $J_{m}$ is the $m\times m$ matrix of all ones. This matrix may be written as $USU^T$ for some $m\times c$ matrix $U$ with orthonormal columns and diagonal matrix $S$. This means that $\psi(t_i)$, the value at $t_i$ of the finite-sample optimal Euclidean mirror associated to $\varphi$, is simply the $i$th row of the matrix $US^{1/2}$. We will analogously denote the $i$th row of $\hat{U}\hat{S}^{1/2}$, the output of CMDS applied to $\widehat{\mathcal{D}}_{\varphi}$, by $\hat{\psi}(t_i)$. 

\begin{remark}
\label{rmk:choosec}
In practice, where $c$ is unknown, the selection of the mirror dimension $c$ is a model selection problem. However, in light of Corollary~\ref{cor:approximationmatrices} and the Hoffman-Wielandt inequality, we see that the eigenvalues of the estimated projected distance matrix $-\frac{1}{2}P\widehat{\mathcal{D}}_{\varphi}^{(2)}P$ approximate those of the theoretical one, $-\frac{1}{2}P\mathcal{D}_{\varphi}^{(2)}P$. As such, the singular values of $-\frac{1}{2}P\widehat{\mathcal{D}}_{\varphi}^{(2)}P$ are consistent estimates for the true singular values, meaning that the correct choice of $c$ will be revealed for large networks.
\end{remark}

It turns out that for large networks, the invariant subspace associated to $\widehat{\mathcal{D}}^{(2)}_{\varphi}$ corresponding to its largest eigenvalues is an accurate approximation to the corresponding subspace of $\mathcal{D}^{(2)}_{\varphi}$, which matches that of $\mathcal{D}^{(2)}_{\psi}$ when we have approximate Euclidean realizability. This suggests that applying CMDS to the estimated dissimilarity matrix $\widehat{\mathcal{D}}_{\varphi}$ can recover the finite-sample optimal mirror $\psi(t)$ up to a rotation: in other words, $\hat{\psi}(t)\approx R\psi(t)$ for some real orthogonal matrix $R$ and all $t\in\mathcal{T}$. 

\begin{thm}
\label{thm:cmds}

Suppose $\varphi$ is approximately Euclidean $c$-realizable and let $m$ be a given positive integer. Let $\hat{U},U\in\RR^{m\times c}$ be the top $c$ eigenvectors, and $\hat{S},S\in\RR^{c\times c}$ be the diagonal matrices with diagonal entries equal to the top $c$ eigenvalues of $\hat{E}_{\varphi}=-\frac{1}{2}P\widehat{\mathcal{D}}_{\varphi}^{(2)} P^\top$ and $E_{\varphi}=-\frac{1}{2}P \mathcal{D}_{\varphi}^{(2)} P^\top$, respectively, where $P=I-J_{m}/m$, $J_{m}$ is the all-ones matrix of order $m$. Suppose $S_{i,i}>0$ for $1\leq i\leq c$. Then with overwhelming probability, there is a real orthogonal matrix $R\in\mathcal{O}^{c\times c}$ such that 
$$\|\hat{U}-UR\|_F\leq \frac{2^{3/2}}{\lambda_c(E_{\varphi})}\left(\frac{m\log(n)}{\sqrt{n}}+\left(\sum_{i=c+1}^{m}\lambda^2_i\left(E_{\varphi}\right)\right)^{1/2}\right).$$ Call this upper bound $B=B(n,m,c)$. The spectrally-scaled CMDS output satisfies $$\|\hat{U}\hat{S}^{1/2}-US^{1/2}R\|_F\leq B\lambda_1^{1/2}(E_{\varphi})\left(2+4B\kappa^{1/2}+(1+2B)\frac{m\log(n)}{\sqrt{n}\lambda_c(E_{\varphi})}\right),$$
where $\kappa=\lambda_1(E_{\varphi})/\lambda_c(E_{\varphi})$. In particular, we have
$$\sum_{i=1}^{m} \|\hat{\psi}(t_i)-R\psi(t_i)\|^2\leq B^2\lambda_1(E_{\varphi})\left(2+4B\kappa^{1/2}+(1+2B)\frac{m\log(n)}{\sqrt{n}\lambda_c(E_{\varphi})}\right)^2.$$
\end{thm}

 If all but the top $c$ eigenvalues of $\mathcal{D}_{\varphi}$ are sufficiently small---as is the case when $\mathcal{D}_{\varphi}$ is rank $c$--- Theorem \ref{thm:cmds} ensures that a Euclidean mirror can be consistently estimated. As such, if the important aspects of a finitely-sampled latent position process, such as changepoints or anomalies, are reflected in low-dimensional Euclidean space, then we recover an optimal finite-sample mirror consistently through CMDS applied to the estimated distance matrix. We encapsulate our consistency results and connections between true distances, their estimates, and associated Euclidean mirrors in Figure \ref{fig:commuting}.

\begin{figure}[h]
    \centering
    \includegraphics[width=\textwidth]{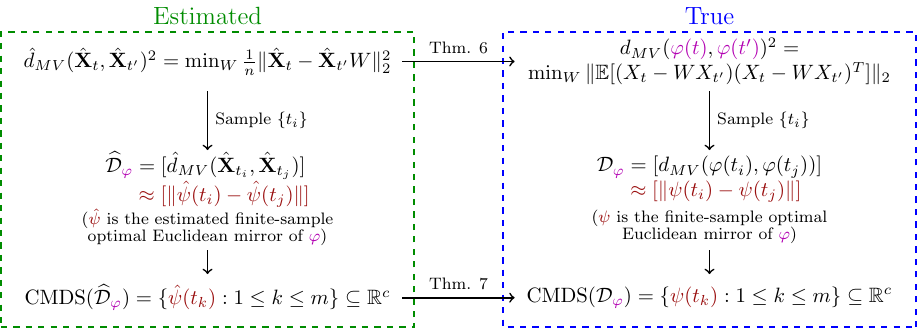}
    \caption{Consistent estimation of network dissimilarity and Euclidean mirrors.}
    \label{fig:commuting}
\end{figure}

The right-hand side of Figure \ref{fig:commuting} lists the true and typically unobserved distance measure $d_{MV}$, and from it, immediately below, the matrix of pairwise distances $\mathcal{D}_{\varphi}$. If this dissimilarity is is Euclidean realizable in $c$ dimensions, then classical multidimensional scaling will recover this mirror, denoted by $\psi$, up to Euclidean distance-preserving transformations.

On the left-hand side of Figure \ref{fig:commuting}, we see how to compute an estimate of $d_{MV}$ from spectral embeddings of a pair of observed network adjacencies. Theorem \ref{thm:approximation} grants that the estimated dissimilarity matrix of pairwise distance $\hat{\mathcal{D}}_{\varphi}$ will be close to the true dissimilarity $\mathcal{D}_{\varphi}$, and if the latent position process is Euclidean realizable, Theorem \ref{thm:cmds} establishes that classical multidimensional scaling applied to $\hat{\mathcal{D}}_{\varphi}$ serves as a consistent estimate for $\psi$.

This figure describes our overall approach to the problem of inference in time series of networks: first, we construct a useful dissimilarity measure that captures important features of the underlying LPP; next, show how this dissimilarity can be consistently estimated; and finally, extract an estimated mirror that provides a low-dimensional representation of network evolution. Our methodology is not restricted simply to the specific distance measure $d_{MV}$ that we have defined here, and other notions of distance may have conceptual, theoretical, or computational benefits depending on the underlying process.

\begin{remark}
Using CMDS on the distance matrix as our estimate means that permuting the time points $t_1, \cdots, t_k$ to $t_{\sigma(1)}, \cdots, t_{\sigma(k)}$ where $\sigma \in S_k$ produces the same set of points in $\RR^c$ (up to an orthogonal transformation), in a permuted order. As such, \emph{if the original times are associated with these points}, we recover the identical smooth curve. If the times are not retained, only the original time ordering recovers this smooth trajectory.

This does not change the content of the theorem: one just replaces $t_i$ with $t_{\sigma(i)}$ in the statement. On the other hand, if we have an {\em unordered} collection of networks, and not a time series of graphs ordered naturally by time, our estimation procedure will yield a collection of points in $\RR^c$, but these will not typically fall on a 1-dimensional curve. \hfill 
$\square$
\end{remark}

When we have exact Euclidean realizability or when the tail eigenvalues of $\mathcal{D}_{\varphi}$ can be bounded directly, we obtain the following two corollaries of Theorem \ref{thm:cmds}.


\begin{corollary}
\label{cor:edm}
Suppose $\mathcal{D}_{\varphi}$ is a Euclidean distance matrix with dimension $c$. Then $\mathcal{D}_{\varphi}=\mathcal{D}_{\psi}$, and for fixed $m$, there is a constant $C=C(m)$ such that with overwhelming probability, there is a real orthogonal matrix $R\in\mathcal{O}^{c\times c}$ such that the CMDS output satisfies $$\|\hat{U}\hat{S}^{1/2}-US^{1/2}R\|_F\leq C\frac{\log(n)}{\sqrt{n}}.$$ In particular, we have
$$\sum_{i=1}^{m} \|\hat{\psi}(t_i)-R\psi(t_i)\|^2\leq C\frac{\log^2(n)}{n}.$$
\end{corollary}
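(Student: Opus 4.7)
The plan is to derive this as an essentially immediate consequence of Theorem \ref{thm:cmds} by exploiting the structural hypothesis that $\mathcal{D}_{\varphi}=\mathcal{D}_{\psi}$ is an exact Euclidean distance matrix of embedding dimension $c$. The key algebraic fact I would invoke is the classical characterization due to Schoenberg: a symmetric matrix with zero diagonal is an $m_T\times m_T$ Euclidean distance matrix realizable in $\mathbb{R}^c$ if and only if the doubly-centered matrix $-\tfrac12 P\mathcal{D}_{\psi}^{(2)} P^\top$ is positive semidefinite of rank at most $c$. Since the optimal mirror $\psi$ realizes this distance matrix exactly in $\mathbb{R}^c$ (and the embedded configuration spans $c$ dimensions by the hypothesis that $S_{i,i}>0$ for $1\le i\le c$ carried over from Theorem \ref{thm:cmds}), the rank is precisely $c$.

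First I would note that this rank condition forces $\lambda_i\bigl(-\tfrac12 P\mathcal{D}_{\varphi}^{(2)}P^\top\bigr)=0$ for every $i>c$, so the tail sum $\sum_{i=c+1}^{m_T}|\lambda_i(-\tfrac12 P\mathcal{D}_{\varphi}^{(2)}P^\top)|$ appearing in the bounds of Theorem \ref{thm:cmds} vanishes identically. Substituting this into the inequality
\[
\|\hat{U}\hat{S}^{1/2}-US^{1/2}R\|_F\leq \frac{\tilde{C}}{|\lambda_c(P\mathcal{D}_{\varphi}^{(2)} P^\top)|}\left(\frac{m_T\log(n)}{\sqrt{n}}+\sum_{i=c+1}^{m_T}\bigl|\lambda_i(P \mathcal{D}_{\varphi}^{(2)} P^\top)\bigr|\right)
\]
immediately collapses the right-hand side to $C m_T\log(n)/\bigl(\sqrt{n}\,\lambda_c(-P\mathcal{D}_{\psi}^{(2)}P^\top)\bigr)$, after absorbing the factor of $1/2$ into the constant $C$ and using that $|\lambda_c(P\mathcal{D}_{\varphi}^{(2)}P^\top)|=\lambda_c(-P\mathcal{D}_{\psi}^{(2)}P^\top)$ by positive semidefiniteness of the doubly-centered matrix together with $\mathcal{D}_{\varphi}=\mathcal{D}_{\psi}$. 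This gives the first displayed inequality of the corollary.

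For the second inequality, I would square the Frobenius-norm bound and observe that rows of $\hat{U}\hat{S}^{1/2}-US^{1/2}R$ are precisely $\hat{\psi}(t_i)-R\psi(t_i)$ by the definition of the finite-sample mirror recalled after Corollary \ref{cor:approximationmatrices}, so
\[
\sum_{i=1}^{m_T}\|\hat{\psi}(t_i)-R\psi(t_i)\|^2=\|\hat{U}\hat{S}^{1/2}-US^{1/2}R\|_F^2\le C^2\,\frac{m_T^2\log^2(n)}{n\,\lambda_c^2(-P\mathcal{D}_{\psi}^{(2)}P^\top)},
\]
and relabeling $C^2$ as $C$ gives the stated bound. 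The ``with overwhelming probability'' qualifier transfers verbatim from Theorem \ref{thm:cmds}, which in turn rests on Theorem \ref{thm:approximation}.

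There is essentially no hard step here; the only place that requires mild care is verifying that $\lambda_c(-P\mathcal{D}_{\psi}^{(2)}P^\top)>0$ so that division is justified, and that the constant in the denominator of Theorem \ref{thm:cmds} (written with an absolute value around $\lambda_c(P\mathcal{D}_{\varphi}^{(2)}P^\top)$) matches the sign convention used in the statement of the corollary. Both points follow from positive semidefiniteness of the doubly-centered matrix combined with the nondegeneracy assumption $S_{c,c}>0$ inherited from Theorem \ref{thm:cmds}.
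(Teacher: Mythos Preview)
Your proposal is correct and matches the paper's approach: the paper presents this result as an immediate corollary of Theorem~\ref{thm:cmds}, obtained precisely by observing that when $\mathcal{D}_{\varphi}=\mathcal{D}_{\psi}$ is a Euclidean distance matrix of dimension $c$, the doubly-centered matrix $-\tfrac12 P\mathcal{D}_{\varphi}^{(2)}P^\top$ has rank $c$ so the tail eigenvalue sum vanishes. No separate proof is given in the paper beyond this remark.
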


Suppose that $\psi$ is Lipschitz continuous with constant $L$ and $\varphi$ has realizability constant $B$. If we further assume that there exists a constant $A$ such that $d_{MV}(\varphi(t),\varphi(s))\leq A$ and  $\|\psi(t)-\psi(s)\|_2\leq A$ for all $s,t\in[0,T]$, then we can bound the sum of the tail eigenvalues of $\mathcal{D}_{\varphi}^{(2)}$, turning the approximate Lipschitz Euclidean realizability assumption into an eigenvalue bound. 
Note that we can can always choose $A=(2L+B)T$ from the realizability assumptions, but in certain cases, $A$ may be smaller, and in particular, not grow linearly with T. While this corollary is stated for the Lipschitz case, a version of it may be formulated for the $\alpha$-H\"{o}lder case as well.

\begin{corollary}
\label{cor:uniform}
Suppose $\mathcal{D}_{\varphi}$ is approximately Lipschitz Euclidean $c$-realizable with realizability constant $B$. Suppose $d_{MV}(\varphi(t),\varphi(s)), \|\psi(t)-\psi(s)\|\leq A$ for all $s,t\in[0,T]$. Suppose that $t_i=Ti/m$ for $i=1,\ldots,m$. Then 
$$\|\mathcal{D}_{\varphi}^{(2)}-\mathcal{D}_{\psi}^{(2)}\|_F\leq 2ABT\sqrt{(m^2-1)/6}\leq 0.82 ABT m.$$

For $m$ fixed, there is a constant $C=C(m)$ such that with high probability, there is a rotation matrix $R\in\mathcal{O}^{c\times c}$ such that the CMDS output satisfies
$$\|\hat{U}\hat{S}^{1/2}-US^{1/2}R\|_F \leq C\left(\frac{\log(n)}{\sqrt{n}}+0.82ABT\right).$$

\end{corollary}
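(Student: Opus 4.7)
The plan is to establish the two displayed inequalities in sequence. The first is a direct entrywise computation from the Lipschitz realizability hypothesis, and the second follows by inserting this estimate into Theorem~\ref{thm:cmds} after controlling the tail eigenvalue sum of $P\mathcal{D}_{\varphi}^{(2)}P^\top$.

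For the Frobenius bound on $\mathcal{D}_{\varphi}^{(2)} - \mathcal{D}_{\psi}^{(2)}$, I would factor each entry via $a^2 - b^2 = (a-b)(a+b)$ with $a = d_{MV}(\varphi(t_i),\varphi(t_j))$ and $b = \|\psi(t_i)-\psi(t_j)\|$. The approximate Lipschitz $c$-realizability gives $|a-b| \leq B|t_i - t_j|$ and the uniform bound assumption gives $a+b \leq 2A$, so each entry of $\mathcal{D}_{\varphi}^{(2)} - \mathcal{D}_{\psi}^{(2)}$ is bounded in absolute value by $2AB|t_i - t_j|$. Squaring entrywise and summing,
\begin{equation*}
\|\mathcal{D}_{\varphi}^{(2)} - \mathcal{D}_{\psi}^{(2)}\|_F^2 \leq 4A^2B^2 \sum_{i,j=1}^{m_T}(t_i-t_j)^2.
\end{equation*}
For the uniform grid $t_i = Ti/m_T$, the identity $\sum_{i,j=1}^{m_T}(i-j)^2 = m_T^2(m_T^2-1)/6$ (immediate from the standard formulas for sums of squares and cubes) gives $\sum_{i,j}(t_i-t_j)^2 = T^2(m_T^2-1)/6$, and taking square roots produces the first stated bound; the numerical inequality $1/\sqrt{6} < 0.41$ yields the relaxed form $0.82 ABT m_T$.

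For the CMDS output bound, I would apply Theorem~\ref{thm:cmds} directly and control the tail eigenvalue sum $\sum_{i=c+1}^{m_T}|\lambda_i(P\mathcal{D}_{\varphi}^{(2)}P^\top)|$ using the $F$-norm estimate just obtained. The key structural observation is that $P\mathcal{D}_{\psi}^{(2)}P^\top$ has rank at most $c$: expanding $\mathcal{D}_{\psi}^{(2)} = \mathbf{v}\mathbf{1}^\top + \mathbf{1}\mathbf{v}^\top - 2\Psi\Psi^\top$ with $\Psi \in \RR^{m_T \times c}$ the matrix of mirror coordinates and $\mathbf{v}_i = \|\psi(t_i)\|^2$, and using $P\mathbf{1}=\mathbf{0}$, gives $P\mathcal{D}_{\psi}^{(2)}P^\top = -2P\Psi\Psi^\top P^\top$. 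The Hoffman-Wielandt inequality applied to the symmetric pair $(P\mathcal{D}_{\varphi}^{(2)}P^\top, P\mathcal{D}_{\psi}^{(2)}P^\top)$ then yields $\sum_{i=c+1}^{m_T}\lambda_i^2(P\mathcal{D}_{\varphi}^{(2)}P^\top) \leq \|\mathcal{D}_{\varphi}^{(2)} - \mathcal{D}_{\psi}^{(2)}\|_F^2$ (using $\|P\|_2 = 1$), and a Cauchy-Schwarz step converts this into the desired bound on $\sum_{i=c+1}^{m_T}|\lambda_i|$. Substituting the first part of the corollary and combining with the $m_T\log(n)/\sqrt{n}$ term already present in Theorem~\ref{thm:cmds} yields the stated inequality after collecting constants into $C$.

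The main obstacle is not conceptually deep---the result is genuinely a corollary of Theorem~\ref{thm:cmds}---but the bookkeeping of $m_T$-dependent factors requires care. The crucial ingredient is the rank-at-most-$c$ structure of $P\mathcal{D}_{\psi}^{(2)}P^\top$ inherited from Euclidean $c$-realizability; without it, the tail eigenvalue sum could not be reduced to a Frobenius comparison between $\mathcal{D}_{\varphi}^{(2)}$ and $\mathcal{D}_{\psi}^{(2)}$, and the Lipschitz hypothesis would not propagate cleanly through Theorem~\ref{thm:cmds} to the spectral estimate stated here.
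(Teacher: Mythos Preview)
Your argument for the Frobenius bound $\|\mathcal{D}_{\varphi}^{(2)}-\mathcal{D}_{\psi}^{(2)}\|_F\leq 0.82ABTm_T$ is exactly the paper's: factor $a^2-b^2=(a-b)(a+b)$, apply the realizability bound and the uniform bound $A$, then evaluate $\sum_{i,j}(i-j)^2=m_T^2(m_T^2-1)/6$ on the regular grid. The paper's written proof stops here and leaves the CMDS bound implicit.

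For the second inequality your strategy is sound, but the Cauchy--Schwarz step costs you a factor you do not need. Going from $\bigl(\sum_{i>c}\lambda_i^2\bigr)^{1/2}$ to $\sum_{i>c}|\lambda_i|$ introduces $\sqrt{m_T-c}$, so plugging into the $\ell_1$ tail-sum form of Theorem~\ref{thm:cmds} yields
\[
\frac{C}{\lambda_c}\Bigl(\tfrac{m_T\log n}{\sqrt{n}}+\sqrt{m_T}\cdot 0.82ABTm_T\Bigr),
\]
which is off by $\sqrt{m_T}$ from the stated corollary. The fix is to use the intermediate Frobenius form in the \emph{proof} of Theorem~\ref{thm:cmds} rather than its stated conclusion: there the bound reads
\[
\|\hat{U}-UR\|_F\leq \frac{2^{3/2}}{\lambda_c}\Bigl(\|P(\widehat{\mathcal{D}}_{\varphi}^{(2)}-\mathcal{D}_{\varphi}^{(2)})P^\top\|_F+\|P(\mathcal{D}_{\varphi}^{(2)}-\mathcal{D}_{\psi}^{(2)})P^\top\|_F\Bigr),
\]
and you can substitute $\|\mathcal{D}_{\varphi}^{(2)}-\mathcal{D}_{\psi}^{(2)}\|_F\leq 0.82ABTm_T$ directly (here $\psi$ is the CMDS output, but Eckart--Young for the Frobenius norm says this is no larger than the distance to the mirror's rank-$c$ matrix $-2P\Psi\Psi^\top P^\top$). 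Your Hoffman--Wielandt observation already delivers the $\ell_2$ tail-eigenvalue bound $\bigl(\sum_{i>c}\lambda_i^2\bigr)^{1/2}\leq\|\mathcal{D}_{\varphi}^{(2)}-\mathcal{D}_{\psi}^{(2)}\|_F$, which is exactly this Frobenius quantity; just skip the conversion to $\ell_1$.
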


\begin{example}
\label{ex:lipbnds}
In Example~\ref{ex:integratedBM}, which is approximately Lipschitz Euclidean realizable, we have $$B\leq \frac{\sigma^2T}{a\|v\|+\sqrt{a\|v\|+\sigma^2T}},$$ and $A\leq T\sqrt{a^2\|v\|^2+\sigma^2T}.$ 
\end{example}

\begin{remark}
    The relationship between the true and estimated network features from Figure~\ref{fig:commuting} is equally appropriate for certain changes to the distance metric. For example, consider a latent position process with $\varphi(t)=c(t)\varphi(0)$, $c(t)\in[0,1]$, corresponding to a global change in the density of the network, but one that leaves the community structure unchanged. 
    
    Using the adjacency spectral embedding with the eigenvectors scaled by the eigenvalues will detect these global transformations in sparsity, while using unit eigenvectors of the adjacency spectral embedding (unscaled by their respective eigenvalues) will ignore changes of this type, instead focusing only on divergences in community structure. These different computations of network dissimilarity will result in distinct mirrors, highlighting distinct changes in the networks over time. \hfill $\square$
\end{remark}
Together, these theorems ensure that time-dependent underlying low-dimensional structure associated to network evolution can be consistently recovered. In what follows, we will see how this methodology can be employed in real and synthetic data to reveal important structural features and potential anomalies in network time series.

\section{Experiments}
\label{sec:experiments}
\subsection{Organizational network data and pandemic-induced shifts}\label{sec:org_sci_data}
We start with a discussion of the visualization of the communication network in Figure~\ref{fig:org_sci_example} and the output of our mirror-based analysis in Figure~\ref{fig:org_sci_us}. We consider a time series of weighted communication networks, arising from the email communications between 32277 entities in a large organization, with one network generated each month from January 2019 to December 2020, a period of 24 months. This data was studied through the lens of modularity in \cite{zuzul2021dynamic}. 

To generate the visualization in Figure~\ref{fig:org_sci_example}, we first cluster the graph $G_1$ using Leiden clustering and then apply Node2Vec \citep{grover2016node2vec} to each graph $G_t$ to obtain embeddings $\hat{\bX}_t \in \RR^{n\times 33}$. Applying the Uniform Manifold Approximation and Projection (UMAP) algorithm of \cite{mcinnes2018umap} to the $\hat{\bX}_t$ matrices, we obtain a layout with points $\hat{\bX}_t'\in \RR^{n\times 2}$, colored using the labels from the Leiden clustering.

To obtain our mirror estimates, we again apply Leiden clustering \citep{traag2019louvain} to the January 2019 network, obtaining 33 clusters that we retain throughout the two year period. We make use of this clustering to compute the Graph Encoder Embedding (GEE) of \cite{shen_GEE} for each time, which produces spectrally-derived estimates of invertible transformations of the original latent positions. For each time $t=1, \cdots, 24$, we obtain a matrix $\hat{\bZ}_t\in\RR^{32277\times 33}$, each row of which provides an estimate of these transformed latent positions. Constructing the distance matrix $\widehat{\mathcal{D}}_{\varphi}=[\hat{d}_{MV}(\hat{\bZ}_t,\hat{\bZ}_{t'})]\in\RR^{24\times24}$, we apply CMDS to obtain the estimated curve $\hat{\psi}$ shown in the left panel of Figure~\ref{fig:org_sci_us}, where the choice of dimension $c=2$ is based on the scree plot of $\widehat{\mathcal{D}}_{\varphi}$. The nonlinear dimensionality reduction technique ISOMAP \citep{isomap_science}, which relies on a spectral decomposition of geodesic distances, can be applied to these points to extract an estimated 1-dimensional curve, which we plot against time in the right panel of Figure~\ref{fig:org_sci_us}. Since the ISOMAP embedding generates points whose Euclidean distances approximate the geodesic distances between points on the mirror, larger changes in the $y$-axis of this figure correspond to significant changes in the networks. This one-dimensional curve exhibits some changes from the previous trend in Spring 2020 and a much sharper qualitative transformation in July 2020. What is striking is that both these qualitative shifts correspond to policy changes: in Spring 2020, there was an initial shift in operations, widely regarded at the time as temporary. In mid-summer 2020, nearly the peak of the second wave of of COVID-19, it was much clearer that these organizational shifts were likely permanent, or at least significantly longer-lived.
\begin{figure}
    \centering
    \includegraphics[width=0.40\textwidth]{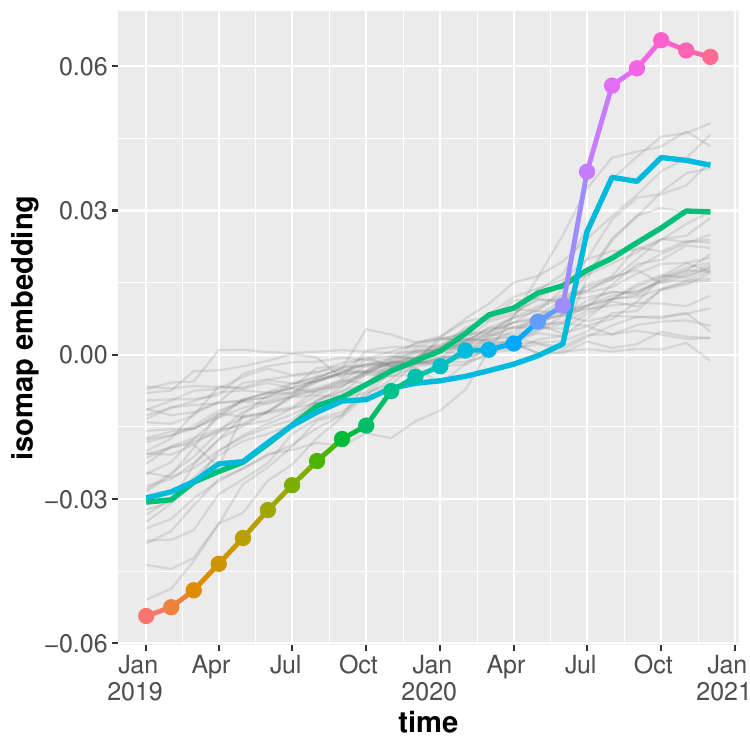}
\subfloat{\includegraphics[width=0.40\textwidth]{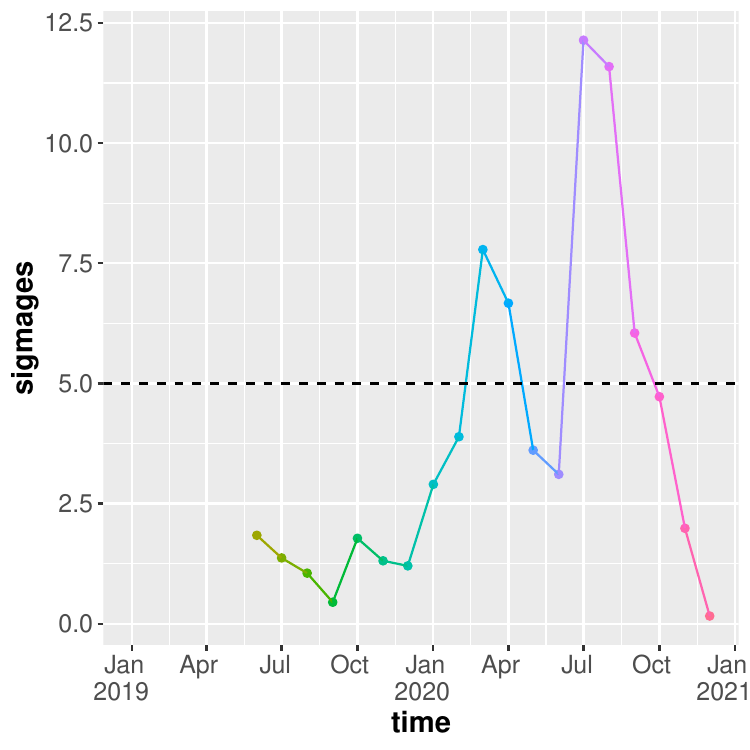}
\includegraphics[width=0.40\textwidth]{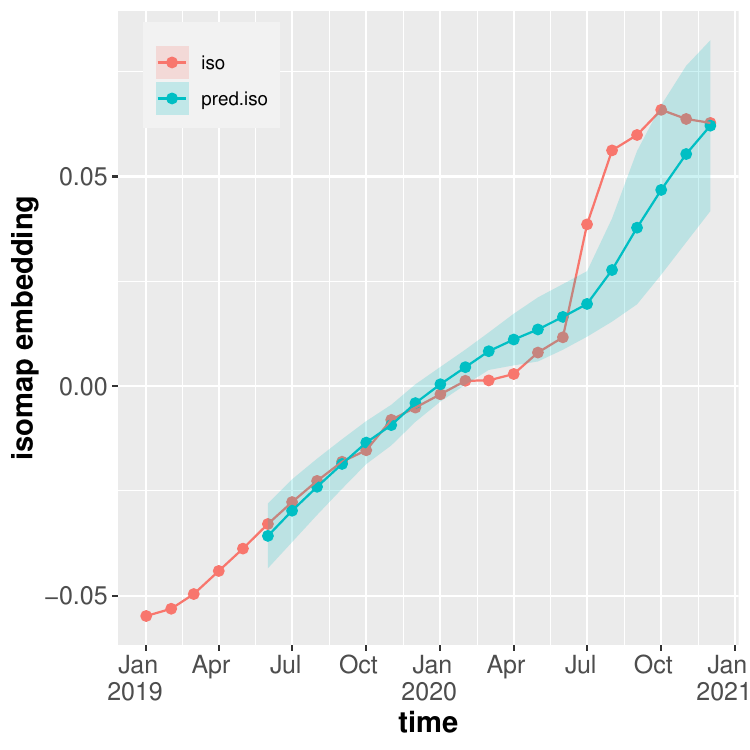}}
    \caption{
Differential pandemic effect on 33 subcommunities of the larger organization and change point detection.
Top: For each subcommunity, we apply our methods to obtain an ISOMAP representation (single grey curve).
Green and blue curves correspond to communities highlighted in Figure~\ref{fig:org_sci_example}
which exhibit different changes over the time period.
Overall network ISOMAP embedding from Figure~\ref{fig:org_sci_us} is overlaid for context.
Bottom left panel: Comparing ISOMAP embedding at a given time to mean of previous 5 months' embeddings, measured in terms of their standard deviation. Plot of sigmages indicates clear outliers in March-April of 2020, and again in July-September. Bottom right panel: Observed ISOMAP embedding along with a running confidence interval for the predicted value of the ISOMAP embedding generated from linear regression applied to the embeddings for the previous 5 months, with width 5 standard deviations.}
    \label{fig:subcommunities}
\end{figure}
In Figure~\ref{fig:subcommunities}, the top panel plots the result of our methods applied to the induced subgraphs corresponding to each of the 33 communities; these are represented by the grey trajectories. The trajectories of two subcommunities have been highlighted in the top plot: the green curve shows a constant rate of change throughout the two-year period, and does not exhibit a noticeable pandemic effect. The blue curve, on the other hand, shows a significant flattening in early 2020, followed by rapid changes in summer. Thus, we can see a differential effect of the pandemic on different work groups within the organization. In Figure~\ref{fig:subcommunities}, both bottom panels show methods for identifying changepoints over the 24 months, with consistent results. We start by generating the ISOMAP embedding $\iota$ of $\hat{\psi}$, yielding $\iota_t\in \RR$ for $t=1,\ldots,24$. 
In the bottom left panel, for each time starting in June 2019, we plot the \emph{sigmage} (see \cite{good1992bayes}) of its ISOMAP embedding relative to the previous 5 months. That is, we measure the distance of the ISOMAP embedding to the mean of the previous 5 months' embeddings, relative to the standard deviation of those embeddings: letting $\mu_t=\frac{1}{5}\sum_{i=1}^5 \iota_{t-i}$ for each time $t>5$, the sigmage $s_t$ is given by
$$ s_t = \frac{|\iota_t - \mu_t|}{\sqrt{\frac{1}{4}\sum_{i=1}^5(\iota_t-\mu_t)^2}}. $$
Note that since the computation of the sigmages require a window of time-points, we are only able to produce these estimates starting in June 2019. We see apparent outliers in March and April, and again in July-September 2020. The right panel shows the ISOMAP curve with a moving prediction confidence interval of width 5 standard deviations, generated from simple linear regression applied to the previous 5 time points (which is why we again only have an interval starting in June). This method indicates the same set of outliers as the previous one, but allows for some more detailed analysis: In March and April 2020, it appears that the behavior is anomalous because the network \emph{stopped} drifting, while the behavior in the summer of that year is anomalous because it made a significant jump from its previous position.


In Section \ref{sec:realdataviz}, we consider additional visualizations of the organizational communication networks. In Figure~\ref{fig:graphstats} of Section \ref{sec:realdataviz}, we plot a collection of other summary statistics, namely edge counts, maximum degree, median degree, and modularity, for each network over time. As we describe in that section, since this approach considers each network separately, these summary statistics exhibit greater variance than the ISOMAP embedding of the mirror (Figure~\ref{fig:org_sci_us} right panel, or Figure~\ref{fig:subcommunities} bottom right panel), and they do not capture changepoints associated to pandemic policy restrictions. 




\begin{figure}[h]
    \centering
 \includegraphics[width=0.4\textwidth]{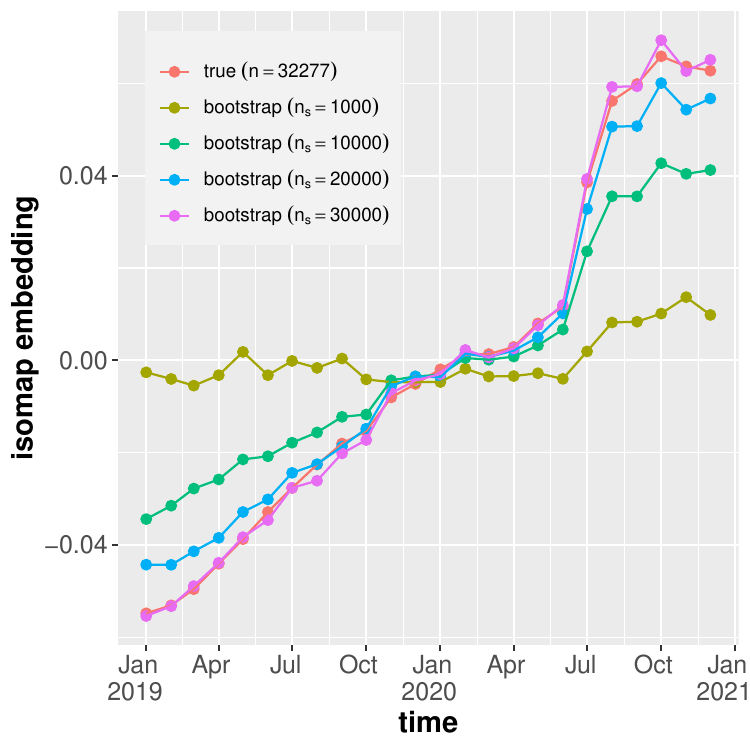}
    \caption{Convergence of bootstrapped estimates. We collect four bootstrapped time series of graphs at various network sizes $n_s$, sampling the latent positions with replacement from the real data of the previous section, and generating the networks as random dot product graphs with these synthetic latent position matrices. ISOMAP on the resulting network embeddings shows that the bootstrapped curve converges to the original ISOMAP curve as the number of samples $n_s$ increases.}
    \label{fig:bootstrap}
\end{figure}

\subsection{Synthetic data and bootstrapping}
\label{sec:bootstrap}

In the previous section, we apply the GEE embedding to obtain estimates $\hat{\bZ}_t$, which are then used for estimates of pairwise distances  $\hat{d}_{MV}(\hat{\bZ}_t,\hat{\bZ}_s)$. Although the GEE differs slightly from the adjacency spectral embedding, it is computationally more tractable and yields similarly useful output. To further illustrate our underlying theory, however, we consider \emph{synthetic data}. That is, we use real data to obtain a distribution from which we may resample. Such a network bootstrap permits us to test our asymptotic results through replicable simulations that are grounded in actual data. To this end, we consider the true latent position distribution at each time to be equally likely to be any row of the GEE-obtained estimates from the real data, $\hat{\bZ}_t\in\RR^{32277\times 33}$, for $t=1,\ldots,24.$ Given a sample size $n_s$, for each time, we sample these rows uniformly and with replacement to get a matrix of latent positions $\bX_t\in\RR^{n_s\times 33}$. We treat this matrix as the generating latent position matrix for independent adjacency matrices  $A_t\sim\mathrm{RDPG}(\bX_t)$. Note that if for sample $i$, we choose row $j$ of $\hat{\bZ}_1$ at time $t=1$, then the same row $j$ of $\hat{\bZ}_t$ will be used for all times $t=1,\ldots,24$ for that sample, so that the original dependence structure is preserved across time. We may now apply the methods described in our theorems, namely ASE of the adjacency matrices followed by Procrustes alignment, to obtain the estimates $\hat{\bX}_t$, along with the associated distance estimates. In Figure~\ref{fig:bootstrap}, we see that ISOMAP applied to the CMDS embedding of the bootstrapped data converges to the original ISOMAP curve, as predicted by our theorems.

\begin{figure}[h]
    \centering
    \includegraphics[width=0.4\textwidth]{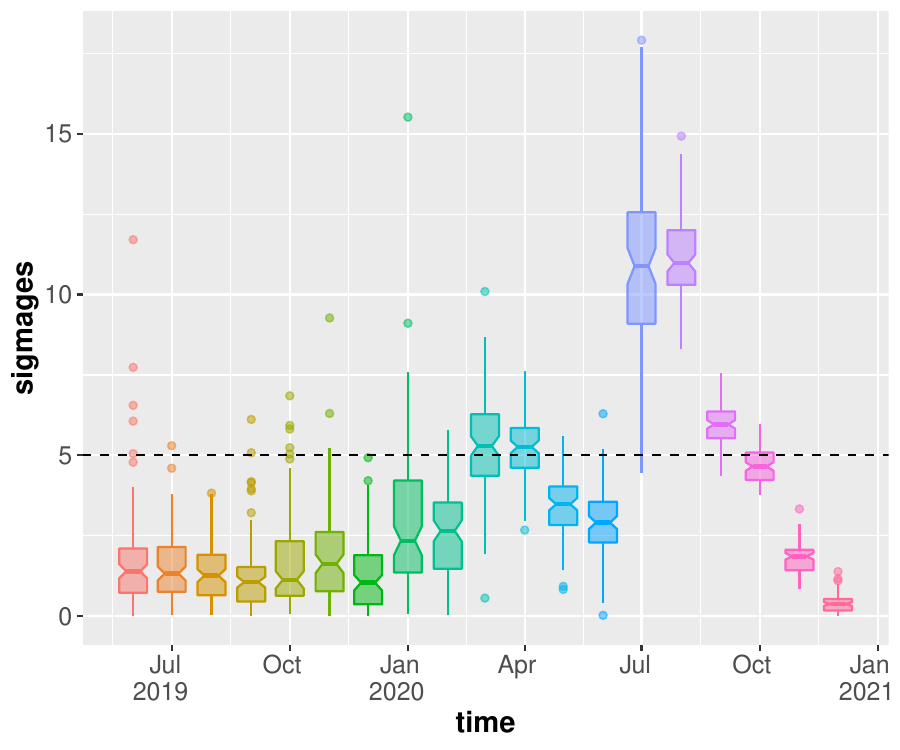}
    \caption{Pandemic effect recovered from synthetic data. For each of 100 replicates of bootstrapped data, with $n_s=30000$ for each replicate, we repeat procedure in bottom left panel of Figure~\ref{fig:subcommunities}. Sigmages plotted in a box-and-whisker plot. Pandemic effect in summer of 2020 is visible in all but a few replicates; effect in March-April is still identified in the majority of replicates.}
\label{fig:syntheticchangepoint}
\end{figure}

To check whether this procedure demonstrates the pandemic effects, in Figure~\ref{fig:syntheticchangepoint}, we show the sigmages for each month, plotted over 100 replicates of this experiment, with $n_s=30000$ for each replicate. The pandemic effect in summer of 2020 is clearly visible in all but a few replicates, while the effect in March-April is still identified in the majority of replicates. We observe dramatic changes in variance for certain months, over the different replicates: this might indicate the discrepancies between the pandemic effect on different network entities, rendering the final estimate much more sensitive to the sample of rows used to generate the network.

In Section \ref{sec:sbms}, we provide mirror estimates for an evolving stochastic blockmodel with a change in rank from 2 to 1. Figures \ref{fig:sbms-distances} and \ref{fig:sbms-mds1and2} demonstrate the mirror estimation procedure and model misspecification in the embedding dimension, specifically the accuracy of the first dimension of the estimated distance matrix and the noise in the second dimension at and after the collapse to a rank 1 model.

Code for these results and additional simulations can be found at\\
\mbox{\url{https://www.cis.jhu.edu/~parky/IsoMirror/dynamics.html}.}

\section{Discussion}
\label{sec:discussion}
To effectively model time series of networks, it is natural to consider network evolution governed by underlying low-dimensional dynamics. Here, we examine latent position networks in which the vertex latent positions follow a stochastic process known as a latent position process (LPP). Under mild conditions, we can associate to the LPP certain geometric structure, and understanding how that structure changes with time allows us to identify transformations in network behavior across multiple scales. To make this precise, we define the maximum directional variation norm and metric on the space of random latent positions. We describe notions of Euclidean realizability and Euclidean mirrors for this metric and process, characterizing how closely this metric can be approximated by a Euclidean distance. 
Of course, the latent position process is typically unobserved; what we have instead is a time series of networks from which these latent positions must be estimated. One of our key results is that the pairwise dissimilarity matrix of maximum directional variation distances between latent positions $X_t$ and $X_{t'}$ at pairs of time points can be consistently estimated by spectrally embedding the network adjacencies at these different pairs of times and computing spectral norm distances between these embeddings. When the latent position process is such that the maximum directional variation metric between any pair of latent positions $X_t, X_{t'}$ is approximately Euclidean realizable, we find that classical multidimensional scaling applied to the estimated distances gives us an inferentially valuable low-dimensional representation of network dissimilarities across time. Further dimension reduction techniques, such as ISOMAP, can further clarify changes in network dynamics. To this last point, ISOMAP is a manifold-learning algorithm; detailed analysis of its effect on embeddings of estimated pairwise network distances can bring us closer to provable guarantees for change-point detection. More broadly, the interplay between the probabilistic structure of the underlying latent position process and the geometric structure of the Euclidean mirror is a key component of the estimated Euclidean representation of relationships between networks across time. 

We consider two estimates for the maximum directional variation distance $d_{MV}$, namely the spectral norm applied to the GEE estimate, and the $\hat{d}_{MV}$ estimated distance between the adjacency spectral embeddings. However, these are far from the only options, and it is an open question whether $d_{MV}$ or another metric on the space of random variables is best for downstream inference tasks under certain model assumptions. It is also an open question whether there is a better estimate for the distance $d_{MV}$ itself, either in terms of computational complexity or statistical properties. Of particular interest is the spectral norm distance applied to the omnibus embeddings \citep{levin_omni_2017} for the adjacency matrices: this likely converges to another distance on the space of random variables, potentially highlighting different features in the final CMDS embedding. Results quantifying the distribution of the errors in the CMDS embedding are key to formulating hypothesis tests for changepoint detection. The perspective described in Figure~\ref{fig:commuting}, which connects distance metrics for generative processes of networks to their estimates, translating manifold geometry into Euclidean geometry, is a useful contribution to time series analysis for networks. It provides mathematical formalism for network dynamics; asymptotic properties of estimates of manifold structure; and conditions for the representation of time-varying networks in low-dimensional space. Latent position networks are interpretable, estimable, and flexible enough to capture important features of real-world network time series. As such, this canonical framework invites and accommodates future approaches to joint network inference. 



\bibliographystyle{chicago}
\bibliography{TSG_dynamics.bib}

\appendix
\newpage
\section{Supplementary Material: Proofs, supporting results, and additional simulations}\label{sec:proofs}  
\subsection{Proofs and supporting results for Section~\ref{sec:model}}
\label{sec:proofs2}

\begin{lemma}
\label{lemma:metric}
The function $d_{MV}(X,Y)$ is a metric on the space of random variables, up to the equivalence relation where $X\sim Y$ if there is some $W\in\mathcal{O}^{d\times d}$ such that $X=WY$ almost surely.
\end{lemma}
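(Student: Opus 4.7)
The plan is to verify the four defining properties of a metric on the quotient space: (i) non-negativity, (ii) symmetry, (iii) identity of indiscernibles (modulo the equivalence), and (iv) the triangle inequality. The foundation for all of these is a pair of elementary facts about the maximum directional variation norm $\|\cdot\|_{MV}$ itself: first, that it satisfies the triangle inequality on $L^2(\Omega;\RR^d)$, and second, that it is invariant under left multiplication by orthogonal matrices. The first follows by writing $\|Z\|_{MV} = \sup_{\|u\|=1} \EE[\langle Z,u\rangle^2]^{1/2}$, which displays $\|\cdot\|_{MV}$ as a supremum over $u$ of $L^2(\Omega)$ seminorms $Z \mapsto \|\langle Z,u\rangle\|_{L^2}$, each of which satisfies the triangle inequality. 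The second follows from $\|W Z\|_{MV}^2 = \|W\,\EE[ZZ^\top]\,W^\top\|_2 = \|\EE[ZZ^\top]\|_2$, since the spectral norm is invariant under orthogonal conjugation. I would also note that $\|\cdot\|_{MV}$ is even, and that the map $W \mapsto \|X - WY\|_{MV}$ is continuous on the compact set $\mathcal{O}^{d\times d}$, so the minimum is attained.

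With these in hand, non-negativity of $d_{MV}$ is immediate. For symmetry, substituting $W \mapsto W^\top$ in the infimum gives $d_{MV}(X,Y) = \min_W \|X - W^\top Y\|_{MV}$, and applying the orthogonal invariance with $W$ on the left together with evenness yields $\|X - W^\top Y\|_{MV} = \|WX - Y\|_{MV} = \|Y - WX\|_{MV}$, so that $d_{MV}(X,Y) = d_{MV}(Y,X)$. For identity of indiscernibles, if $d_{MV}(X,Y) = 0$ then compactness of $\mathcal{O}^{d\times d}$ produces a minimizer $W$ with $\|\EE[(X-WY)(X-WY)^\top]\|_2 = 0$; since this matrix is positive semidefinite, its trace $\EE\|X-WY\|^2$ vanishes, forcing $X = WY$ almost surely, i.e.\ $X \sim Y$. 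The converse is clear by choosing that same $W$ in the infimum.

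The main substantive step is the triangle inequality, and this is where the orthogonal invariance plays the decisive role. Given representatives $X$, $Y$, $Z$, let $W_1$ achieve $d_{MV}(X,Y) = \|X - W_1 Y\|_{MV}$ and $W_2$ achieve $d_{MV}(Y,Z) = \|Y - W_2 Z\|_{MV}$. Take $W = W_1 W_2 \in \mathcal{O}^{d\times d}$ and apply the norm triangle inequality together with orthogonal invariance:
\begin{equation*}
\|X - W_1 W_2 Z\|_{MV} \le \|X - W_1 Y\|_{MV} + \|W_1(Y - W_2 Z)\|_{MV} = \|X - W_1 Y\|_{MV} + \|Y - W_2 Z\|_{MV}.
\end{equation*}
Since $d_{MV}(X,Z) \le \|X - W_1 W_2 Z\|_{MV}$, we obtain $d_{MV}(X,Z) \le d_{MV}(X,Y) + d_{MV}(Y,Z)$.

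Finally, one must check that $d_{MV}$ descends to the quotient, i.e.\ that its value does not depend on the choice of representative from each equivalence class; this is a direct consequence of the orthogonal invariance, since replacing $X$ by $W' X$ or $Y$ by $W'' Y$ just reparameterizes the infimum over $W$. I expect the only subtlety is the attainment of the minimum, which is why I would explicitly invoke compactness of $\mathcal{O}^{d\times d}$ at the outset; the rest reduces to bookkeeping with the two properties of $\|\cdot\|_{MV}$ established in the first paragraph.
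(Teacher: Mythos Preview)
Your proof is correct and follows essentially the same approach as the paper: both rely on the triangle inequality for $\|\cdot\|_{MV}$ (via Cauchy--Schwarz in $L^2(\Omega)$) together with its invariance under orthogonal transformations to handle the minimization over $W$. Your presentation is slightly more modular in that you isolate these two properties of $\|\cdot\|_{MV}$ at the outset and then derive the metric axioms as corollaries, whereas the paper carries out the Cauchy--Schwarz computation inline within the triangle-inequality argument; the mathematical content is the same.
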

\begin{proof}
Recall that $d_{MV}(X,Y)$ is defined as
$$d_{MV}(X,Y)=\min_{W}\|\EE[(X-WY)(X-WY)^\top]\|_2^{1/2}=\min_W\|\EE[(W^\top X-Y)(W^\top X-Y)^\top]\|_2^{1/2}.$$
Clearly, this is symmetric and nonnegative. The triangle inequality holds, since for any $Q\in\mathcal{O}^{d\times d}$ and $Z\in L^2(\Omega)$, we have
\begin{align*}
d_{MV}(X,Y)^2&=\min_W \max_u u^\top\EE[(X-QZ+QZ-WY)(X-QZ+QZ-WY)^\top]u\\
&=\min_W \max_u\EE[\langle u,X-QZ\rangle^2+2\langle u,X-QZ\rangle\langle u,QZ-WY\rangle+\langle u,QZ-WY\rangle^2]\\
&\leq \min_W \max_u (\EE[\langle u,X-QZ\rangle^2]^{1/2}+\EE[\langle u,QZ-WY\rangle^2]^{1/2})^{2},
\end{align*}
by the Cauchy-Schwartz inequality applied to the $L^2(\Omega)$-inner product. This is further bounded as
\begin{align*}
d_{MV}(X,Y)&\leq \min_W \left(\max_u \EE[\langle u,X-QZ\rangle^2]^{1/2}\right)+\left(\max_v \EE[\langle v,QZ-WY\rangle^2]^{1/2}\right)\\
&=\left(\max_u \EE[\langle u,X-QZ\rangle^2]^{1/2}\right)+\min_W\left(\max_v \EE[\langle v,Z-(Q^\top W)Y\rangle^2]^{1/2}\right).
\end{align*}
Since $\mathcal{O}^{d\times d}=Q^\top \mathcal{O}^{d\times d}$ when $Q\in \mathcal{O}^{d\times d}$, the latter term is just $d_{MV}(Z,Y)$. Since this upper bound holds for any $Q\in\mathcal{O}^{d\times d}$, it must also hold for the minimizer.

Now suppose that $d_{MV}(X,Y)=0$. Since the spectral norm is a norm, this tells us that for the $W\in\mathcal{O}^{d\times d}$ achieving the minimum, $\EE[(X-WY)(X-WY)^\top]=0$. Since $(X-WY)(X-WY)^\top$ is positive semidefinite for every $\omega\in\Omega$, this implies that $X=WY$ almost surely.
\end{proof}

{\bf Proof of Theorem~\ref{thm:optimal} Existence}. Given that $\varphi$ is approximately Euclidean $c$-realizable, there are some $\alpha,C>0$ and $\alpha$-H\"{o}lder mirror $\psi$ satisfying the approximate realizability bound. Consider the set of mirrors $\mathcal{S}(c,\alpha,C)$ whose elements are functions $\psi\in\mathcal{C}^{\alpha}([0,T],\RR^c)$ satisfying two requirements: first, the realizability bound with constant $C$, and second, $\int_0^T \psi(t)\,\mathrm{d}t=0$. 

This class is equicontinuous and uniformly bounded. Indeed, because $\varphi$ is approximately $\alpha$-H\"{o}lder $c$-realizable, we derive that 
$d_{MV}(\varphi(t),\varphi(s))\leq L|t-s|^{\alpha}$, and the realizability bound implies that $\|\psi(t)-\psi(s)\|\leq (C+L)|t-s|^{\alpha}$ for all $s,t\in[0,T]$, which guarantees equicontinuity. Since $\int_0^T \psi(t) \mathrm{d}t=0$ and that $\psi$ all have a common $\alpha$-H\"older constant $C+L$, we derive that for each $t \in [0, T]$, there exists $M_t$ with
$$\|\psi(t)\| \leq M_t\leq (L+C)T^\alpha$$
for all $\psi$, guaranteeing uniform boundedness.

Consider any sequence of $\mathbb{R}^c$-valued functions $\psi_n\in\mathcal{S}(c,\alpha,C)$ converging uniformly to $\psi\in\mathcal{C}([0,T],\RR^n)$. For any $s,t\in[0,T]$, uniform convergence guarantees that 
\begin{align*}
\|\psi(t)-\psi(s)\|&=\lim_{n\rightarrow\infty}\|\psi_n(t)-\psi_n(s)\|\leq (C+L)|t-s|^{\alpha}\\
|d_{MV}(\varphi(t),\varphi(s))-\|\psi(t)-\psi(s)\||&=\lim_{n\rightarrow\infty}|d_{MV}(\varphi(t),\varphi(s))-\|\psi_n(t)-\psi_n(s)\||\leq C|t-s|^{\alpha},
\end{align*}
establishing that $\psi\in \mathcal{S}(c,\alpha,C)$ as well. Thus $\mathcal{S}(c,\alpha,C)$ is a closed subset of $\mathcal{C}([0,T],\RR^c)$, and by the Arzel\`a-Ascoli Theorem, it is compact. Since $\mathcal{L}(\psi)$ is a continuous functional on $C[0,T]$ and $\mathcal{S}(c, \alpha, C)$ is compact, there exists some $\psi \in \mathcal{S}$ at which the minimum of $\mathcal{L}$ over $\mathcal{S}$ is achieved. \hfill $\square$

To prove Theorem~\ref{thm:unique}, we introduce appropriate definitions and supporting lemmas. Let the kernel $\kappa:[0,T]^2\rightarrow\RR$ be defined as
\begin{align*}
\kappa(s,t)&= -\frac{1}{2}\left[d_{MV}(\varphi(s),\varphi(t)) - \frac{1}{T}\int_0^T d_{MV}(\varphi(u),\varphi(t))\,\mathrm{d}u\right.\\
&\qquad\quad \left.-\frac{1}{T}\int_0^T d_{MV}(\varphi(s),\varphi(v))\,\mathrm{d}v+\frac{1}{T^2}\int_0^T\int_0^T d_{MV}(\varphi(u),\varphi(v))\,\mathrm{d}u\,\mathrm{d}v \right].
\end{align*}
We define the integral operator $I:L^2([0,T])\rightarrow L^2([0,T])$ as
$$
I[f](x) = \frac{1}{T}\int_0^T \kappa(x,y) f(y)\,\mathrm{d}y.
$$
We take the normalized inner product on $L^2([0,T])$, namely $\langle f,g\rangle_{L^2} = \frac{1}{T}\int_0^T f(t)g(t)\,\mathrm{d}t$. It is easy to show that this operator is self-adjoint and Hilbert-Schmidt.

\begin{lemma}
\label{lem:holder}
For any $f\in L^2([0,T])$, $I[f]$ is $\alpha$-H\"older continuous. Letting $L$ denote the $\alpha$-H\"older constant for $\varphi$, any normalized eigenfunction $f$ satisfying $I[f]=\lambda f$, $\lambda>0$ is $\alpha$-H\"older with constant at most $L/\lambda$.
\end{lemma}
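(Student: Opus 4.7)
The plan is to reduce the H\"older continuity of $I[f]$ to the H\"older continuity of the kernel $\kappa$ in a single argument, then derive the eigenfunction bound as an immediate corollary.

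First I would verify that the hypotheses force $\varphi$ itself to be $\alpha$-H\"older with some finite constant $L$ with respect to $d_{MV}$. This follows from combining the $\alpha$-H\"older continuity of the mirror $\psi$ with the approximate realizability bound: by the reverse triangle inequality, $d_{MV}(\varphi(t),\varphi(s)) \leq \|\psi(t)-\psi(s)\| + C|t-s|^{\alpha} \leq (K+C)|t-s|^{\alpha}$, where $K$ is the H\"older constant of $\psi$. Set $L := K+C$.

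Next I would show the pointwise bound $|\kappa(x_1,y)-\kappa(x_2,y)| \leq L|x_1-x_2|^{\alpha}$ uniformly in $y$. Looking at the four terms in the definition of $\kappa$, the two involving $\int_0^T\int_0^T d_{MV}(\varphi(u),\varphi(v))\,\mathrm{d}u\,\mathrm{d}v$ cancel. For each of the remaining differences, apply the reverse triangle inequality for the metric $d_{MV}$:
\begin{equation*}
|d_{MV}(\varphi(x_1),\varphi(y))-d_{MV}(\varphi(x_2),\varphi(y))| \leq d_{MV}(\varphi(x_1),\varphi(x_2)) \leq L|x_1-x_2|^{\alpha},
\end{equation*}
and analogously for the averaged term, where the bound passes through the integral. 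Summing (with the factor $-1/2$ and two surviving terms) gives the claimed pointwise H\"older bound on $\kappa$.

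With this in hand, for any $f\in L^2([0,T])$ and any $x_1,x_2\in[0,T]$, Cauchy-Schwarz under the normalized inner product yields
\begin{equation*}
|I[f](x_1)-I[f](x_2)| \leq \frac{1}{T}\int_0^T|\kappa(x_1,y)-\kappa(x_2,y)|\,|f(y)|\,\mathrm{d}y \leq L|x_1-x_2|^{\alpha}\,\|f\|_{L^2},
\end{equation*}
establishing $\alpha$-H\"older continuity of $I[f]$ with constant $L\|f\|_{L^2}$. For the eigenfunction statement, if $I[f]=\lambda f$ with $\lambda>0$ and $\|f\|_{L^2}=1$, then $f=\lambda^{-1}I[f]$, and dividing the displayed bound by $\lambda$ gives the H\"older constant at most $L/\lambda$.

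I expect no real obstacle here: the whole argument rests on the reverse triangle inequality for $d_{MV}$ (which holds because $d_{MV}$ is a metric on equivalence classes, Lemma~\ref{lemma:metric}) together with Cauchy-Schwarz. The only subtlety worth being careful about is making sure the additive constant terms in $\kappa$ cancel in the difference $\kappa(x_1,y)-\kappa(x_2,y)$, so that only the $x$-dependent terms contribute to the H\"older estimate.
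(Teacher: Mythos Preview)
Your proposal is correct and follows essentially the same route as the paper: establish the pointwise bound $|\kappa(s,y)-\kappa(t,y)|\leq L|t-s|^{\alpha}$ uniformly in $y$, pass through the integral via Cauchy--Schwarz to get the H\"older bound $L\|f\|_{L^2}|t-s|^{\alpha}$ on $I[f]$, and then divide by $\lambda$ for the eigenfunction statement. You are simply more explicit than the paper in two places---deriving the H\"older constant for $\varphi$ from the realizability assumption, and spelling out which terms of $\kappa$ cancel in the difference (note that \emph{two} of the four terms are independent of the first argument and cancel: the double integral and the single integral $\frac{1}{T}\int_0^T d_{MV}(\varphi(u),\varphi(y))\,du$; your writeup only names the former, though your ``two surviving terms'' count is correct).
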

\begin{proof}
We bound the values of $I[f]$ as
\begin{align*}
|I[f](s)-I[f](t)|&=\left| \frac{1}{T}\int_0^T (\kappa(s,y)-\kappa(t,y))f(y)\,\mathrm{d}y\right|\\
&\leq \frac{1}{T}\int_0^T |\kappa(s,y)-\kappa(t,y)||f(y)|\,\mathrm{d}y\\
&\leq \|\kappa(s,\cdot)-\kappa(t,\cdot)\|_{L^2}\|f\|_{L^2}\\
&\leq L\|f\|_{L^2}|t-s|^{\alpha}.
\end{align*}
Here we use the H\"older property of $\varphi$ to see that $|\kappa(s,y)-\kappa(t,y)|\leq L|t-s|^{\alpha}$ for all $y\in[0,T]$, giving $\|\kappa(s,\cdot)-\kappa(t,\cdot)\|_{L^2}\leq L|t-s|^{\alpha}$. When $f$ is a normalized eigenfunction for $I$ with eigenvalue $\lambda$, this gives us 
\begin{align*}
|f(s)-f(t)|&= \frac{1}{\lambda}|I[f](s)-I[f](t)|\\
&\leq \frac{L\|f\|_{L^2}}{\lambda}|t-s|^{\alpha},
\end{align*}
which gives the desired bound since $\|f\|_{L^2}=1$. 
\end{proof}

The following lemma completes the proof of Theorem~\ref{thm:unique}.

\begin{lemma}
\label{lem:eigenexpansion}
Suppose $\varphi$ is exactly $\alpha$-H\"older Euclidean $c$-realizable. Then $I$ is finite rank, and may be written as $$I[f] = \sum_{i=1}^c \lambda_i \langle g_i, f\rangle_{L^2([0,T])} g_i,$$ where $\lambda_i\geq 0, 1\leq i\leq c$. Moreover, if $\psi(t)$ is defined by
$$\psi(t) = \sum_{i=1}^c \sqrt{\lambda_i}g_i(t) {\bf e}_i$$
where ${\bf e}_i$ is the $i$th basis vector in $\mathbb{R}^c$, then $\psi$ is the unique minimizer to the variational problem \ref{eq:var_prob_L} (up to orthogonal transformation), and satisfies $\mathcal{L}(\psi)=0$. 
\end{lemma}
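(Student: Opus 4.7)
The plan is to leverage the exact Euclidean realizability to produce an explicit factorization of the kernel $\kappa$, which makes $I$ manifestly finite rank, and then identify the claimed $\psi$ as the canonical CMDS reconstruction built from the eigendecomposition. First, since $\varphi$ is exactly $\alpha$-H\"older Euclidean $c$-realizable, there exists some $\tilde{\psi}:[0,T]\to\RR^c$ (which, in the square-distance version of $\kappa$ used in CMDS, is $\alpha$-H\"older) with $d_{MV}(\varphi(s),\varphi(t))^2=\|\tilde{\psi}(s)-\tilde{\psi}(t)\|^2$ for all $s,t$. I would define the centered curve $\bar{\psi}(t)=\tilde{\psi}(t)-\frac{1}{T}\int_0^T \tilde{\psi}(u)\,\mathrm{d}u$ and expand
\begin{equation*}
\|\tilde{\psi}(s)-\tilde{\psi}(t)\|^2=\|\tilde{\psi}(s)\|^2+\|\tilde{\psi}(t)\|^2-2\langle \tilde{\psi}(s),\tilde{\psi}(t)\rangle.
\end{equation*}
Substituting this expression into the definition of $\kappa$ and carrying out the four centerings, the marginal terms $\|\tilde{\psi}(s)\|^2$ and $\|\tilde{\psi}(t)\|^2$ cancel, and the inner product term collapses to $\langle \bar{\psi}(s),\bar{\psi}(t)\rangle$. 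Thus $\kappa(s,t)=\langle \bar{\psi}(s),\bar{\psi}(t)\rangle$.

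Writing $\bar{\psi}(t)=\sum_{i=1}^c h_i(t)\be_i$ in the standard basis of $\RR^c$, we obtain the degenerate kernel $\kappa(s,t)=\sum_{i=1}^c h_i(s)h_i(t)$, whence for any $f\in L^2([0,T])$,
\begin{equation*}
I[f](s)=\sum_{i=1}^c h_i(s)\,\langle h_i,f\rangle_{L^2}.
\end{equation*}
So $I$ has rank at most $c$ and its range is spanned by $\{h_1,\ldots,h_c\}$. Because $I$ is self-adjoint, positive semidefinite, and Hilbert--Schmidt, the spectral theorem yields an orthonormal family $g_1,\ldots,g_c$ of eigenfunctions with nonnegative eigenvalues $\lambda_1,\ldots,\lambda_c$, and by Lemma~\ref{lem:holder} each $g_i$ is $\alpha$-H\"older. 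This gives the claimed expansion $I[f]=\sum_{i=1}^c \lambda_i\langle g_i,f\rangle_{L^2}g_i$ and an orthogonal transformation $Q\in\mathcal{O}^{c\times c}$ such that the $i$th coordinate of $Q\bar{\psi}(t)$ equals $\sqrt{\lambda_i}g_i(t)$.

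Now set $\psi(t)=\sum_{i=1}^c\sqrt{\lambda_i}g_i(t)\be_i=Q\bar{\psi}(t)$. Since $Q$ is orthogonal and $\bar\psi$ differs from $\tilde\psi$ only by a translation,
\begin{equation*}
\|\psi(s)-\psi(t)\|^2=\|\bar{\psi}(s)-\bar{\psi}(t)\|^2=\|\tilde{\psi}(s)-\tilde{\psi}(t)\|^2=d_{MV}(\varphi(s),\varphi(t))^2,
\end{equation*}
so the integrand of $\mathcal{L}(\psi)$ vanishes identically and $\mathcal{L}(\psi)=0$; this is obviously the global minimum over the class of admissible mirrors. Moreover $\int_0^T\psi(t)\,\mathrm{d}t=Q\int_0^T\bar{\psi}(t)\,\mathrm{d}t=0$, so $\psi$ lies in $\mathcal{S}(c,\alpha,C)$.

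For uniqueness, suppose $\psi'\in\mathcal{S}(c,\alpha,C)$ also achieves $\mathcal{L}(\psi')=0$. Then $\|\psi'(s)-\psi'(t)\|=d_{MV}(\varphi(s),\varphi(t))=\|\psi(s)-\psi(t)\|$ for almost every, and by continuity every, $(s,t)$. A standard rigidity fact for point configurations in Euclidean space says that any two maps $[0,T]\to\RR^c$ with identical pairwise distances differ by a rigid motion, i.e.\ $\psi'(t)=R\psi(t)+b$ for some $R\in\mathcal{O}^{c\times c}$ and $b\in\RR^c$. Integrating and using that both $\psi$ and $\psi'$ have mean zero forces $b=0$, so $\psi'=R\psi$, yielding uniqueness up to orthogonal transformation. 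The only conceptually delicate step is the rigid-motion argument, which I would carry out by selecting $c+1$ time points at which $\{\psi(t_j)\}$ spans $\RR^c$ affinely and applying the classical result that pairwise-distance-preserving maps on a full-dimensional simplex extend uniquely to affine isometries.
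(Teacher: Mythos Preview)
Your proof is correct and, on the structural claims about $I$, actually more complete than the paper's: you explicitly factor $\kappa(s,t)=\langle\bar\psi(s),\bar\psi(t)\rangle$ to exhibit $I$ as a degenerate (hence rank $\le c$) positive semidefinite operator, and you identify the eigenfunction-built $\psi$ as an orthogonal rotation of the centered realizer $\bar\psi$. The paper essentially asserts these facts and moves directly to uniqueness.

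The genuine methodological difference is in the uniqueness argument. The paper proceeds by enumerating the rationals $r_1,r_2,\ldots$ in $[0,T]$, observing that for each finite prefix the two Gram matrices coincide, so a Cholesky factorization yields orthogonal matrices $R_k\in\mathcal{O}^{c\times c}$ with $\psi_1(r_j)=R_k\psi_2(r_j)$ for $j\le k$; compactness of $\mathcal{O}^{c\times c}$ gives a subsequential limit $R$, and density of the rationals plus continuity extends $\psi_1=R\psi_2$ to all of $[0,T]$. You instead invoke the classical rigidity theorem for Euclidean point configurations in one step. Your route is shorter and more conceptual; the paper's route is more constructive and avoids quoting the rigidity theorem as a black box.

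One small caveat on your execution: your final sentence proposes to pick $c+1$ time points at which $\{\psi(t_j)\}$ affinely spans $\RR^c$, but if some $\lambda_i=0$ the image of $\psi$ lies in a proper subspace and no such points exist. This is not a real obstruction---the rigid-motion fact holds regardless, by working in the affine span of the image and extending the resulting isometry arbitrarily on the orthogonal complement---but your stated mechanism should be adjusted accordingly.
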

\begin{proof}
By definition of exact realizability, there is an $\alpha$-H\"older function $\psi:[0,T]\rightarrow\RR^c$ such that $d_{MV}(\varphi(s),\varphi(t))=\|\psi(s)-\psi(t)\|_2$ for all $s,t\in[0,T]$, which ensures that the minimum value of $\mathcal{L}(\psi)$ over $\mathcal{S}(c,\alpha,0)$ must be 0. Since the function $\psi$ defined above satisfies $d_{MV}(\varphi(s),\varphi(t))=\|\psi(s)-\psi(t)\|$ for all $s,t\in[0,T]$, it suffices to show that any other such function satisfying this condition is just an orthogonal transformation of $\psi$. Consider an enumeration of the rationals in $[0,T]$, $r_1,r_2,\ldots$. Since $$\|\psi_1(s)-\psi_1(t)\|=d_{MV}(\varphi(s),\varphi(t))=\|\psi_2(s)-\psi_2(t)\|,$$ for any $k\geq 1$, we get that the squared distance matrices $\mathcal{D}_i = [\|\psi_i(r_j)-\psi_i(r_\ell)\|^2]_{j,\ell=1}^k$ are equal, and thus the Gram matrices $-\frac{1}{2}P\mathcal{D}_iP^T$ are, also. There exist orthogonal matrices $V_k, W_k\in M_c(\RR)$ giving 
$$[\psi_1^T(r_j)]_{j=1}^k = L_k V_k,\quad [\psi_2^T(r_j)]_{j=1}^k = L_k W_k,$$ where $L_k\in M_{k,c}(\RR)$ is the Cholesky factor of $-\frac{1}{2}P\mathcal{D}_iP^T$ for the set of times $\{r_1,\ldots,r_k\}$. This implies the existence of a unitary $R_k\in M_c(\RR)$ such that $\psi_1(r_j)=R_k \psi_2(r_j)$ for all $1\leq j\leq k$. Since the set of unitary matrices is compact, there is a subsequence of the matrices $R_k$ that converges to some unitary matrix $R$. But since $\psi_1(r_j)=R_k\psi_2(r_j)$ for all $k\geq j$, we necessarily have $\psi_1(r_j)=R\psi_2(r_j)$ for every $j\geq 1$. Since the $\{r_j\}$ form a dense subset of $[0,T]$ and $\psi_i$ is continuous for $i=1,2$, we get $\psi_1(s)=R\psi_2(s)$ for all $s\in[0,T]$. 
\end{proof}

Suppose $\varphi:[0,T]\rightarrow(L^2(\Omega),d_{MV})$ 
is nonbacktracking, 
so that $\varphi(t)=\varphi(t')$ implies $\varphi(s)=\varphi(t)$ for all $s\in[t,t']$, 
and that $(\mathcal{M},d_{MV})$ is approximately Euclidean realizable. Recall that Theorem \ref{thm:manifold} states that under these conditions,
$\mathcal{M}$ is homeomorphic to an interval $[0,I]$. In particular, it is a topological 1-manifold with boundary. If $\varphi$ is injective, $\mathcal{M}$ remains a 1-manifold with boundary even when $\varphi$ is only $\alpha$-H\"{o}lder Euclidean realizable.

{\bf Proof of Theorem \ref{thm:manifold}}.
We consider the case where $\varphi$ is injective first, since this avoids some of the technical details of the more general case. We may first observe that $$d_{MV}(\varphi(t),\varphi(t'))\leq \|\psi(t)-\psi(t')\|+c|t-t'|^\alpha\leq (L+c)|t-t'|^{\alpha},$$ so $\varphi$ is also $\alpha$-H\"{o}lder continuous (or Lipschitz if $\alpha=1$). Since $\mathcal{M}$ is defined to be $\varphi([0,T])$, it is apparent that $\varphi$ is bijective. Now any closed subset $F\subseteq [0,T]$ is compact, so $\varphi(F)\subseteq \mathcal{M}$ is compact, hence closed in $\mathcal{M}$, and $\varphi$ is a closed map. In other words, $\varphi^{-1}$ is continuous, so $\varphi$ is itself the required homeomorphism.

In the case that $\varphi$ is nonbacktracking and $\alpha=1$, we define $L(t,\delta)=\sup\{d_{MV}(\varphi(x),\varphi(y))/|x-y|: x,y\in B(t,\delta), x\neq y\}$ for any $t\in(0,T)$ and $\delta>0$ such that $B(t,\delta)\subseteq[0,T]$. This is finite and bounded above by $L+c$ from the first part of the proof. We also define $L(t)=\inf \{L(t,\delta): \delta>0\}$, which is a finite, nonnegative number bounded above by $L+c$. We make the following observations, which are easily proved: (1) $L(t,\delta)$ is lower semicontinuous in $t$, for any $\delta>0$; (2) $L(t)$ is integrable.

Now we define $\gamma:[0,T]\rightarrow[0,I]$ via $\gamma(t)=\int_0^t L(s)\,\mathrm{d}s$, where $I=\int_0^T L(s)\,\mathrm{d}s.$ Since $\gamma$ is surjective, this allows us to define $\mu:[0,I]\rightarrow\mathcal{M}$ via $\mu(\gamma(t))=\varphi(t)$ for all $t\in[0,T]$. We now show that $\mu$ is well-defined and Lipschitz continuous. Let $t<t'\in[0,T]$, and given $\delta>0$, choose points $s_i$ such that $s_0=t, s_k=t'$, and $s_i<s_{i+1}<s_{i}+\delta$ for each $i$. Now we observe that
\begin{align*}
d_{MV}(\mu(\gamma(t)),\mu(\gamma(t')))&=d_{MV}(\varphi(t),\varphi(t'))\\
&\leq \sum_{i=0}^{k-1} d_{MV}(\varphi(s_i),\varphi(s_{i+1}))\\
&\leq \sum_{i=0}^{k-1} L(s_i,\delta)(s_{i+1}-s_i).
\end{align*}
Letting the partition $\{s_i\}$ of $[t,t']$ become arbitrarily fine, we see that this upper bound converges to the corresponding integral, giving $$d_{MV}(\mu(\gamma(t)),\mu(\gamma(t')))\leq \int_{t}^{t'} L(s,\delta)\,\mathrm{d}s.$$ Now taking an infimum over $\delta>0$ and applying dominated convergence, we see that $$d_{MV}(\mu(\gamma(t)),\mu(\gamma(t')))\leq \int_t^{t'} L(s)\,\mathrm{d}s=|\gamma(t')-\gamma(t)|.$$

Observe that $\mu$ is injective: if $\mu(x)=\mu(y)$ for some $x<y\in [0,I]$, then for $t<t'\in[0,T]$ with $\gamma(t)=x$ and $\gamma(t')=y$, we have that $$\varphi(t)=\mu(\gamma(t))=\mu(x)=\mu(y)=\mu(\gamma(t'))=\varphi(t'),$$ so $\varphi(s)=\varphi(t)$ for all $s\in[t,t']$. Then for $s\in (t,t')$, we can take $\delta>0$ small enough that $B(s,\delta)\subseteq(t,t')$, and since $d_{MV}(\varphi(a),\varphi(b))=0$ for all $a,b\in B(s,\delta)$, we see that $L(s,\delta)=0$, and thus $L(s)=0$, too. Now from the definition of $\gamma$, $$y-x=\gamma(t')-\gamma(t)=\int_t^{t'} L(s)\,\mathrm{d}s=0,$$ which contradicts the assumption that $x<y$.

Since $\mu:[0,I]\rightarrow\mathcal{M}$ is a continuous bijection, it is easy to see that $\mu$ is in fact a homeomorphism.\hfill$\square$\\

When the trajectories $X(\cdot,\omega):[0,T]\rightarrow\RR^d$ satisfy a  H\"{o}lder condition with square-integrable constant, then $\varphi$ also satisfies this continuity condition, as Theorem \ref{thm:smoothtraj} states.

\textbf{Proof of Theorem~\ref{thm:smoothtraj}:} 
To show that sufficiently smooth trajectories imply continuity for $\varphi$, note that
\begin{align*}
d_{MV}(X_t,X_s)&= \min_W \|\EE[(X_t-WX_s)(X_t-WX_s)^T]\|_2^{1/2}\\
&\leq \|\EE[(X_t-X_s)(X_t-X_s)^T]\|_2^{1/2}\\
&\leq \EE[\|X_t-X_s\|^2]^{1/2}\\
&\leq \EE[L(\omega)^2 |t-s|^{2\alpha}]^{1/2}\\
&= \|L\|_{L^2(\Omega)} |t-s|^{\alpha}.
\end{align*}
\hfill$\square$

Additional constraints on the probabilistic structure of the stochastic process can render the distance $d_{MV}$ simpler to compute. In Theorem \ref{thm:martingale}, we show that if $\varphi(t)=\gamma(t)+M_t$, where $\gamma:[0,T]\rightarrow\RR^d$ is Lipschitz continuous and $M_t$ is a martingale with certain variance constraints, then $\varphi$ is approximately $\alpha$-H\"{o}lder Euclidean realizable.

\textbf{Proof of Theorem~\ref{thm:martingale}:}
Suppose $\varphi(t)=\gamma(t)+M_t$, where $\varphi(t)=\gamma(t)+M_t$, and $\gamma:[0,T]\rightarrow\RR^d$ is Lipschitz continuous with $\gamma(t)=a(t) v$;  $M_t$ is a martingale satisfying $\|\mathrm{Cov}(M_t-M_s)\|_2\leq C(t-s).$
We expand $(X_t-WX_s)(X_t-WX_s)^\top$ using the decomposition
$$X_t-WX_s=M_t-M_s+\gamma(t)-W\gamma(s)+M_s-WM_s.$$ Since the increment $M_t-M_s$ is conditionally mean-zero given $\mathcal{F}_s$, and $M_s$ has mean zero, all cross terms vanish when we take the expected value. This leaves
\begin{align*}
\EE[(X_t-WX_s)(X_t-WX_s)^\top]&=\EE[(M_t-M_s)(M_t-M_s)^\top]+(\gamma(t)-W\gamma(s))(\gamma(t)-W\gamma(s))^\top\\
&+(I-W)\EE[M_sM_s^\top](I-W)^\top
\end{align*}
Plugging in $W=I$ and using the triangle inequality guarantees that
$$d_{MV}(X_t, X_s)^2 \leq \|\textrm{Cov}(M_t-M_s)\|_2 + \|\gamma(t)-\gamma(s)\|^2$$
as required.

Since $\gamma(s)\in\mathrm{span}(\gamma(t))$, we use the fact that the first and last terms are positive semidefinite to obtain the lower bound $(\|\gamma(t)\|-\|\gamma(s)\|)^2=\|\gamma(t)-\gamma(s)\|^2$, so $$\|\gamma(t)-\gamma(s)\|^2\leq d_{MV}(X_t,X_s)^2\leq \|\gamma(t)-\gamma(s)\|^2+C(t-s),$$ which completes the proof.
\hfill$\square$

We now demonstrate the properties of the stochastic processes describe in Examples \ref{ex:integratedBM} and \ref{ex:BM}.

\textbf{Proof for Example ~\ref{ex:integratedBM}:}
Since $X_t=\gamma(t)+I_t$ with $I_t=\int_0^t B_t$, observe that 
\begin{align}
d_{MV}(X_t,X_{t'})^2&=\min_W \|\EE[(X_t-W X_{t'})(X_t-W X_{t'})^{\top}]\|_2\notag\\
&=\min_W \Bigg\|\EE\left[ \gamma(t)-W\gamma(t') + I_t-WI_{t'})(\gamma(t)-W\gamma(t')+ I_t-WI_{t'})^{\top}\right]\Bigg\|_2\notag\\
&=\min_W \Bigg\| (\gamma(t)-W\gamma(t'))(\gamma(t)-W\gamma(t'))^{\top}+\EE\left[(I_t-WI_{t'})(I_t-WI_{t'})^{\top}\right]\Bigg\|_2\label{eq:integratedBM}
\end{align}
We may expand $I_t-WI_{t'}$ as $I_t-I_{t'}+(I-W)I_{t'}$. Using the fact that $\EE[B_s|\mathcal{F}_{s'}]=B_{s'}$ whenever $s>s'$, we observe that 
\begin{align*}
\EE[I_{t'}(I_t-I_{t'})^\top]&=\EE[I_{t'}\EE[(I_t-I_{t'})^\top|\mathcal{F}_{t'}]]=\EE[I_{t'}(t-t')B_{t'}^\top]\\
&=(t-t')\int_{0}^{t'}\EE[B_sB_{t'}^{\top}]\,\mathrm{d}s=(t-t')\int_0^{t'} \EE[B_s B_s^\top]\,\mathrm{d}s\\
&=(t-t')\int_0^{t'} \sigma^2 s I\,\mathrm{d}s=\sigma^2[(t-t')t'^2/2] I.
\end{align*}
Similarly,
\begin{align*}
\EE[(I_{t}-I_{t'})(I_t-I_{t'})^\top]&=\int_{t'}^t\int_{t'}^t\EE[B_s B_{s'}^\top]\,\mathrm{d}s'\,\mathrm{d}s=\int_{t'}^t\int_{t'}^t \sigma^2 \min\{s,s'\} I\,\mathrm{d}s'\,\mathrm{d}s\\
&=\sigma^2 I \int_{t'}^t \int_{t'}^s s'\,\mathrm{d}s'+s(t-s)\,\mathrm{d}s=\sigma^2 I \int_{t'}^t (s^2-t'^2)/2+st-s^2\,\mathrm{d}s\\
&=\sigma^2 I \int_{t'}^t st-t'^2/2-s^2/2\,\mathrm{d}s\\
&=\sigma^2 I [(t^2-t'^2)t/2-(t-t')t'^2/2-(t^3-t'^3)/6]\\
&=\sigma^2 I [t^3/3-t'^2t+2t'^3/3]=\sigma^2[(t-t')^2(t+2t')/3] I
\end{align*}
Since $(t+2t')/3=(t-t')/3+t'$, we may write this as $\sigma^2[(t-t')^3/3+(t-t')^2t']I$, where the latter term equals $\int_{t'}^t\int_{t'}^t\EE[B_{t'}B_{t'}^\top]\,\mathrm{d}s'\,\mathrm{d}s$.

Therefore
\begin{align}
\EE&[(I_t-WI_{t'})(I_t-WI_{t'})^\top]\notag\\
&=\EE[(I_t-I_{t'})(I_t-I_{t'})^\top]+\EE[(I_t-I_{t'})I_{t'}^\top](I-W)^T+(I-W)\EE[I_{t'}(I_t-I_{t'})^\top]\notag\\
&\quad+(I-W)\EE[I_{t'}I_{t'}](I-W)^\top\notag\\
&=\sigma^2[(t-t')^2(t+2t')/3]I+\sigma^2[(t-t')t'^2/2](I-W)^T+\sigma^2[(t-t')t'^2/2](I-W)\notag\\
&\quad+\sigma^2[t'^3/3](I-W)(I-W)^\top\notag\\
&=\sigma^2[(t^3+t'^3)/3]I-\sigma^2[tt'^2/2-t'^3/6](W+W^\top)\notag\\
&=\sigma^2[(t-t')^2(t+2t')/3]I+\sigma^2[(t-t'/3)(t'^2/2)](2I-W-W^\top)\label{eq:ex1intmd}
\end{align}
The latter term is positive semidefinite, so the spectral norm of this matrix is minimized at $W=I$. For the term $\gamma(t)-W\gamma(t')=(at+b)v-(at'+b)Wv$, by Cauchy-Schwarz, we have
\begin{align*}
\|(\gamma(t)-W\gamma(t'))(\gamma(t)-W\gamma(t'))^\top\|_2&=\|\gamma(t)-W\gamma(t')\|^2\\
&=\|\gamma(t)\|^2+\|\gamma(t')\|^2-2\langle \alpha(t),W\gamma(t')\rangle\\
&\geq (\|\gamma(t)\|-\|\gamma(t')\|)^2=a^2(t-t')^2\|v\|^2,
\end{align*}
where the lower bound is achieved with $W=I$ since $\alpha(t)$ and $\alpha(t')$ are linearly dependent. In Equation~\ref{eq:integratedBM}, we obtain a lower bound by discarding the second term of Equation~\ref{eq:ex1intmd}. Since the remaining portion of Equation~\ref{eq:ex1intmd} is just a multiple of the identity, we use the identity (for $\beta>0$) $\|XX^\top+\beta I\|_2=\|XX^\top\|_2+\beta$ to finally obtain the equality
$$d_{MV}(X_t,X_{t'})^2= a^2(t-t')^2\|v\|^2+\sigma^2[(t-t')^2(t+2t')/3].$$ We see that $\varphi$ is approximately Euclidean realizable with $\psi(t)=\sqrt{a^2\|v\|^2+\sigma^2T}t$, since
$$\left|d_{MV}(X_t,X_{t'})^2-|\psi(t)-\psi(t')\|^2\right|=\sigma^2(T-(t+2t')/3)(t-t')^2\leq \sigma^2T(t-t')^2,$$
and thus using $a^2-b^2=(a-b)(a+b)$, we get
\begin{align*}
\left|d_{MV}(X_t,X_{t'})-\|\psi(t)-\psi(t')\|\right|&=\frac{\sigma^2(T-(t+2t')/3)(t-t')^2}{(\sqrt{a^2\|v\|^2+\sigma^2(t+2t')/3}+\sqrt{a^2\|v\|^2+\sigma^2T})|t-t'|}\\
&\leq C|t-t'|.
\end{align*}
\hfill$\square$

Note that $I_t$ is not a martingale, but as this example demonstrates, the stochastic term need not be. Moreover, the increased regularity of integrated Brownian motion guarantees approximate Lipschitz Euclidean realizability. If we consider processes expressible as the sum of a deterministic drift and standard Brownian motion, we retain $\alpha$-H\"{o}lder Euclidean realizability, as Example \ref{ex:BM} asserts.

\textbf{Proof for Example~\ref{ex:BM}:} 
We may proceed as in the proof of Theorem~\ref{thm:martingale}, obtaining $$X_t-WX_{t'}=\gamma(t)-W\gamma(t')+B_t-B_{t'}+(I-W)B_{t'}.$$ Since $B_t-B_{t'}$ and $B_{t'}$ are independent and have mean 0, we see that
\begin{align*}
d_{MV}(X_t,X_{t'})^2&=\min_W\|\EE[(X_t-WX_{t'})(X_t-WX_{t'})^\top]\|_2\\
&=\min_W \left\|(\gamma(t)-W\gamma(t'))(\gamma(t)-W\gamma(t'))^\top+\EE[(B_t-B_{t'})(B_t-B_{t'})^\top]\right.\\
&\qquad\quad\;\; \left.+(I-W)\EE[B_{t'}B_{t'}^\top](I-W)^\top\right\|_2\\
&=\min_W \left\|(\gamma(t)-W\gamma(t'))(\gamma(t)-W\gamma(t'))^\top+\sigma^2(t-t')I+\sigma^2t'(I-W)(I-W)^\top\right\|_2\\
&\leq (a(t)-a(t'))^2\|v\|^2+\sigma^2|t-t'|.
\end{align*}
Since the last term inside the norm is positive semidefinite, we obtain a lower bound given by 
$$d_{MV}(X_t,X_{t'})^2\geq \min_W \|(\gamma(t)-W\gamma(t'))(\gamma(t)-W\gamma(t'))^\top+\sigma^2(t-t')I\|_2.$$
Arguing as in Example~\ref{ex:integratedBM}, we see that this is minimized at $W=I$, proving that $$d_{MV}(X_t,X_{t'})^2=(a(t)-a(t'))^2\|v\|^2+\sigma^2|t-t'|.$$ Consider $\psi(t)=\|v\|a(t)$: then 
$$\left|d_{MV}(X_t,X_{t'})^2-\|\psi(t)-\psi(t')\|^2\right|=\sigma^2|t-t'|.$$ As before, this gives
\begin{align*}
\left|d_{MV}(X_t,X_{t'})-\|\psi(t)-\psi(t')\|\right|&=\frac{\sigma^2|t-t'|}{\sqrt{(a(t)-a(t'))^2\|v\|^2+\sigma^2|t-t'|}+|a(t)-a(t')|\|v\|}\\
&\leq C|t-t'|^{1/2},
\end{align*}
so $\varphi$ is approximately (1/2)-H\"{o}lder Euclidean realizable.\hfill$\square$

\subsection{Proofs and supporting results for Section~\ref{sec:estimation}}
\label{sec:proofs3}

We will make use of the following supporting lemmas in our proof of Theorem~\ref{thm:approximation}. The first says that the property of equicontinuity for functions is preserved under convex combinations; we omit the straightforward proof.

\begin{lemma}
\label{lem:convequicontinuity}
Let $(A,d)$ be a metric spaces, and let $\mathcal{C}$ be a collection of functions $f:A\rightarrow\RR$ such that for any $\epsilon>0$, there exists $\delta(\epsilon)>0$ such that for all $f\in\mathcal{C}$, $$\forall a,b\in A,\; d(a,b)<\delta(\epsilon)\Rightarrow |f(a)-f(b)|<\epsilon.$$ That is, $\mathcal{C}$ is an equicontinuous family of functions. Then the convex hull of $\mathcal{C}$,  $\mathrm{conv}(\mathcal{C})$, is an equicontinuous family of functions with this same $\delta(\epsilon)$. That is, given $f_1,\ldots,f_n\in \mathcal{C}$, $\lambda_1,\ldots,\lambda_n\geq0$ with $\sum_{i=1}^n\lambda_i=1$, $$\sum_{i=1}^n \lambda_i f_i$$ is uniformly continuous with the same modulus of continuity (at most) $\delta(\epsilon)$ for all $\epsilon>0$.
\end{lemma}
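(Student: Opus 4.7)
The plan is to verify the claimed equicontinuity of $\mathrm{conv}(\mathcal{C})$ directly from the definition, using linearity of finite convex combinations together with the triangle inequality. Since the modulus $\delta(\epsilon)$ is assumed to work uniformly across all $f\in\mathcal{C}$, no refinement of $\delta$ is needed; the same $\delta(\epsilon)$ will be shown to work for every element of $\mathrm{conv}(\mathcal{C})$.

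Concretely, I would fix $\epsilon>0$, take an arbitrary $g\in\mathrm{conv}(\mathcal{C})$, and write $g=\sum_{i=1}^n \lambda_i f_i$ with $f_i\in\mathcal{C}$, $\lambda_i\geq 0$, and $\sum_i \lambda_i = 1$. For any $a,b\in A$ with $d(a,b)<\delta(\epsilon)$, I would estimate
$$|g(a)-g(b)| = \left|\sum_{i=1}^n \lambda_i\bigl(f_i(a)-f_i(b)\bigr)\right| \leq \sum_{i=1}^n \lambda_i\,|f_i(a)-f_i(b)|.$$
Applying the equicontinuity hypothesis to each $f_i$ gives $|f_i(a)-f_i(b)|<\epsilon$ for every $i$, so the bound becomes $\sum_{i=1}^n \lambda_i \epsilon = \epsilon$, which is the required inequality.

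Since $g$, $\epsilon$, $a$, and $b$ were arbitrary, this shows that $\mathrm{conv}(\mathcal{C})$ is equicontinuous with the same modulus $\delta(\epsilon)$ as $\mathcal{C}$. There is essentially no obstacle to overcome here; the only point to be careful about is ensuring that the bound $|f_i(a)-f_i(b)|<\epsilon$ applies with the \emph{same} $\delta(\epsilon)$ to every $f_i$, which is precisely what the uniform quantification over $f\in\mathcal{C}$ in the hypothesis supplies. This is why equicontinuity (rather than mere pointwise continuity of each member) is the right hypothesis for the convex-combination statement.
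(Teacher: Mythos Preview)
Your proof is correct and is exactly the straightforward argument the paper has in mind; indeed, the paper explicitly omits the proof as elementary, and your triangle-inequality plus convexity estimate is the natural route.
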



The following lemma states that when two functions are uniformly close, their maximum and minimum values must also be close.

\begin{lemma}
\label{lem:uniformmaxbound}
Let $f,g:A\rightarrow\RR$ be continuous functions on the compact metric space $(A,d)$, satisfying $$\max_{a\in A}\,|f(a)-g(a)|<\epsilon.$$ Then $$\left|\max_{a\in A}\,f(a)-\max_{b\in A}\,g(b)\right|<\epsilon,\quad \left|\min_{a\in A}\,f(a)-\min_{b\in A}\,g(b)\right|<\epsilon.$$
\end{lemma}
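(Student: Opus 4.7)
The plan is to argue by a two-sided inequality using the maximizers of $f$ and $g$ individually. Let $a^* \in A$ achieve $\max_{a \in A} f(a)$ and $b^* \in A$ achieve $\max_{b \in A} g(b)$; both exist since $A$ is compact and $f, g$ are continuous. By the hypothesis applied at $a^*$, I can write $f(a^*) \leq g(a^*) + \epsilon$, and since $g(a^*) \leq g(b^*)$, this chains to $\max f \leq \max g + \epsilon$. Symmetrically, applying the hypothesis at $b^*$ gives $g(b^*) \leq f(b^*) + \epsilon \leq f(a^*) + \epsilon = \max f + \epsilon$. Combining the two yields $|\max f - \max g| \leq \epsilon$.

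To obtain the strict inequality as stated, I would use the fact that on a compact domain the continuous function $|f - g|$ attains its supremum, so the hypothesis $\max_{a} |f(a) - g(a)| < \epsilon$ actually says the attained maximum equals some $M$ with $M < \epsilon$. Replacing $\epsilon$ by $M$ in the inequalities above gives $|\max f - \max g| \leq M < \epsilon$, which is the desired strict bound. The minimum case follows by an identical argument with inequalities reversed (or equivalently, by applying the maximum case to $-f$ and $-g$, noting $|(-f) - (-g)| = |f - g|$).

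I do not expect any real obstacle here; the only thing to be careful about is that strictness of the conclusion requires strictness of the hypothesis, and this is handled cleanly by the fact that the supremum of $|f-g|$ is attained on the compact set $A$. There is nothing subtle about completeness or additional structure on $(A, d)$ beyond compactness.
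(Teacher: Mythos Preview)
Your argument is correct and is the standard one: pick maximizers, apply the pointwise bound at each, and handle strictness via attainment of $\sup_a |f(a)-g(a)|$ on the compact domain. The paper in fact omits the proof of this lemma entirely (treating it as elementary), so there is nothing to compare against; your write-up would serve perfectly well as the missing proof.
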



Since the mean edge probability matrices of latent position graphs have low-rank structure, the underlying latent position matrices can be well estimated by the adjacency spectral embedding \citep{STFP-2011, athreya_survey}. In particular, the difference between the Procrustes-aligned adjacency spectral embeddings of two independent networks satisfies the following concentration (\cite{tang14:_semipar}; Theorem 3.3 and Corollary 3.4 in \cite{athreya2016numerical}):

\begin{lemma}
\label{lem:aseconcentration}
Let $F$ be an inner product distribution in $\RR^d$, and suppose $X_1,\ldots,X_n\sim F$ are an i.i.d. sample. Suppose that $\EE[X_1 X_1^\top]$ has rank $d$. If $A$ is the adjacency matrix of an RDPG with latent position matrix $\bX$ having rows $X_i^\top$, $1\leq i\leq n$, and $\hat{\bX}$ is its $d$-dimensional ASE, then there is a constant $C$ such that with overwhelming probability, $$\min_W \|\hat{\bX}-\bX W\|_2\leq C+O(log(n)/\sqrt{n}).$$
\end{lemma}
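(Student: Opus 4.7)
The plan is to reduce the bound to control of the spectral norm gap between $\bA$ and $\bP := \bX\bX^\top$, and then combine Davis-Kahan with Weyl's inequality, taking care with the Procrustes alignment. Throughout, the ``overwhelming probability'' statements are with respect to the joint randomness in $\bX$ and in $\bA$ given $\bX$.

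First, I would establish two concentration results. By a F\"uredi-Koml\'os or matrix Bernstein bound applied conditionally on $\bX$, $\|\bA - \bP\|_2 = O(\sqrt{n})$ with overwhelming probability (the diagonal of $\bP$ contributes only $O(1)$ in spectral norm and is absorbed). By matrix Hoeffding applied to $\tfrac{1}{n}\bX^\top\bX = \tfrac{1}{n}\sum_i X_i X_i^\top$, together with the rank-$d$ hypothesis on $\EE[X_1 X_1^\top]$, one has $\lambda_d(\bP) = \Theta(n)$ with overwhelming probability. Since $\bP$ has rank $d$, the spectral gap between $\lambda_d(\bP)$ and $\lambda_{d+1}(\bP)=0$ is of order $n$.

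Next, I would invoke eigenvector and eigenvalue perturbation. The Davis-Kahan theorem supplies an orthogonal $R \in \mathcal{O}^{d\times d}$, taken as the Procrustes-optimal rotation from the polar decomposition of $\hat\bU^\top \bU$, with
$$\|\hat\bU - \bU R\|_2 \;\leq\; C\,\|\bA-\bP\|_2/\lambda_d(\bP) \;=\; O(1/\sqrt{n}).$$
Weyl's inequality combined with $\lambda_i(\bP) = \Theta(n)$ and a Taylor expansion of the square root yields $\|\hat\bS^{1/2} - \bS^{1/2}\|_2 = O(1)$. Writing $\bX = \bU\bS^{1/2} W_0^\top$ for the orthogonal $W_0$ induced by the spectral decomposition of $\bP$, and taking $W = R W_0^\top$ in the Procrustes minimization, the target becomes $\|\hat\bU\hat\bS^{1/2} - \bU\bS^{1/2} R\|_2$. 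The decomposition
$$\hat\bU\hat\bS^{1/2} - \bU\bS^{1/2} R \;=\; (\hat\bU - \bU R)\,\hat\bS^{1/2} \;+\; \bU\,\bigl(R\hat\bS^{1/2} - \bS^{1/2} R\bigr)$$
bounds the first summand by $O(1/\sqrt{n})\cdot O(\sqrt{n}) = O(1)$, producing the constant $C$ in the statement.

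The main obstacle is the commutator-type second summand $R\hat\bS^{1/2} - \bS^{1/2} R$, which must be bounded by $O(\log(n)/\sqrt{n})$ rather than the trivial $O(1)$. This is where the finer residual expansion of the ASE developed in \cite{STFP-2011, athreya2016numerical, tang14:_semipar} is crucial: a Neumann-series argument shows $\hat\bU^\top \bU$ differs from an orthogonal matrix only at order $O(1/n)$, so the identity $\hat\bU^\top \bA \hat\bU = \hat\bS$ can be transferred to $R^\top \bU^\top \bA \bU R$ and, compared against $\bU^\top \bP \bU = \bS$, yields the commutator bound via a Lyapunov-equation manipulation together with the entrywise concentration of $\bU^\top(\bA-\bP)\bU$, which carries the $\sqrt{\log n}$ factor. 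Assembling the $O(1)$ and $O(\log(n)/\sqrt n)$ contributions delivers the stated inequality.
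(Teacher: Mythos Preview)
The paper does not prove this lemma; it simply cites \cite{tang14:_semipar} and Theorem~3.3/Corollary~3.4 of \cite{athreya2016numerical}. Your outline is the standard Davis--Kahan $+$ Weyl skeleton underlying those references and is essentially correct, with two remarks.

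First, you have over-read the target. The bound $C+O(\log n/\sqrt n)$ is, as a statement in $n$, nothing more than an $O(1)$ bound: any uniform constant $K$ verifies it with $C=K$. So the second summand $\bU(R\hat\bS^{1/2}-\bS^{1/2}R)$ need only be $O(1)$, not $O(\log n/\sqrt n)$, and your ``main obstacle'' is less severe than you suggest. Second, getting even $O(1)$ on that term is not automatic from Weyl alone, since $R\bS^{1/2}-\bS^{1/2}R$ has $(i,j)$ entry $(\sigma_i^{1/2}-\sigma_j^{1/2})R_{ij}$ with $\sigma_i^{1/2}-\sigma_j^{1/2}=\Theta(\sqrt n)$; one needs the off-diagonal entries of $R$ to be $O(1/\sqrt n)$. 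The Sylvester identity you gesture at, namely $\hat\bS(\hat\bU^\top\bU)-(\hat\bU^\top\bU)\bS=\hat\bU^\top(\bA-\bP)\bU$ (from $\bA\hat\bU=\hat\bU\hat\bS$ and $\bP\bU=\bU\bS$), does deliver this and more, so your plan goes through. The cited references instead expand $\hat\bX-\bX W=(\bA-\bP)\bU\bS^{-1/2}W^\top+\text{residual}$ directly: the leading term has spectral norm $O(1)$ and the residual carries the $O(\log n/\sqrt n)$, which is what motivates the displayed form of the bound and is somewhat cleaner than routing through the commutator. (Minor slip: with $\bX=\bU\bS^{1/2}W_0^\top$, the correct alignment is $W=W_0R$, not $RW_0^\top$.)
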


We now turn to our main result, Theorem~\ref{thm:approximation}, which says that with overwhelming probability, $d_{MV}(\hat{\bX}_t,\hat{\bX}_s)$ and $d_{MV}(X_t,X_s)$ are close. A crucial step is the proof of a concentration inequality for the scaled distance between the realized latent position matrices $\bX_t$ and $\bX_s$ and the maximum directional variation metric between $X_t$ and $X_s$, whose joint distribution is inherited from the latent position process $\varphi$. For a given $W\in\mathcal{O}^{d\times d}$ and $u\in\RR^d$ with $\|u\|=1$, classical results show that $\frac{1}{n}\|(\bX_t-\bX_s W)u\|^2$ concentrates around $\EE[\langle X_t-WX_s,u\rangle^2]$. The challenge, however, is the maximization over $u$ and minimization over $W$. This necessitates a \emph{uniform} concentration bound in $W,u$, which relies on a carefully-constructed cover for the compact set $\mathcal{O}^{d\times d}\times \mathcal{S}^d$. We show this pointwise concentration can be extended uniformly over small neighborhoods of a given $(W,u)$, after which a union bound gives the desired result.

\textbf{Proof of Theorem~\ref{thm:approximation}:}
Consider the matrices of true latent positions, $\bX_t$ and $\bX_s$, and denote the rows of these matrices as $(X^i,Y^i)$, $i=1,\ldots,n$. These rows these are an i.i.d. sample from some latent position distribution $F$. We first show that with overwhelming probability, we have the bound
\begin{equation}
\label{eq:truelatents}
\left|\hat{d}_{MV}(\bX_t,\bX_s)^2-d_{MV}(X^1,Y^11)^2\right|\leq \frac{\log(n)}{\sqrt{n}}.
\end{equation}
From the definition of $\hat{d}_{MV}$, we have
\begin{align*}
\min_{W\in\mathcal{O}^{d\times d}}\frac{1}{n}\|\bX_t-\bX_sW\|_2^2&=\min_W \max_{\|u\|=1}\frac{1}{n}\|(\bX_t-\bX_sW)u\|^2\\
&=\min_W \max_u \frac{1}{n}\sum_{i=1}^n \langle X^i-WY^i,u\rangle^2.
\end{align*}
Defining $f((x,y),W,u)=\langle x-Wy,u\rangle^2$, and $d((W,u),(W',u'))=\max\{\|W-W'\|_2,\|u-u'\|\}$ it is easy to show that $$|f((x,y),W,u)-f((x,y),W',u')|\leq 12d((W,u),(W',u')).$$ Moreover, $$Z_n(\omega,W,u)=\frac{1}{n}\sum_{i=1}^n \langle X^i(\omega)-WY^i(\omega),u\rangle^2\in \mathrm{conv}(\{f((x,y),\cdot,\cdot):x\in \mathrm{supp}(F_t), y\in\mathrm{supp}(F_s)\}).$$ If we define $\mu(W,u)=\EE[Z_n(\cdot,W,u)]$, then by Lemma~\ref{lem:convequicontinuity}, $\{Z_n(\omega,\cdot,\cdot)-\mu(\cdot,\cdot):n\geq 1,\omega\in \Omega\}$ is an equicontinuous family of functions in $(W,u)$ with modulus of continuity $\delta(\epsilon)\geq\epsilon/24$.

Let $(W,u)$ be fixed, and consider $\mathcal{N}=\mathcal{N}_{W,u}=\{(W',u'):d((W',u'),(W,u))<\delta(\epsilon/2)\}.$ If $|Z_n(\omega,W,u)-\mu(W,u)|<\epsilon/2$, then for $(W',u')\in\mathcal{N}$,
\begin{align*}
|Z_n(\omega,W',u')-\mu(W',u')|&\leq |Z_n(\omega,W',u')-\mu(W',u')-(Z_n(\omega,W,u)-\mu(W,u))|\\
&+|Z_n(\omega,W,u)-\mu(W,u)|\\
&<\epsilon/2+\epsilon/2=\epsilon.
\end{align*}
So we have the following containment:
\begin{align*}
\mathcal{A}_{n,W,u,\epsilon/2}&:=\{\omega: |Z_n(\omega,W,u)-\mu(W,u)|<\epsilon/2\}\\
&\subseteq \{\omega:|Z_n(\omega,W',u')-\mu(W',u')|<\epsilon\;\forall (W',u')\in \mathcal{N}\}\\
&=:\mathcal{A}_{n,\mathcal{N},\epsilon}
\end{align*}
Since $\langle X_1-WY_1,u\rangle^2\in[0,4]$, by Bernstein's inequality, we have for $\epsilon<1$ the bound
$$\PP[\mathcal{A}_{n,W,u,\epsilon/2}^c]\leq 2\exp\left(-\frac{(n\epsilon/2)^2/2}{n\mathrm{Var}(\langle X_1-WY_1,u\rangle^2)+(4/3)(n\epsilon/2)}\right)\leq 2\exp\left(-\frac{n\epsilon^2}{136}\right).$$ Then for given $\gamma>0$, this probability is $\leq \gamma$ for $n\geq (136/\epsilon^2)\log(2/\gamma)$, independent of the particular choice $(W,u)$. The previous containment says that for this same $n$, we also have 
\begin{equation}
\label{eq:probbound}
\PP[\mathcal{A}_{n,\mathcal{N},\epsilon}^c]\leq \gamma.
\end{equation}

By compactness of $\mathcal{O}^{d\times d}\times \mathcal{S}^d$, we may extract a finite subcover $\{\mathcal{N}_i\}_{i=1}^k$ from the open cover $\{\mathcal{N}_{W,u}\}_{(W,u)}$. In fact, we may take $k\leq (4d^{3/4}/\delta(\epsilon/2))^{2d}$, and since $\delta(\epsilon)\geq \epsilon/24$, this gives $k\leq (192d^{3/4}/\epsilon)^{2d}$.

Fix $\epsilon,\gamma>0$. If we let $N=\max\{N_i(\epsilon,\gamma/k): 1\leq i\leq k\}$, where $N_i(\epsilon,\gamma/k)$ ensures that the inequality (\ref{eq:probbound}) holds for the neighborhood $\mathcal{N}_i$ (with $\gamma/k$ on the right hand side), then in the worst case, we have $N=(136/\epsilon^2)\log(2k/\gamma)$, which from the bound on $k$ gives $N\leq (136/\epsilon^2)(\log(2/\gamma)+(2d)\log(192d^{3/4}/\epsilon)).$ But for $n\geq N$, we have 
$$\PP\left[\left(\bigcap_{i=1}^k \mathcal{A}_{n,\mathcal{N}_i,\epsilon}\right)^{c}\right]=\PP\left[\bigcup_{i=1}^k \mathcal{A}_{n,\mathcal{N}_i,\epsilon}^c\right]\leq \sum_{i=1}^k \PP[\mathcal{A}_{n,\mathcal{N}_i,\epsilon}^c]\leq k(\gamma/k)=\gamma.$$
Thus 
\begin{equation}
\label{eq:intersectbound}
\PP\left[\bigcap_{i=1}^k \mathcal{A}_{n,\mathcal{N}_i,\epsilon}\right]=\PP[\{\omega: |Z_n(\omega,W,u)-\mu(W,u)|<\epsilon\;\forall(W,u)\}]\geq 1-\gamma.
\end{equation}

In particular, taking $\epsilon=\log(n)/\sqrt{n}$, we may take $\gamma=n^{-\zeta}$ for any $\zeta>0$ and find that for $n$ sufficiently large, the inequality (\ref{eq:intersectbound}) holds, since $N=o(n)$.

Now given $\omega$ for which $|Z_n(\omega,W,u)-\mu(W,u)|<\epsilon$ for all $(W,u)$, we see from Lemma~\ref{lem:uniformmaxbound} that $$\left|\max_u Z_n(\omega,W,u)-\max_{u'}\mu(W,u')\right|<\epsilon.$$ We may now regard $\max_u Z_n(\omega,W,u)$ as a function only of $W$, and similarly for $\mu$, so applying the lemma again yields $$\left| \min_W \max_u Z_n(\omega,W,u)-\min_{W'}\max_{u'}\mu(W',u')\right|<\epsilon.$$ From Inequality~\ref{eq:intersectbound}, we obtain the desired $$\PP\left[\left|\hat{d}_{MV}(\bX_t,\bX_s)^2-d_{MV}(X^1,Y^1)^2\right|\leq\frac{\log(n)}{\sqrt{n}}\right]\geq 1-n^{-c}.$$

Now by Lemma~\ref{lem:aseconcentration}, with overwhelming probability, the ASEs for the corresponding adjacency matrices satisfy for $r\in\{t,s\}$ $$\hat{d}_{MV}(\hat{\bX}_r,\bX_r)\leq C_r/\sqrt{n}+O(\log(n)/n).$$ This gives $$\left|\hat{d}_{MV}(\hat{\bX}_t,\hat{\bX}_s)-\hat{d}_{MV}(\bX_t,\bX_s)\right|\leq \frac{C_t+C_s}{\sqrt{n}}+O\left(\frac{\log(n)}{n}\right).$$ Since both terms are no larger than constant order (so certainly less than $\log(n)$), we can use $|a^2-b^2|=|a-b||a+b|$ to show that $$\left|\hat{d}_{MV}(\hat{\bX}_t,\hat{\bX}_s)^2-\hat{d}_{MV}(\bX_t,\bX_s)^2\right|=O\left(\frac{\log(n)}{\sqrt{n}}\right).$$ Combining this bound with Equation~\ref{eq:truelatents} completes the proof.

\textbf{Proof of Theorem~\ref{thm:cmds}:} Applying Theorem~\ref{thm:approximation} to each entry of the matrix $\widehat{\mathcal{D}}_{\varphi}$ and taking a union bound, we see that with overwhelming probability, $$\|\widehat{\mathcal{D}}_{\varphi}^{(2)}-\mathcal{D}_\varphi^{(2)}\|_F\leq \frac{m\log(n)}{\sqrt{n}}.$$ 
Recall the definitions of the CMDS matrices $\hat{E}_\varphi=-\frac{1}{2}P\hat{\mathcal{D}}_{\varphi}^{(2)}P$ and $E_\varphi=-\frac{1}{2}P\mathcal{D}_{\varphi}^{(2)}P.$ It is immediate that, with overwhelming probability, 
\begin{equation}\label{eq:E-hat-minus-E-frobnorm-bound}
\|\hat{E}_{\varphi}-E_{\varphi}|\|_F \leq \|\widehat{\mathcal{D}}_{\varphi}^{(2)}-\mathcal{D}_\varphi^{(2)}\|_F \leq \frac{m \log n}{\sqrt{n}}.
\end{equation}
Next, recall $-\frac{1}{2}P\mathcal{D}_{\psi}^{(2)}P=U S U^{\top}$. This is a rank $c$ matrix whose $c$th largest eigenvalue is $\lambda_c(E_\varphi)$ and for which all remaining eigenvalues are zero. By \cite{DK_usefulvariant}, we have 
\begin{align*}
\min_{R\in\mathcal{O}^{c\times c}}\|\hat{U}-UR\|_F&\leq 2^{3/2}\frac{\|P (\widehat{\mathcal{D}}_{\varphi}^{(2)}-\mathcal{D}_\psi^{(2)}) P^\top\|_F}{\lambda_c(E_{\varphi})}\\
&\leq \frac{2^{3/2}}{\lambda_c(E_{\varphi})}\left(\|\hat{E}_{\varphi}-E_{\varphi}\|_F+\|P(\mathcal{D}_{\varphi}^{(2)}-\mathcal{D}_{\psi}^{(2)}) P^\top\|_F\right)\\
&\leq \frac{2^{3/2}}{\lambda_c(E_{\varphi})}\left(\frac{m \log(n)}{\sqrt{n}}+\left(\sum_{i=c+1}^{m}\lambda_i^2(E_{\varphi})\right)^{1/2}\right).
\end{align*}
Recall we denote this upper bound by $B=B(n,m,c)$, where
$$B=B(n,m,c)=\frac{2^{3/2}}{\lambda_c(E_{\varphi})}\left(\frac{m \log(n)}{\sqrt{n}}+\left(\sum_{i=c+1}^{m}\lambda_i^2(E_{\varphi})\right)^{1/2}\right).$$

Now we prove the bound for the scaled eigenvectors. Set $\nu=\frac{m\log(n)}{\sqrt{n}}$ for the upper bound on the Frobenius norm of the noise term $\|\hat{E}_{\varphi}-E_{\varphi}\|_F$. Following \cite{lyzinski15_HSBM}, put $W_1 \Sigma W_2^{\top}$ as the singular value decomposition of $U^T\hat{U}$ and $R=W_1W_2^T$. We see that
\begin{align*}
\|U^T\hat{U}-R\|_F&\leq \|\Sigma-I\|_F = \left(\sum_{i=1}^c (1-\cos(\theta_i))^2\right)^{1/2}\leq \sum_{i=1}^c(1-\cos(\theta_i))\\
&\leq \sum_{i=1}^c (1-\cos^2(\theta_i))= \sum_{i=1}^c \sin^2(\theta_i)= \|\sin \Theta(U,\hat{U})\|_F^2 \leq 2B^2.
\end{align*}
where the final inequality on the $\sin(\Theta)$ distance again follows from \cite{DK_usefulvariant}. Consider $R\hat{S}-S R$. We may expand this as
$$
R\hat{S}-S R = (R-U^T\hat{U})\hat{S}+U^T(\hat{E}_\varphi-E_\varphi)\hat{U}+S(U^T\hat{U}-R).
$$
By the triangle inequality, $\|\hat{S}\|\leq \|S\|+\|\hat{E}_{\varphi}-E_{\varphi}\|,$ so we have
\begin{align}\label{eq:intertwining_bound}
\|R\hat{S}-S R\|_F &\leq \|R-U^T\hat{U}\|_F(\|\hat{S}\|+\|S\|)+\|U^T(\hat{E}_\varphi-E_\varphi)\hat{U}\|_F\notag\\ 
&\leq \|R-U^T\hat{U}\|_F(2\|S\|+\|\hat{E}_\varphi-E_\varphi\|)+\|U\|\|\hat{E}_\varphi-E_\varphi\|_F\|\hat{U}\| \notag \\ 
& \leq \|R-U^T\hat{U}\|_F(2\|S\|+\|\hat{E}_\varphi-E_\varphi\|_F)+\|\hat{E}_\varphi-E_\varphi\|_F \notag \\
&\leq 4B^2\lambda_1(E_\varphi)+(1+2B^2)\nu.
\end{align}

To bound $R\hat{S}^{1/2}-S^{1/2}R$, we observe that the $i,j$ entry of this matrix is just
$$ R_{ij}\frac{\lambda_j(\hat{S})-\lambda_i(S)}{\lambda_j^{1/2}(\hat{S})+\lambda_i^{1/2}(S)}.$$
By Weyl's inequality, $\lambda_j(\hat{S})\geq \lambda_j(S)-\|\hat{E}_{\varphi}-E_{\varphi}\|$, and thus on the high-probability set we are considering, we deduce that  $\lambda_j(\hat{S}) \geq \lambda_c(E_{\varphi})-\nu$. Since $\lambda_c(E_{\varphi})$ is some positive constant, the bound in Eq.~\ref{eq:E-hat-minus-E-frobnorm-bound} guarantees that with high probability for $n$ sufficiently large, $\lambda_j(\hat{S})>0$. Thus
$$ \|R\hat{S}^{1/2}-S^{1/2}R\|_F \leq \frac{\|R\hat{S}-SR\|_F}{\lambda_c^{1/2}(E_\varphi)} \leq \frac{4B^2\lambda_1(E_{\varphi})+(1+2B^2)\nu}{\lambda_c^{1/2}(E_\varphi)}.$$

Hence, with high probability, 
\begin{align*}
\|\hat{U}\hat{S}^{1/2}-US^{1/2}R\|_F&\leq \|(\hat{U}-UR)\hat{S}^{1/2}\|_F+\|U(R\hat{S}^{1/2}-S^{1/2}R)\|_F\\
&\leq B\left(\lambda_1^{1/2}(E_\varphi)+\frac{\nu}{2\lambda_1^{1/2}(E_\varphi)}\right)+ \frac{4B^2\lambda_1(E_\varphi)+(1+2B^2)\nu}{\lambda_c^{1/2}(E_\varphi)}\\
&\leq B\lambda_1^{1/2}(E_{\varphi})\left(2+4B\kappa^{1/2}+(1+2B)\frac{\nu}{\lambda_c(E_{\varphi})}\right),
\end{align*}
where, to bound $\|\hat{S}\|^{1/2}$, we again use Weyl's inequality and the fact that for positive $x$ and $y$, $\sqrt{x}\leq \sqrt{y}+\frac{|x-y|}{2\sqrt{y}}$. Here, $\kappa=\lambda_1(E_\varphi)/\lambda_c(E_\varphi)$ denotes the condition number of the low-rank projection of $E_\varphi$.
\hfill$\square$

\textbf{Proof of Corollary~\ref{cor:uniform}:}
We bound the difference between the distance matrices using the Lipschitz realizability assumption:
\begin{align*}
\|\mathcal{D}_{\varphi}^{(2)}-\mathcal{D}_{\psi}^{(2)}\|_F^2&\leq \sum_{i,j=1}^{m} (d_{MV}(\varphi(t_i),\varphi(t_j))^2-\|\psi(t_i)-\psi(t_j)\|_2^2)^2\\
&\leq (2A)^2\sum_{i,j=1}^{m} (d_{MV}(\varphi(t_i),\varphi(t_j))-\|\psi(t_i)-\psi(t_j)\|_2)^2\\
&\leq (2AB)^2\sum_{i,j=1}^{m} |t_i-t_j|^2\\
&\leq (2AB)^2(T/m)^2\sum_{i,j=1}^{m}|i-j|^2. 
\end{align*}
Some algebra yields that the latter sum equals $m^2(m^2-1)/6$, which yields the given bound.
\hfill$\square$

\subsection{Mirror estimates for deterministic-drift-plus-noise latent position processes}\label{sec:drift-plus-noise_LPP}
In this section, we provide estimation results for a time series of networks with latent position process given by Example~\ref{ex:BM}, in which the latent positions process follows
$X_t=\gamma(t)+B_t$, where $B_t$ is a $2$-dimensional Brownian motion, and $\gamma:[0,T] \rightarrow \mathbb{R}^d$ is a Lipschitz continuous function of the form $\gamma(t)=a(t)v$. We consider the case when $a(t)$ is a linear curve: $a(t)=c_1t +c_2$, with $c_1, c_2$ constants, and the case in which $a(t)$ is a quadratic curve: $a(t)=c_1t^2+c_2$. In both, we take $v=(1/\sqrt{2}, 1/\sqrt{2})$ and we choose constants, a scaling of Brownian motion and a time interval for which the result values of $X_t$ vectors remain in the first quadrant with overwhelming probability. In particular, for the linear drift, we consider $c_1=1/50$ and $c_2=1/10$; for the quadratic drift, we considered $c_1=1/1000, c_2=1/10$. The Brownian motion was scaled by a factor of approximately $1/1000$. We generated 30 networks, each on $n=2000$ nodes.

In Figure~\ref{fig:example1_simulation_scree}, we see that in both the linear and quadratic case, it is reasonable to consider classical multidimensional scaling of the estimated distance matrix $\hat{\mathcal{D}}_{\varphi}$ into one dimension. 
\begin{figure}
    \includegraphics[width=0.5\textwidth]{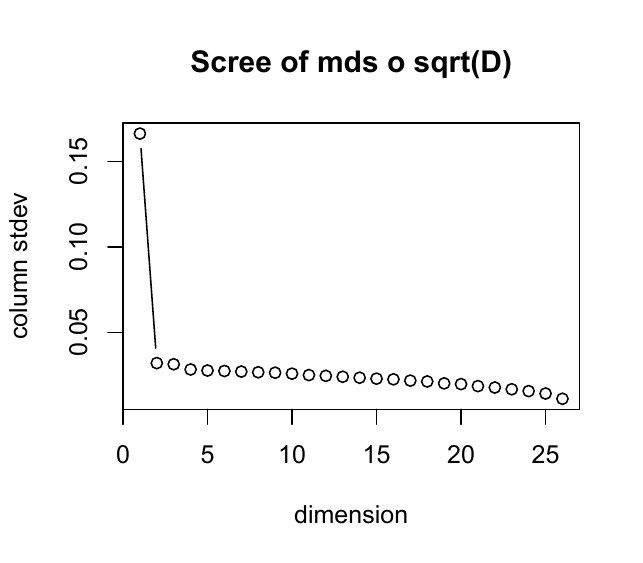}
    \includegraphics[width=0.5\textwidth]{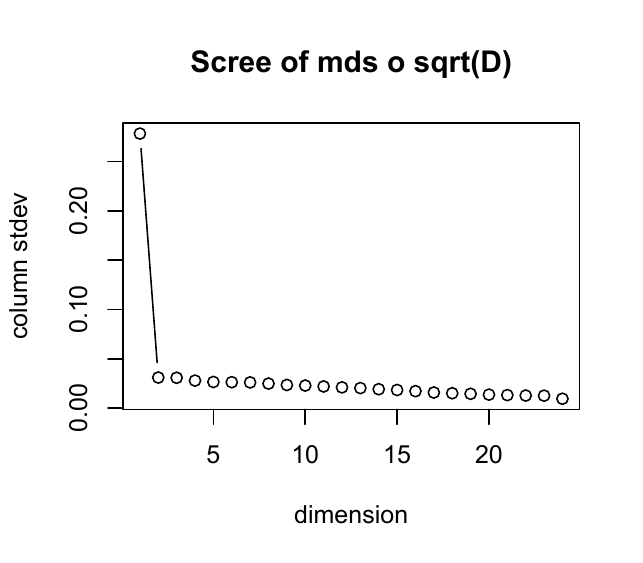}
    \caption{Left panel shows scree plot for estimated distance matrix $\hat{\mathcal{D}}_{\varphi}$ for the case of a linear drift. Right panel shows scree plot for estimated distance matrix $\hat{\mathcal{D}}_{\varphi}$ for quadratic drift.}
    \label{fig:example1_simulation_scree}
\end{figure}
As Figure \ref{fig:example1_simulation_estimated_mirror} shows, if we plot that MDS dimension over time, we observe a curve quite close to the actual mirror $\psi=\|v\|a(t)$ in both the linear and quadratic case.
\begin{figure}
    \includegraphics[width=0.5\textwidth]{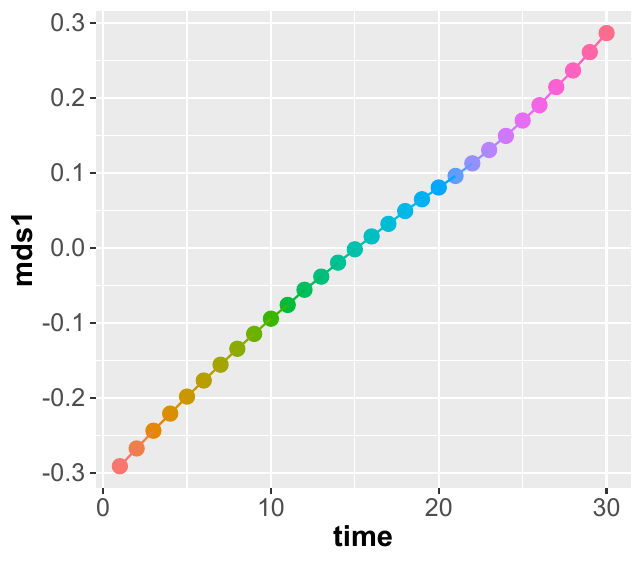}
    \includegraphics[width=0.5\textwidth]{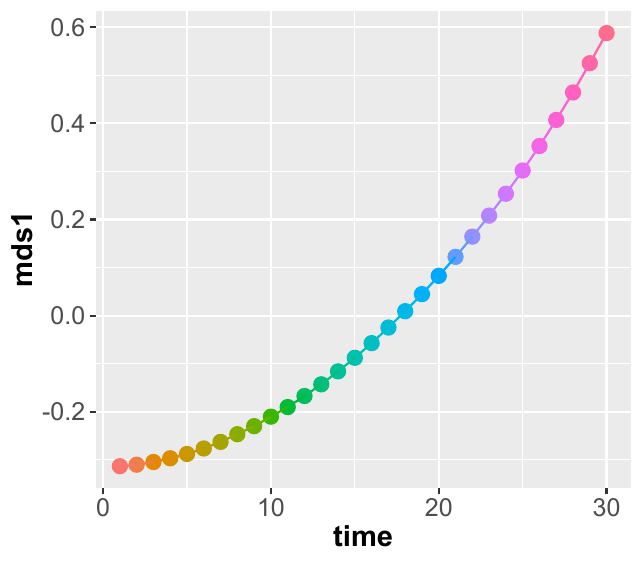}
    \caption{Left panel shows first dimension of the CMDS for estimated distance matrix $\hat{\mathcal{D}}_{\varphi}$ for the case of a linear drift. Right panel shows first dimension of CMDS for estimated distance matrix $\hat{\mathcal{D}}_{\varphi}$ for quadratic drift.}
    \label{fig:example1_simulation_estimated_mirror}
\end{figure}

\subsection{Mirror estimation for evolving stochastic blockmodels with varying connectivity}
\label{sec:sbms}
To illustrate the estimation of a mirror and its localization properties in a concrete case, we consider a time-series of two-community stochastic blockmodels with varying block connectivity matrix $B_t$. Let 
$$
B^1 = \begin{bmatrix}1/2&1/3\\ 1/3&1/2\end{bmatrix},\quad B^2=\begin{bmatrix}1/2&1/2\\1/2&1/2\end{bmatrix},\quad B^3=\begin{bmatrix}1/2&1/3\\1/3&1/3\end{bmatrix}.
$$
Then the block connectivity matrix is defined as
$$
B_t=\begin{cases}
(1-t) B^1+tB^2&t\in[0,1]\\
(2-t) B^2+(t-1)B^3&t\in[1,2]\\
(3-t) B^3 +(t-2)B^1&t\in[2,3].
\end{cases}
$$
We note that for $t=1,$ the block connectivity matrix only has rank 1, whereas for all other times, this matrix has rank 2. This has important consequences, which we discuss further in what follows.

To generate our network time series, we take thirty equally spaced times $t$ in the interval from $0$ to $3$, and for each $t$, we simulate a stochastic blockmodel network $G_t$ on $n=2000$ nodes with block probability matrix $B_t$, where $n/2=1000$ vertices belong to Cluster 1 and the other $1000$ vertices belong to Cluster 2.  As $t \in (0,1)$, we see a steady shift from in the block probability matrix from $B^1$ to $B^2$; and similarly for $t \in (1,2)$ and $t \in (2,3)$. 

Since the latent positions are known in this simulation, we can compute both the true the $d_{MV}$ distance and its realization-based estimate. Doing so, we get the matrices $\mathcal{D}_{\varphi}$ and $\mathcal{\hat{D}}_{\psi}$. The left panel of Figure~\ref{fig:sbms-distances} shows that the two matrices coincide fairly well outside of the change at $t=1$, when the the rank $2$ stochastic block model collapses into a rank $1$ Erd\"os-Renyi network, which constitutes a model misspecification: all networks are not, in fact, realizations of a constant rank $d$ random dot product graph. A scree plot of both the true and estimated dissimilarities suggests classical multidimensional scaling into $c=2$ dimensions provides a reasonable Euclidean approximation of both dissimilarities. Plotting the first and second dimensions of this embedding into two dimensions, we get the plots in Figure~\ref{fig:sbms-mds1and2}. It is striking that the the first dimension of the scaling is well-estimated, and the second dramatically less so. 

\begin{figure}[ht!]
    \includegraphics[width=0.5\textwidth]{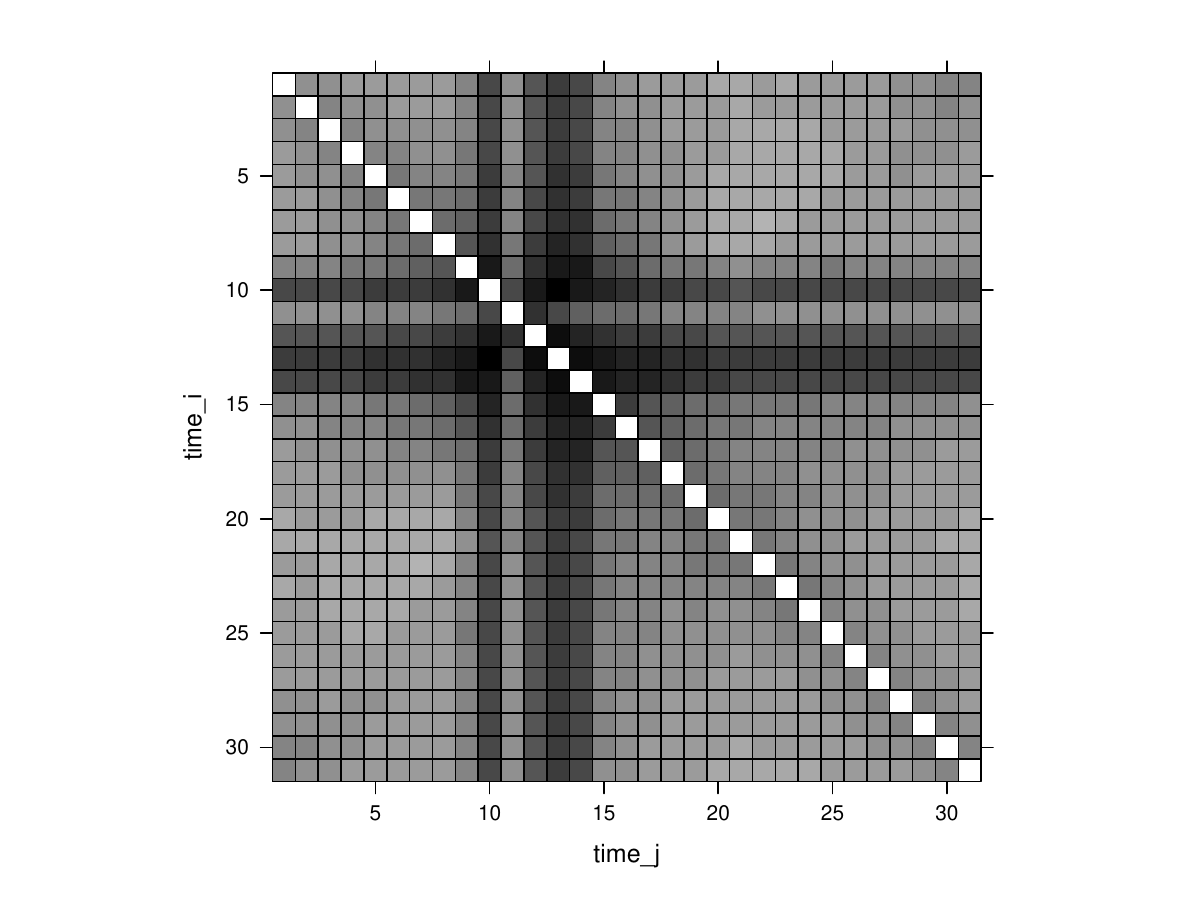}
    \includegraphics[width=0.5\textwidth]{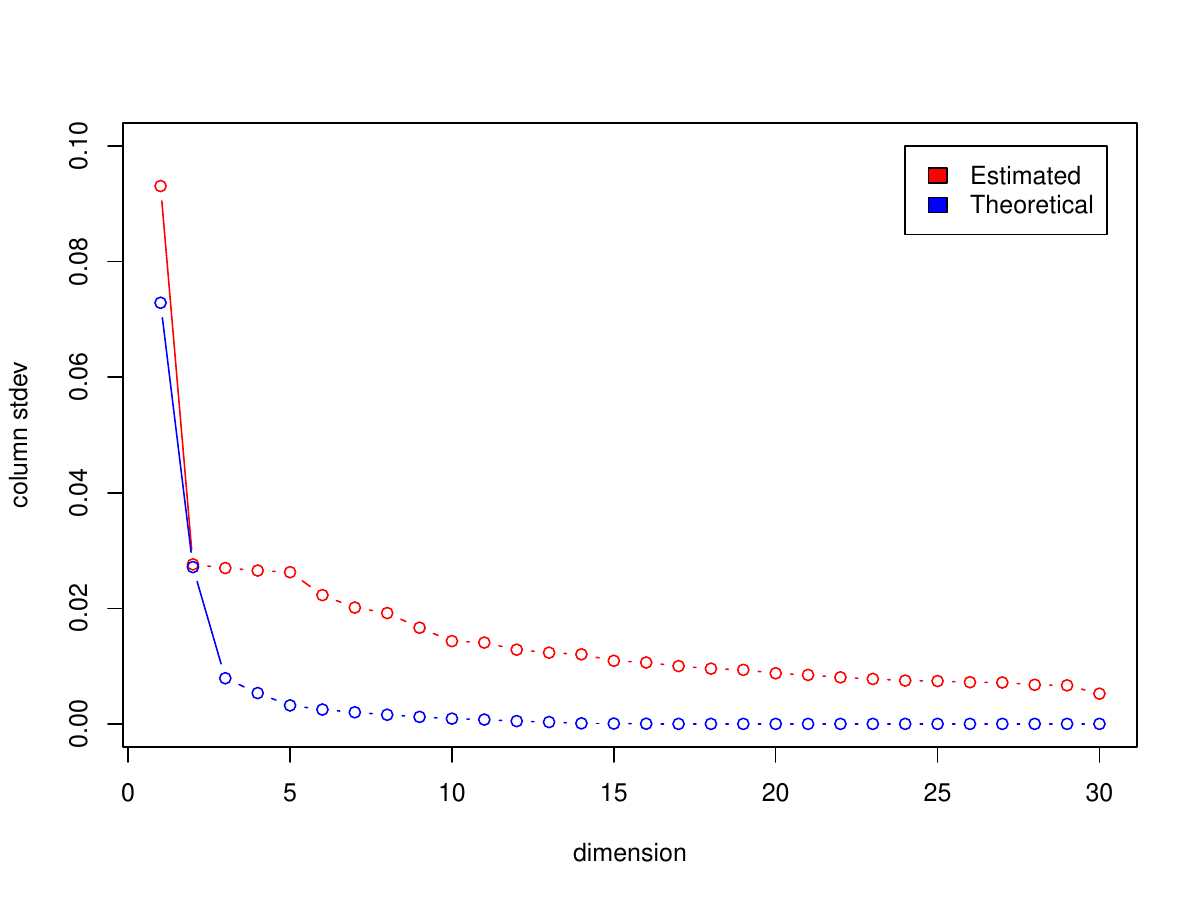}
    \caption{Left panel shows heatmap comparison of theoretical and estimated distance matrices $\mathcal{D}_{\varphi}$ and $\hat{\mathcal{D}}_{\varphi}$. Right panel shows scree plots for theoretical and estimated distance matrices $\mathcal{D}_{\varphi}$ and $\hat{\mathcal{D}}_{\varphi}$}
    \label{fig:sbms-distances}
\end{figure}

\begin{figure}[ht!]
    \includegraphics[width=0.5\textwidth]{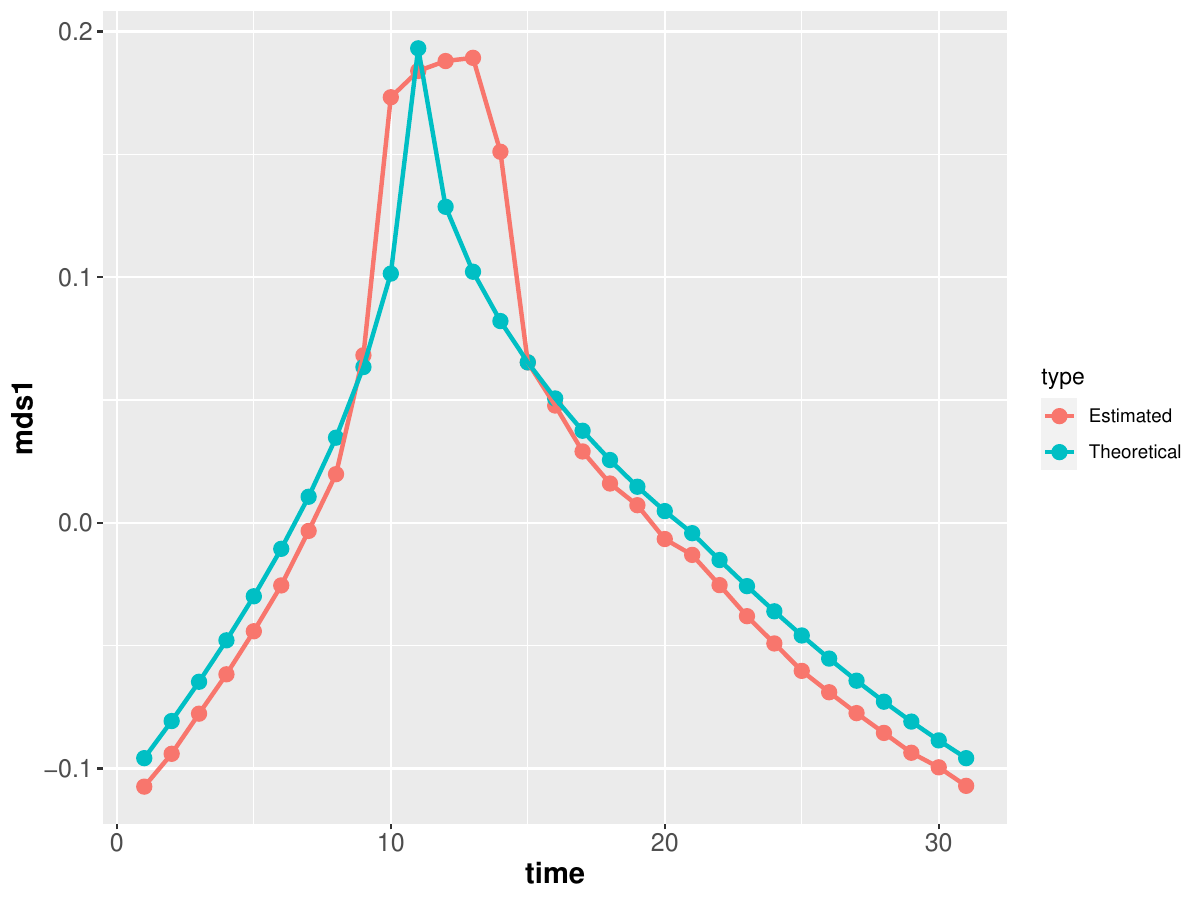}
    \includegraphics[width=0.5\textwidth]{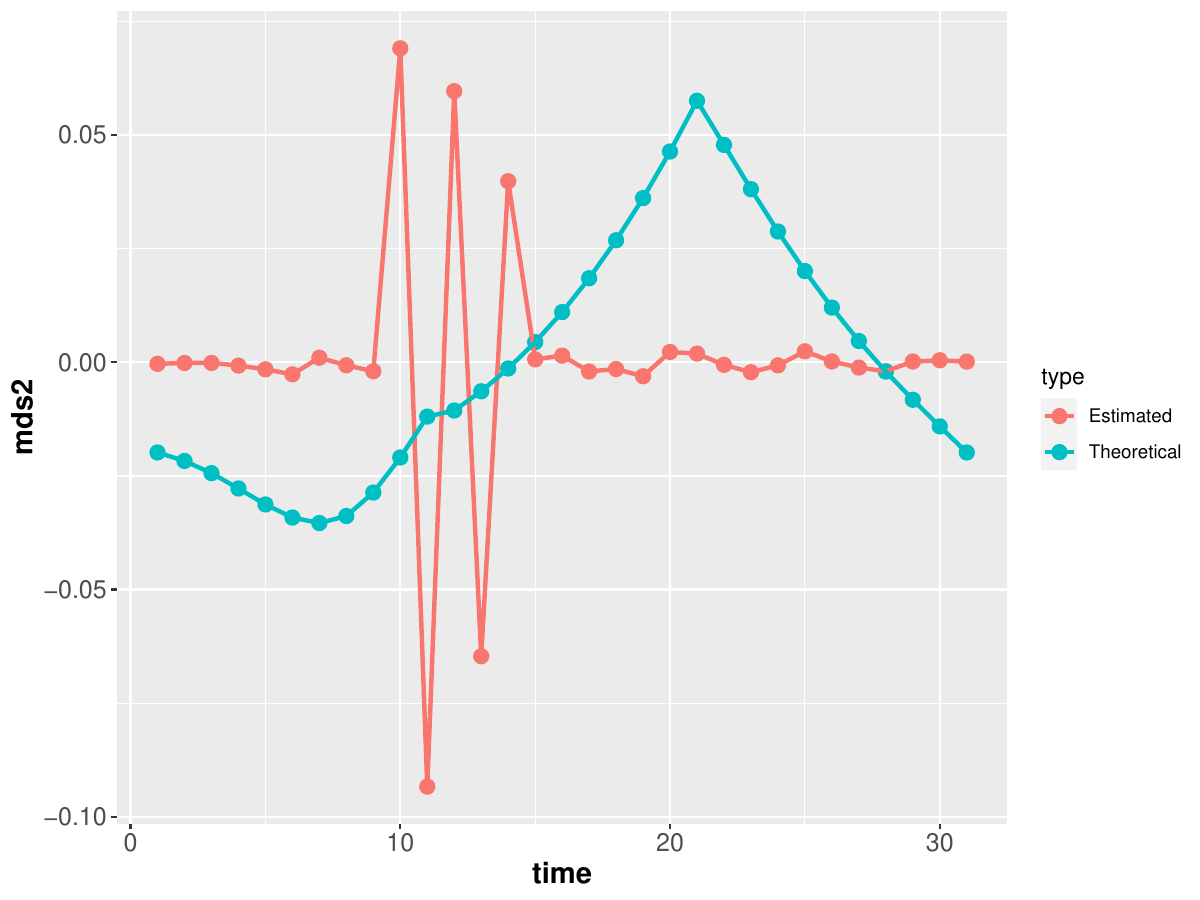}
    \caption{Comparison of first MDS dimension (left) and second MDS dimension (right) against time for theoretical and estimated pairwise distance matrices $\mathcal{D}_{\varphi}$ and $\hat{\mathcal{D}}_{\varphi}$. Note that despite the model misspecification for the graph at time $t=11$, the first component of the mirror is relatively accurate across the whole time interval. On the other hand, the second mirror component reflects the model misspecification around this time, rather than reflecting the theoretical distances.}
    \label{fig:sbms-mds1and2}
\end{figure}

The change in underlying rank for the $B_t$ matrices at $t=1$ constitutes an illuminating misspecification.  Such a stark shift in the rank corresponds to a type of underlying network change a mirror should detect, even if the hypotheses for our consistency results may not be satisfied.  Indeed, the true mirror does detect this with a cusp in its first embedding dimension, one that is replicated (approximately) by the corresponding plot for the top MDS dimension of $\hat{\mathcal{D}}_{\varphi}$.

The second embedding dimension for the case of the estimated distances (the red curve in the right panel of Figure \ref{fig:sbms-mds1and2}) reflects the noise in the second dimension of the adjacency spectral embedding for an Erd\"os-Renyi network.  Because an ER graph is a one-dimensional RDPG, the second dimension of the adjacency spectral embedding is driven by noise. This noise corrupts the accuracy of the estimated distance measure and leads to marked and distinct oscillations in the second MDS dimensions. These oscillations are not present on time intervals far removed from this changepoint.

\subsection{Additional visualizations and network statistics for real data communication networks }
\label{sec:realdataviz}

In Figure \ref{fig:realdataviz}, we provide additional visualizations of the organizational communication networks for January, May, and September of 2019 and 2020, allowing for a more detailed view of the evolution of the subcommunities over these two years. In contrast to multiple network visualizations over time, the mirror approach gives a much lower-dimensional and more quantitative signature of the changes in the networks. As such, while these images may be instructive for exploratory data analysis, they are much less useful for localization of changepoints compared to our mirror approach.

\begin{figure}
\centering
\begin{subfigure}{0.31\textwidth}
\includegraphics[width=\textwidth]{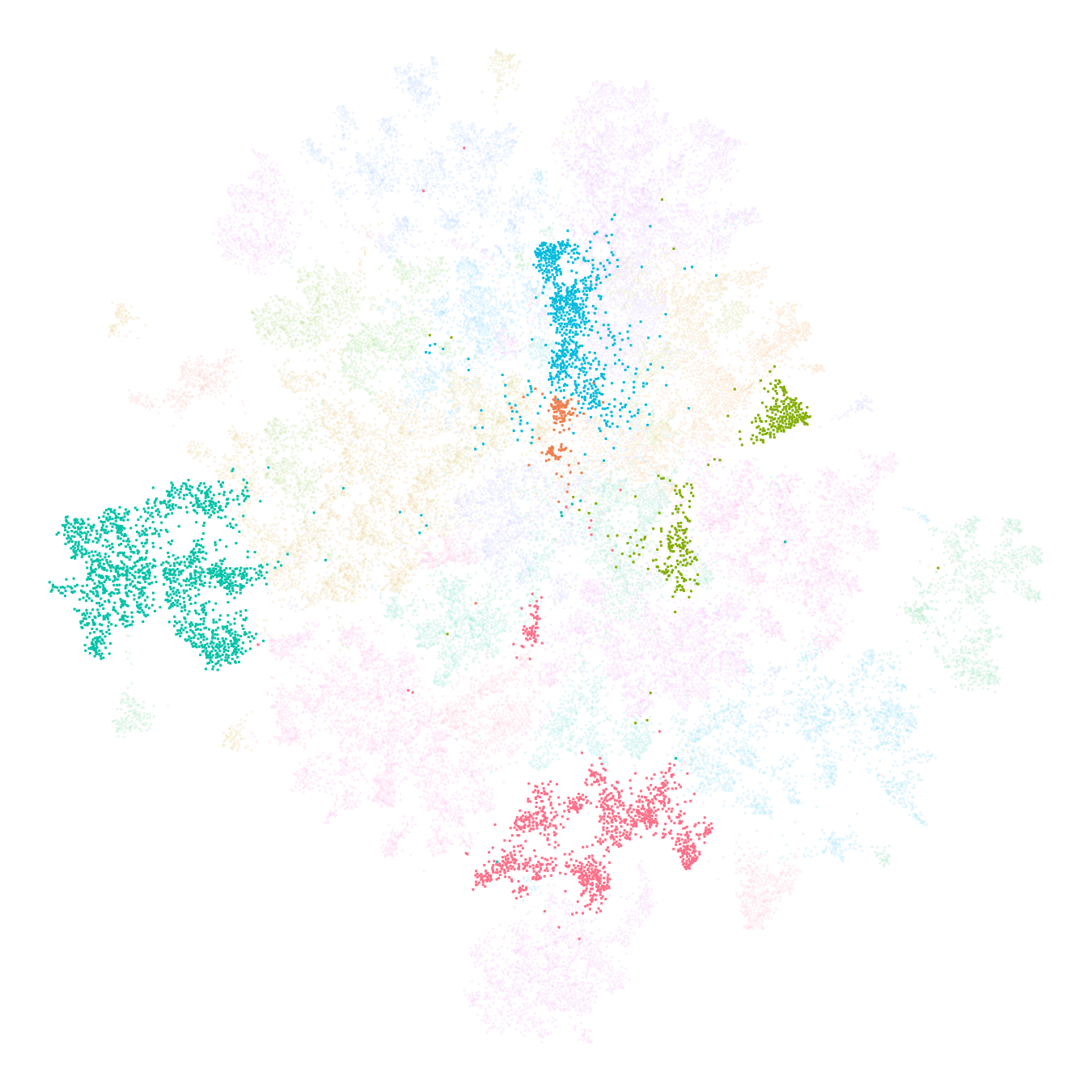}
\caption{January 2019}
\end{subfigure}
\begin{subfigure}{0.31\textwidth}
\includegraphics[width=\textwidth]{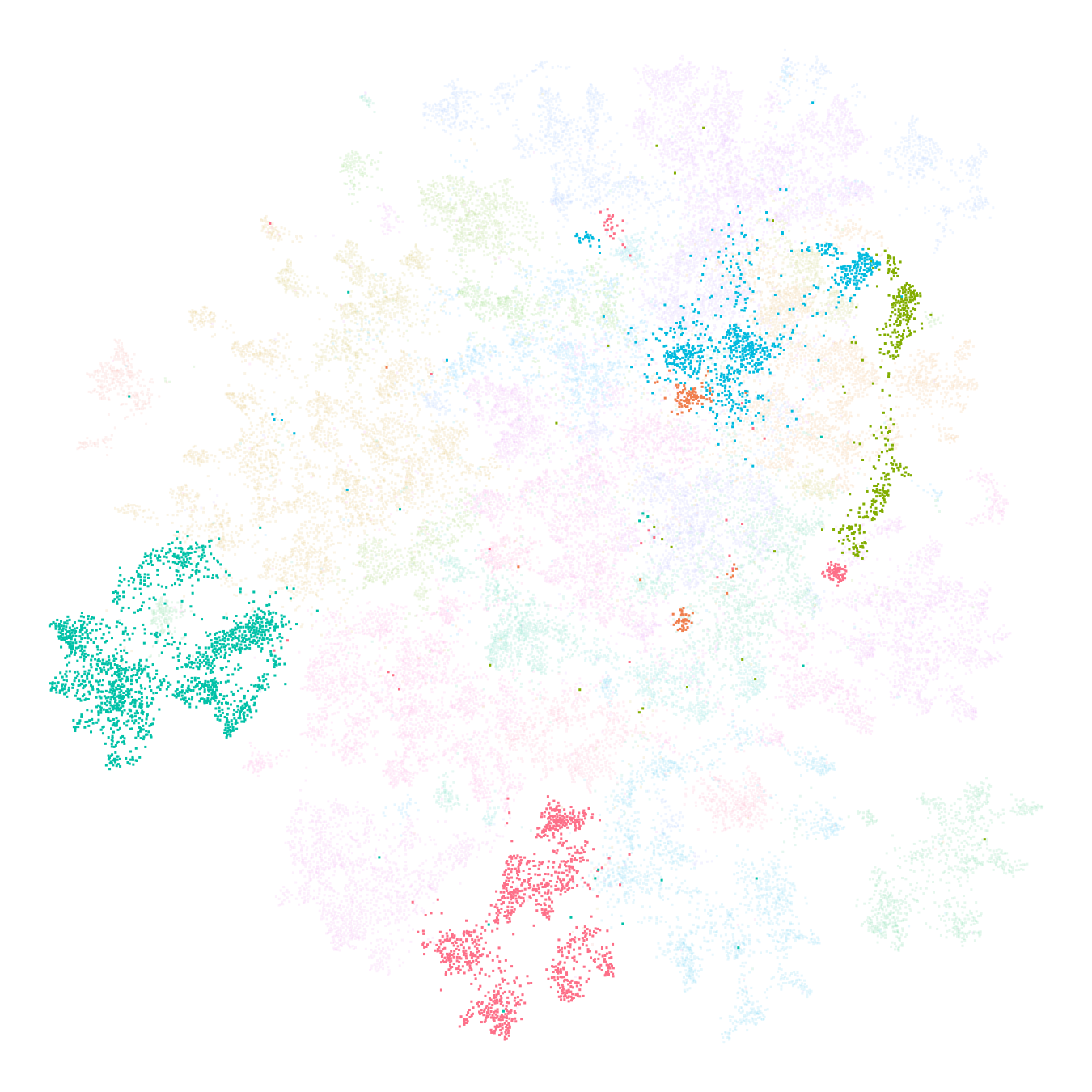}
\caption{May 2019}
\end{subfigure}
\begin{subfigure}{0.31\textwidth}
\includegraphics[width=\textwidth]{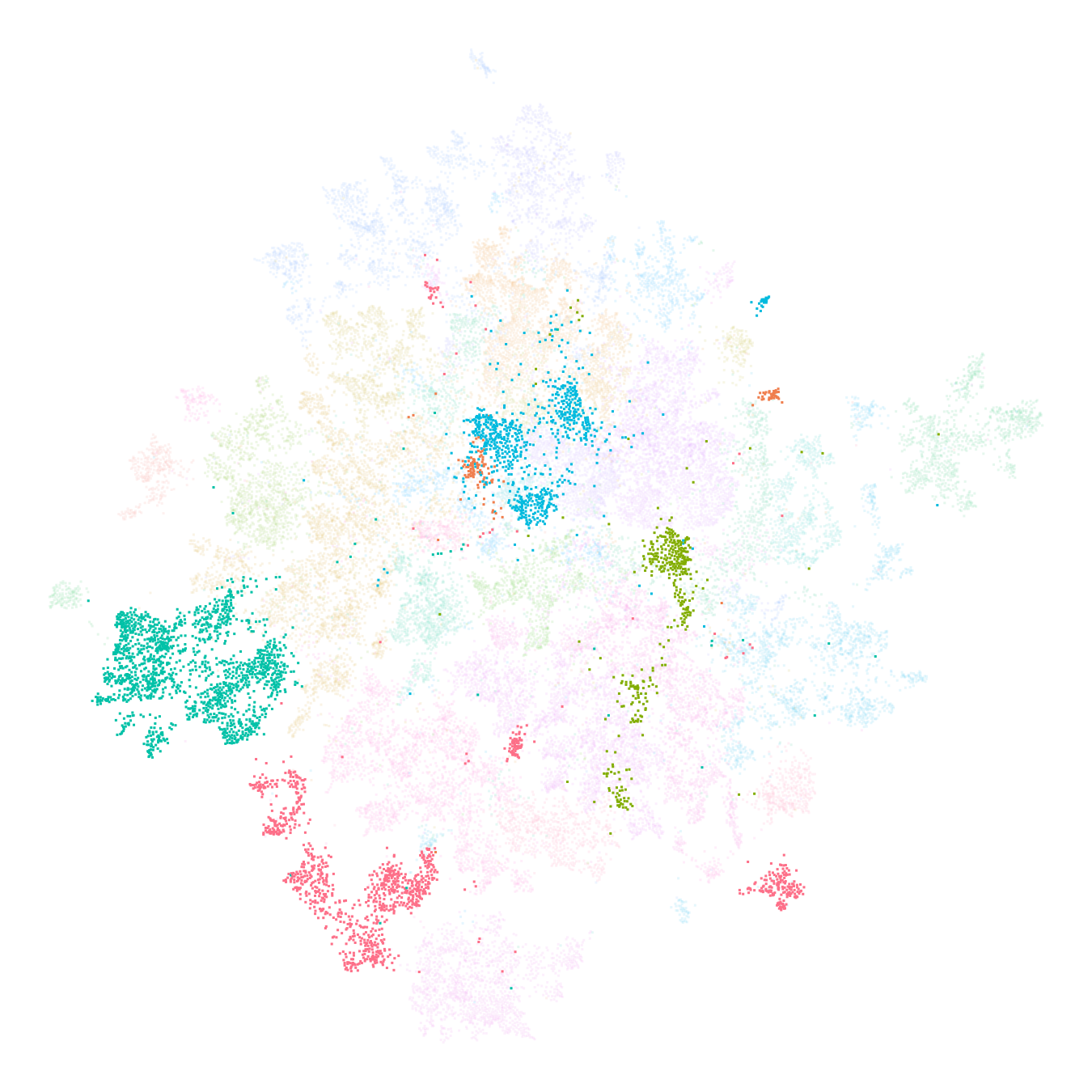}
\caption{September 2019}
\end{subfigure}

\begin{subfigure}{0.31\textwidth}
\includegraphics[width=\textwidth]{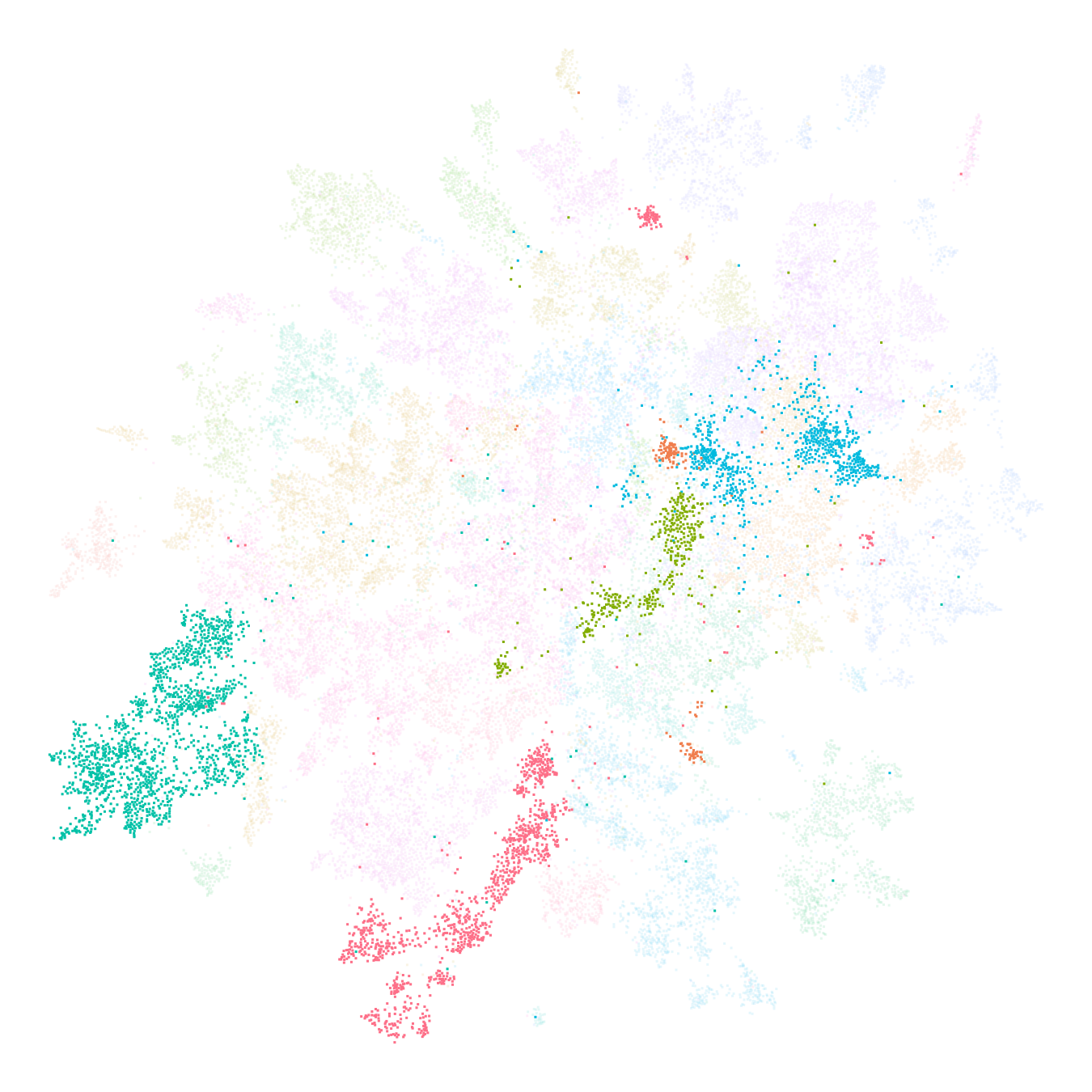}
\caption{January 2020}
\end{subfigure}
\begin{subfigure}{0.31\textwidth}
\includegraphics[width=\textwidth]{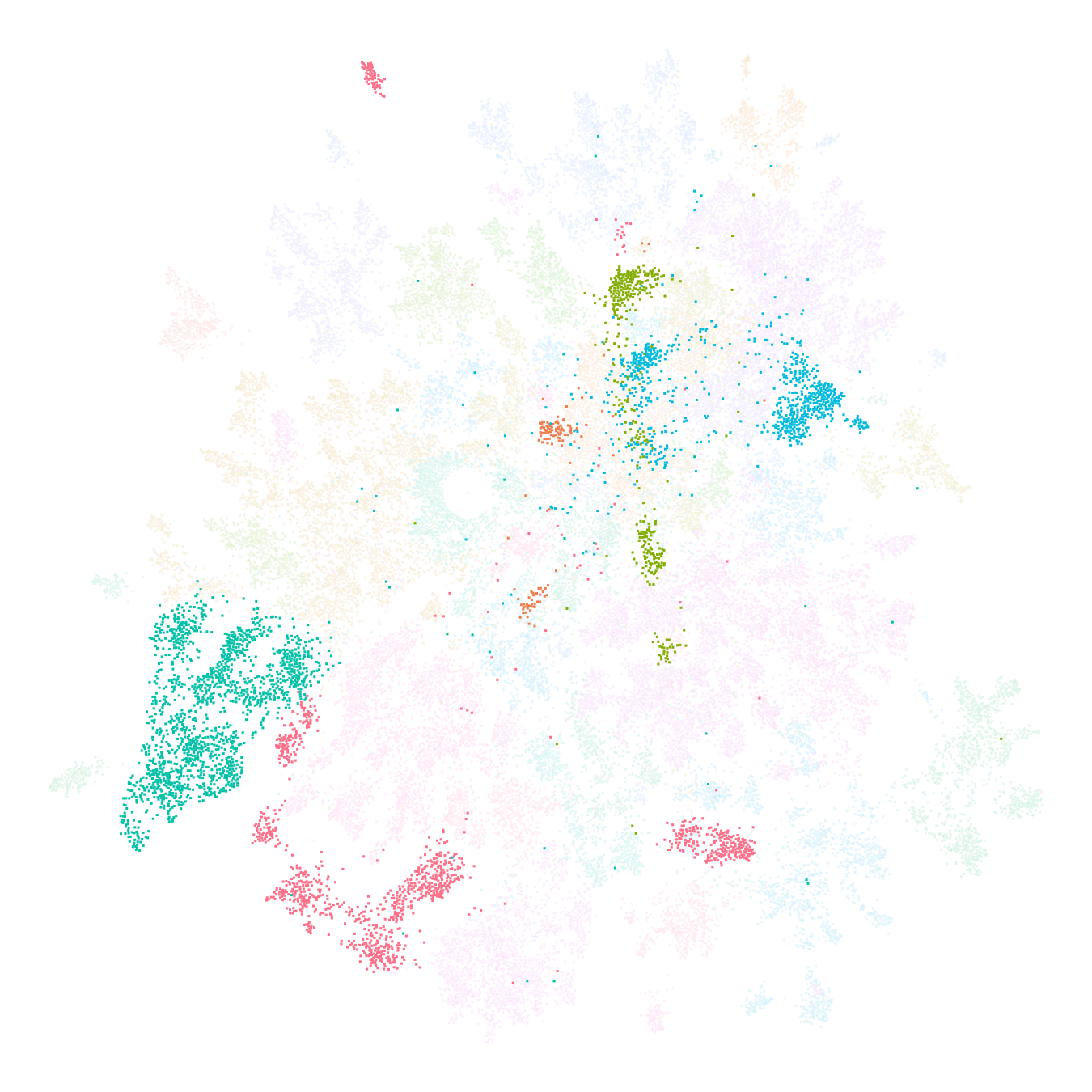}
\caption{May 2020}
\end{subfigure}
\begin{subfigure}{0.31\textwidth}
\includegraphics[width=\textwidth]{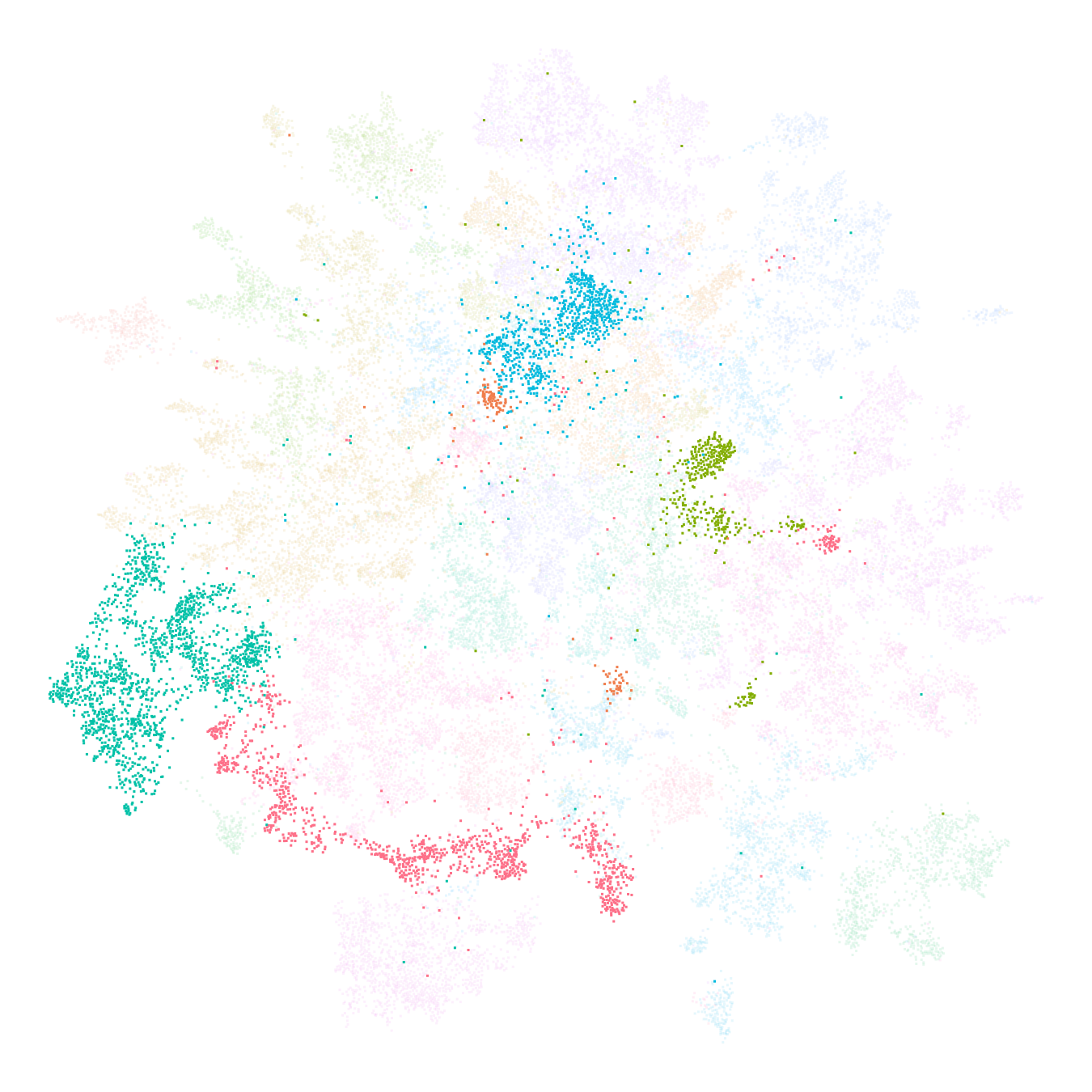}
\caption{September 2020}
\end{subfigure}
\caption{Additional visualizations of our real data organizational networks over the two year period, with 5 communities of various sizes highlighted, including the two communities from Figure~\ref{fig:org_sci_example}. We see that the different communities evolve in various ways over this time period, with some communities (like green) changing little over time, while others exhibit much larger changes, complementing the findings from Figure~\ref{fig:subcommunities}.}
\label{fig:realdataviz}
\end{figure}

 In Figure~\ref{fig:graphstats}, we plot a collection of other summary statistics, namely edge counts, maximum degree, median degree, and modularity, for each network over time. Since such statistics consider each network separately, these summary statistics exhibit greater variance than the ISOMAP embedding of the mirror (Figure~\ref{fig:org_sci_us} right panel, or Figure~\ref{fig:subcommunities}, bottom right panel). 
In addition, seasonal effects play a greater role in these plots, which add to the difficulty in detecting the changepoints. Note that in contrast to the mirror visualizations in Figure \ref{fig:subcommunities}, none of the plots in Fig. \ref{fig:graphstats} allows for easy qualitative visualization of two important changepoints driven by company policy at the start of the pandemic restrictions (Spring 2020) and the change in the imposition of restrictions from short-term to open-ended and longer-term (July 2020).

\begin{figure}[h]
    \centering
    \includegraphics[width=0.7\textwidth]{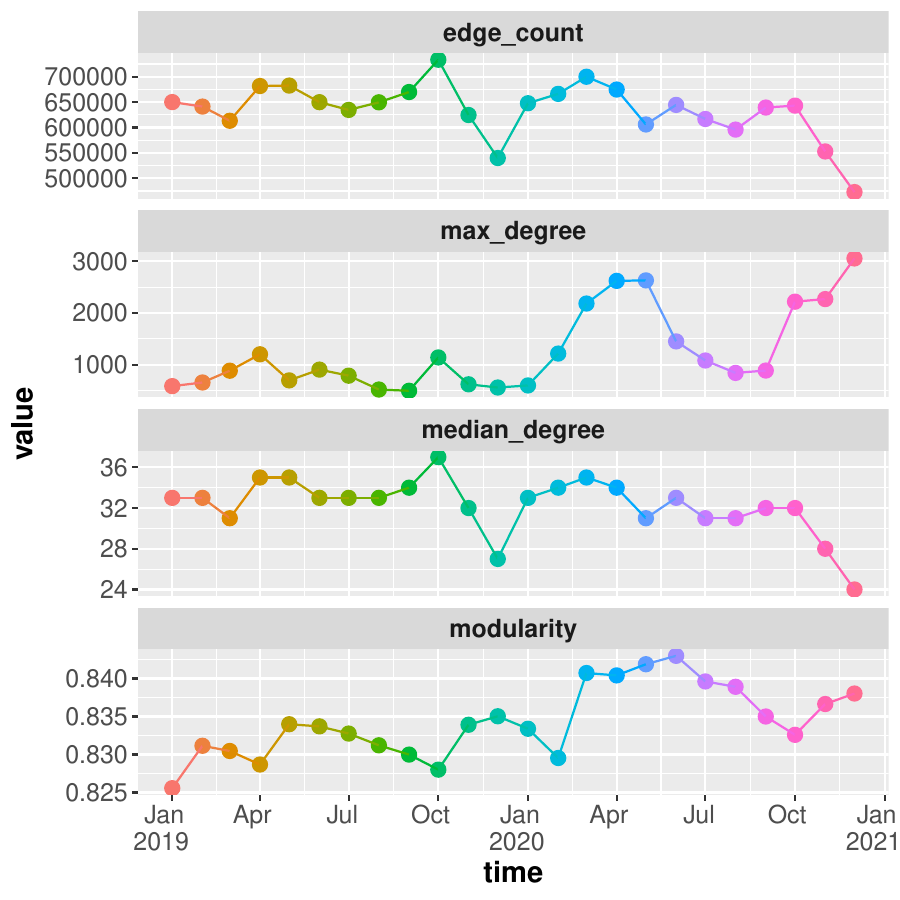}
    \caption{Additional graph summary statistics over time. These methods consider each graph separately, rather than our mirror approach which accounts for dependence across time: as a result, the curves show greater variance and seasonal effects, obfuscating the changepoints that are captured by the mirror in Figure \ref{fig:subcommunities}}
    \label{fig:graphstats}
\end{figure}

\end{document}